\DeclareMathOperator{\diag}{diag}
\renewcommand{\(}{\left(}
\renewcommand{\)}{\right)}
\newcommand{\defeq}{\stackrel{\Delta}{=}}
\newcommand{\Frac}[2]{{{#1}/{#2}}}
\newcommand{\ip}[2]{{\langle {#1},\, {#2} \rangle}}
\def\sign{\mathop{\mathrm{sign}}}
\def\R{{\mathbb{R}}}
\def\1{{\mathds{1}}}
\def\c{a}
\def\Half{\frac{1}{2}}
\def\half{\textstyle\frac{1}{2}}
\def\yInit{\hat{x}_{\rm init}}
\def\yinit{\hat{y}_{\rm init}}
\def\ts{\textstyle}
\def\pseudo{(F^{*}F)^{-1}F^{*}}
\def\id{{I}}
\def\setI{{\mathcal{I}}}
\def\setJ{{\mathcal{J}}}
\def\O{{P}}
\def\ccol{{J}}
\def\DI{D_{\rm I}}
\def\DII{D_{\rm II}}
\def\LI{L_{\rm I}}
\def\LII{L_{\rm II}}
\def\RI{R_{\rm I}}
\def\RII{R_{\rm II}}
\newcommand{\minus}[2]{{{D_{#1}^{(#2)}}}}
\newcommand{\Minus}[2]{{{\widetilde{D}_{#1}^{(#2)}}}}
\newtheorem{thm}{Theorem}[section]
\newtheorem{lemma}[thm]{Lemma}
\newtheorem{prop}[thm]{Proposition}
\newtheorem{coro}[thm]{Corollary}
\newtheorem{defi}[thm]{Definition}
\newproof{proof}{Proof}
\newlength{\widthA} 
\newlength{\widthB} 
\newlength{\xtraV}  
\begin{document}

\title{Frame Permutation Quantization\tnoteref{t1,t2,t3}}
\tnotetext[t1]{The authors are with the Department of Electrical Engineering
  and Computer Science and the Research Laboratory of Electronics,
  Massachusetts Institute of Technology, Cambridge, MA 02139 USA\@.
  L. R. Varshney is additionally with the Laboratory for Information
  and Decision Systems, Massachusetts Institute of Technology.}
\tnotetext[t2]{This material is based upon work supported by the
  National Science Foundation under Grant No.\ 0729069\@.  This work was
  also supported in part by a Vietnam Education Foundation Fellowship.}
\tnotetext[t3]{This work was presented in part at the Inaugural Workshop of
  the Center for Information Theory and Its Applications at the
  University of California, San Diego, February 2006,
  and the Forty-fourth Annual Conference on Information Sciences and Systems,
  Princeton, NJ, March 2010.}
\author{Ha Q. Nguyen}
\author{Vivek K Goyal}
\author{Lav R. Varshney}

\begin{abstract}
Frame permutation quantization (FPQ) is a new vector quantization technique
using finite frames.  In FPQ, a vector is encoded using a permutation source
code to quantize its frame expansion.  This means that the encoding is a
partial ordering of the frame expansion coefficients.
Compared to ordinary permutation source coding,
FPQ produces a greater number of possible quantization
rates and a higher maximum rate.
Various representations for the partitions induced by FPQ are presented,
and reconstruction algorithms based on linear programming, quadratic
programming, and recursive orthogonal projection are derived.
Implementations of the linear and quadratic programming algorithms
for uniform and Gaussian sources show performance improvements
over entropy-constrained scalar quantization for certain combinations of
vector dimension and coding rate.
Monte Carlo evaluation of the recursive algorithm shows
that mean-squared error (MSE) decays as $M^{-4}$ for an $M$-element frame,
which is consistent with previous results on optimal decay of MSE\@.
Reconstruction using the canonical dual frame is also studied,
and several results relate properties of the analysis frame to
whether linear reconstruction techniques provide consistent reconstructions.
\end{abstract}

\begin{keyword}
dual frame \sep
consistent reconstruction \sep
frame expansions \sep
linear programming \sep
partial orders \sep
permutation source codes \sep
quadratic programming \sep
recursive estimation \sep
vector quantization
\end{keyword}

\maketitle
\pagestyle{empty}
\thispagestyle{empty}

\section{Introduction}
Redundant representations obtained with frames are playing an
ever-expanding role in signal processing due to design flexibility
and other desirable properties~\cite{KovacevicC:07a,KovacevicC:07b}.
One such favorable property is robustness to
additive noise~\cite{Daubechies1992}.
This robustness, carried over to quantization noise
(without regard to whether it is random or signal-independent),
explains the success of both ordinary oversampled analog-to-digital
conversion (ADC) and $\Sigma$--$\Delta$ ADC\@
with the canonical linear reconstruction.
But the combination of frame expansions with scalar quantization
is considerably more interesting and intricate
because boundedness of quantization noise can be exploited in
reconstruction~\cite{ThaoV1994,ThaoV:94b,GoyalVT1998,Cvetkovic:03,RanganG:01,BenedettoPY:06,BodmannP:07,BodmannL:08,Powell:10}
and frames and quantizers can be designed jointly to obtain favorable
performance~\cite{BeferullLozanoO:03}.

This paper introduces a new use of finite frames in vector quantization:
\emph{frame permutation quantization} (FPQ).
In FPQ, permutation source coding (PSC)~\cite{Dunn:65,BergerJW:72}
is applied to a frame expansion of a vector.
This means that the vector is represented by a partial ordering of
the frame coefficients (Variant I) or by signs of the frame coefficients
that are larger than some threshold along with a partial ordering of
the absolute values of the significant coefficients (Variant II).
FPQ provides a space partitioning that can be combined with
additional signal constraints or prior knowledge to generate a variety
of vector quantizers.

Beyond the explication of the basic ideas in FPQ,
the focus of this paper is on how---in analogy to works cited above---there
are several decoding procedures that can sensibly be used with the
encoding of FPQ\@.  First, we consider using the ordinary PSC decoding for
the frame coefficients followed by linear synthesis with the canonical dual;
from the perspective of frame theory, this is the natural way to reconstruct.
For this, we find conditions on the frame used in FPQ that
relate to whether the canonical reconstruction is consistent.
Second, taking a geometric approach based on imposing consistency yields
instead optimization-based algorithms.
Third, algorithms with lower complexity can have similar performance by
recursively imposing consistency only locally~\cite{RanganG:01,Powell:10}.

There are two distinct ways to measure the performance of FPQ,
and these correspond to different potential uses of FPQ:
data compression and data acquisition.
The accuracy of signal representation---here measured by mean-squared error
(MSE)---is important in either case. 
For data compression, accuracy is traded off against a coding rate
(bits per sample).  The standard alternative is scalar quantization, and
for low delay and complexity, one considers moderate signal dimensions.
It is remarkable that introducing redundancy through a frame expansion
can improve compression, and we find that it does so only when the redundancy
is low.
For data acquisition, accuracy is traded off against the number of
samples collected (number of frame elements).
Sensors that operate at low power and high speed by outputting orderings
of signal levels rather than absolute levels have been demonstrated and
are a subject of renewed interest~\cite{BrajovicK:99,GuoQH:07}.
By showing that the MSE can decay quickly as a function of the number
of samples collected, we may encourage the further development of
such sensors.
Here computational complexity of reconstruction is more important
because the data are recoded prior to storage or transmission.
This is in close analogy to oversampling in analog-to-digital conversion,
which is ubiquitous even though it is not advantageous in terms of accuracy
as a function of bit rate unless there is recoding at or near
Nyquist rate~\cite{CvetkovicV:98-IT,CvetkovicDL:07}.
Note also that for both historical and practical reasons,
data compression is typically studied for random vectors while
data acquisition is studied for nonrandom vectors within
some bounded set~\cite{DonohoVDD:98}.
This paper mixes Bayesian and non-Bayesian formulations accordingly.

The paper is organized as follows:
Before formal introduction to frame expansions, permutation source codes,
or their combination, Section~\ref{sec:preview} provides a preview of
the geometry of FPQ\@.  This serves both to contrast with ordinary
scalar-quantized frame expansions and to see the effect of frame redundancy.
Section~\ref{sec:background} provides the requisite background by reviewing
PSCs, frames, and scalar-quantized frame expansions.
Section~\ref{sec:framePSC} formally defines FPQ,
emphasizing constraints that are implied by the representation and hence
must be satisfied for consistent reconstruction.
Section~\ref{sec:framePSC} also provides reconstruction algorithms
based on applying the constraints for consistent reconstruction
globally or locally.
The results on choices of frames in FPQ appear in Section~\ref{sec:consistency}.
These are necessary and sufficient conditions on frames for
linear reconstructions to be consistent.
Section~\ref{sec:simulations} provides numerical results that demonstrate
improvement in operational distortion--rate compared to ordinary PSC
and optimal decay of distortion as a function of the number of samples.
Proofs of the main results are given in Section~\ref{sec:proofs}.
Preliminary results on FPQ were mentioned briefly in~\cite{VarshneyG:06a}.

\section{Preview through $\R^2$ Geometry}
\label{sec:preview}

Consider the quantization of $x \in \R^N$, where we restrict attention to $N=2$
in this section but later allow any finite $N$.
The uniform scalar quantization of $x$ partitions $\R^N$ in a trivial way,
as shown in Fig.~\ref{fig:qfe}(a).
(An arbitrary segment of the plane is shown.)
If over a domain of interest each component is divided into $K$ intervals,
a partition with $K^N$ cells is obtained.

\begin{figure}
 \begin{center}
  \begin{tabular}{ccc}
   \includegraphics[width=\widthB]{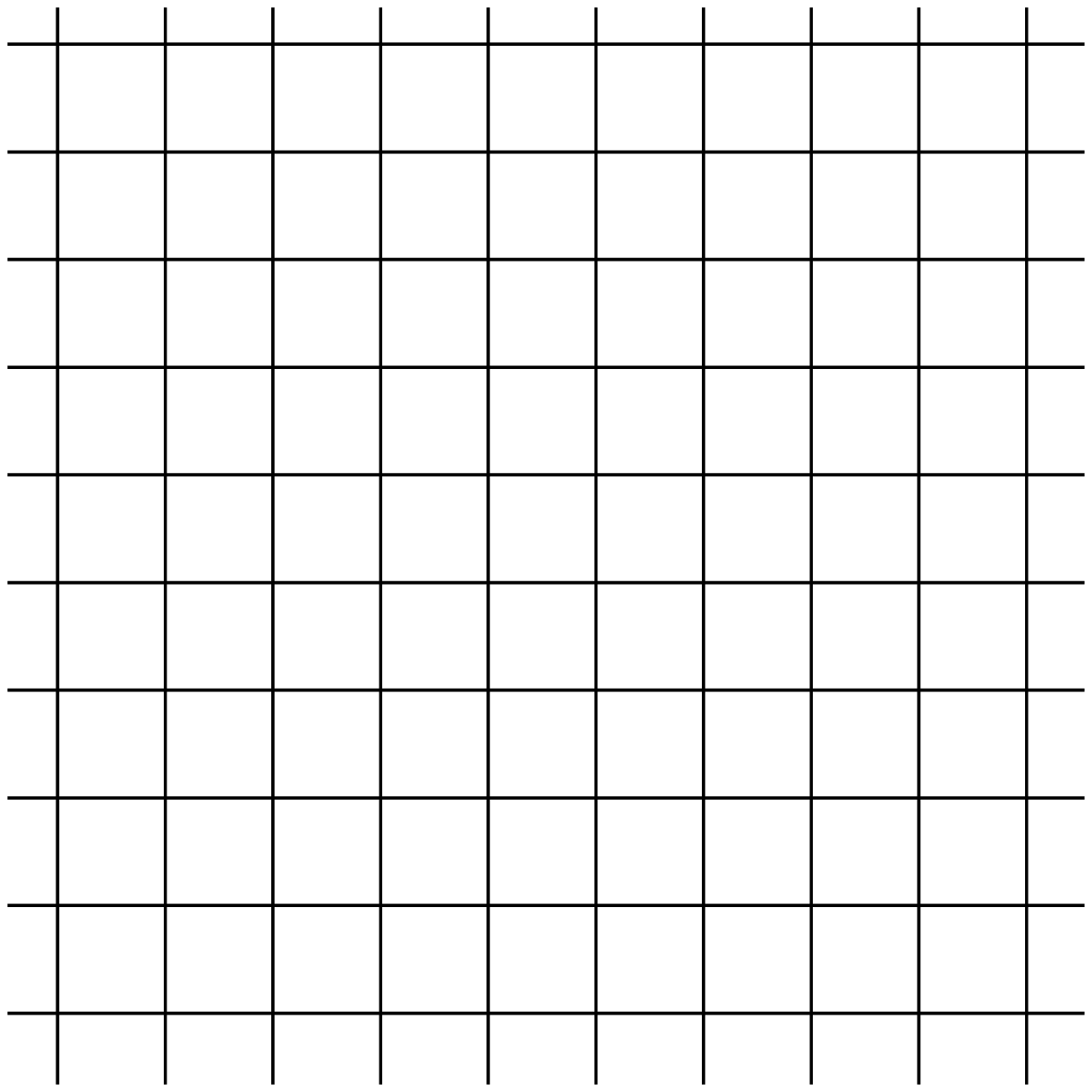} &
   \includegraphics[width=\widthB]{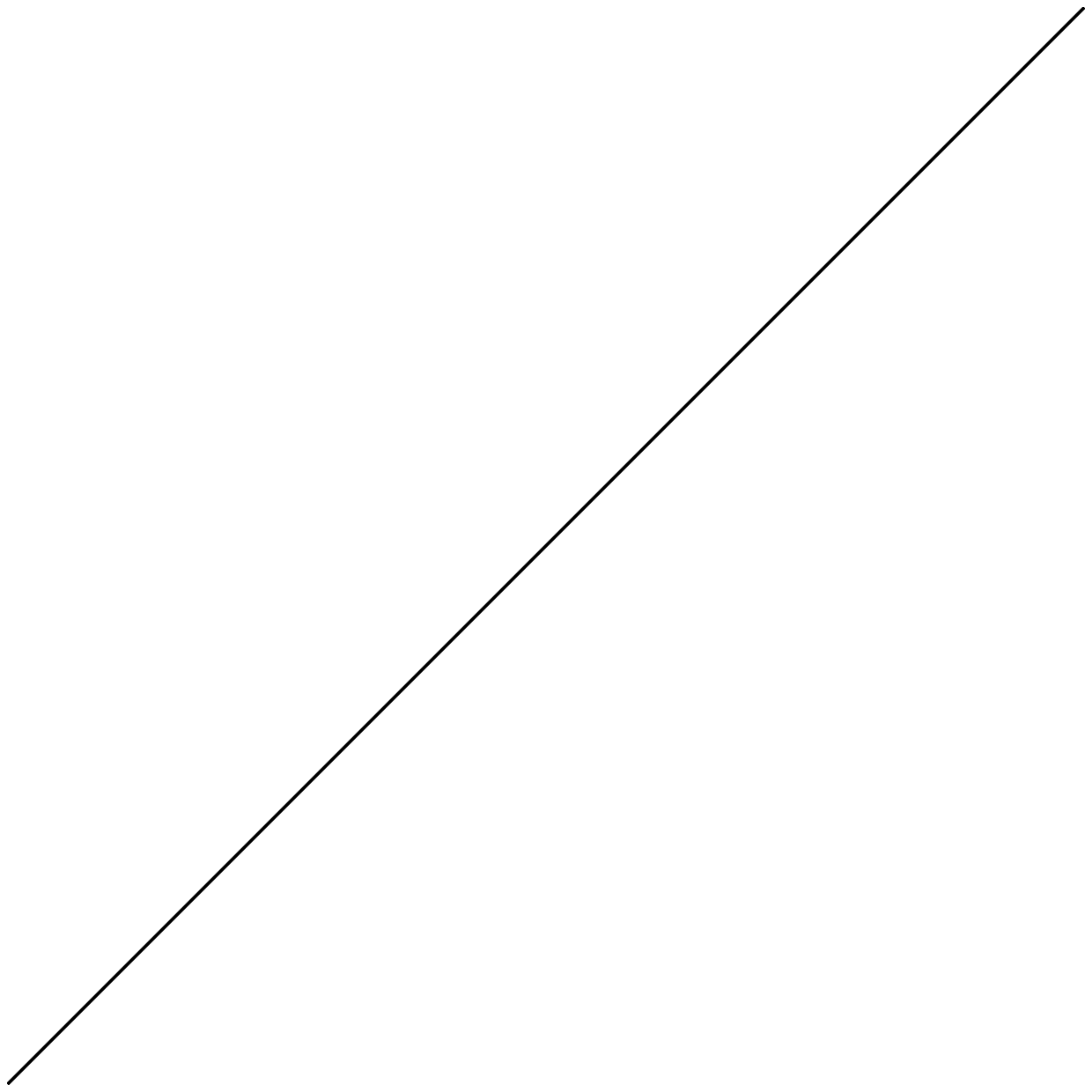} &
   \includegraphics[width=\widthB]{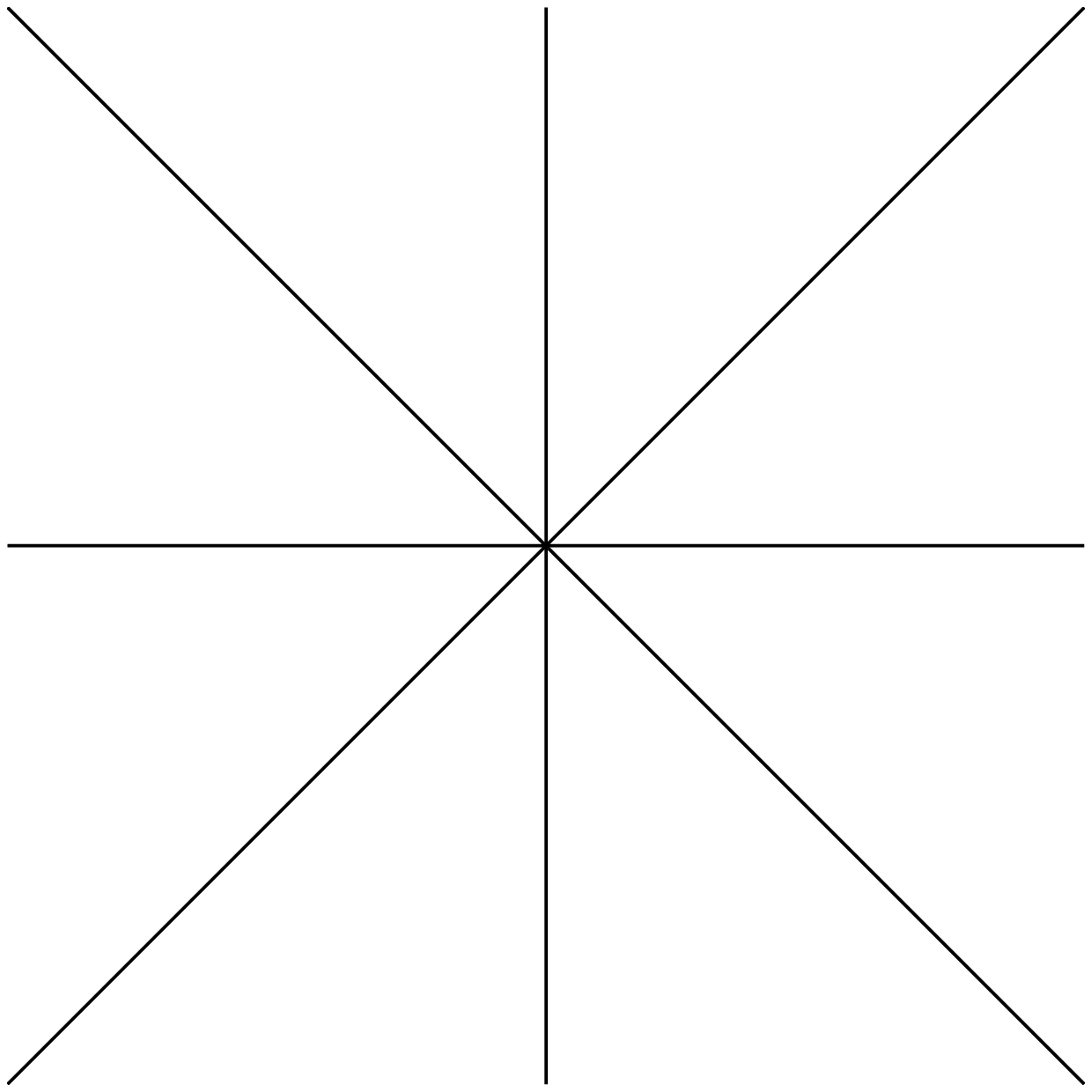} \\
   {\scriptsize (a) Scalar quantization} &
   {\scriptsize (b) Permutation source code (Var.\ I)} &
   {\scriptsize (c) Permutation source code (Var.\ II)} \\
   \includegraphics[width=\widthB]{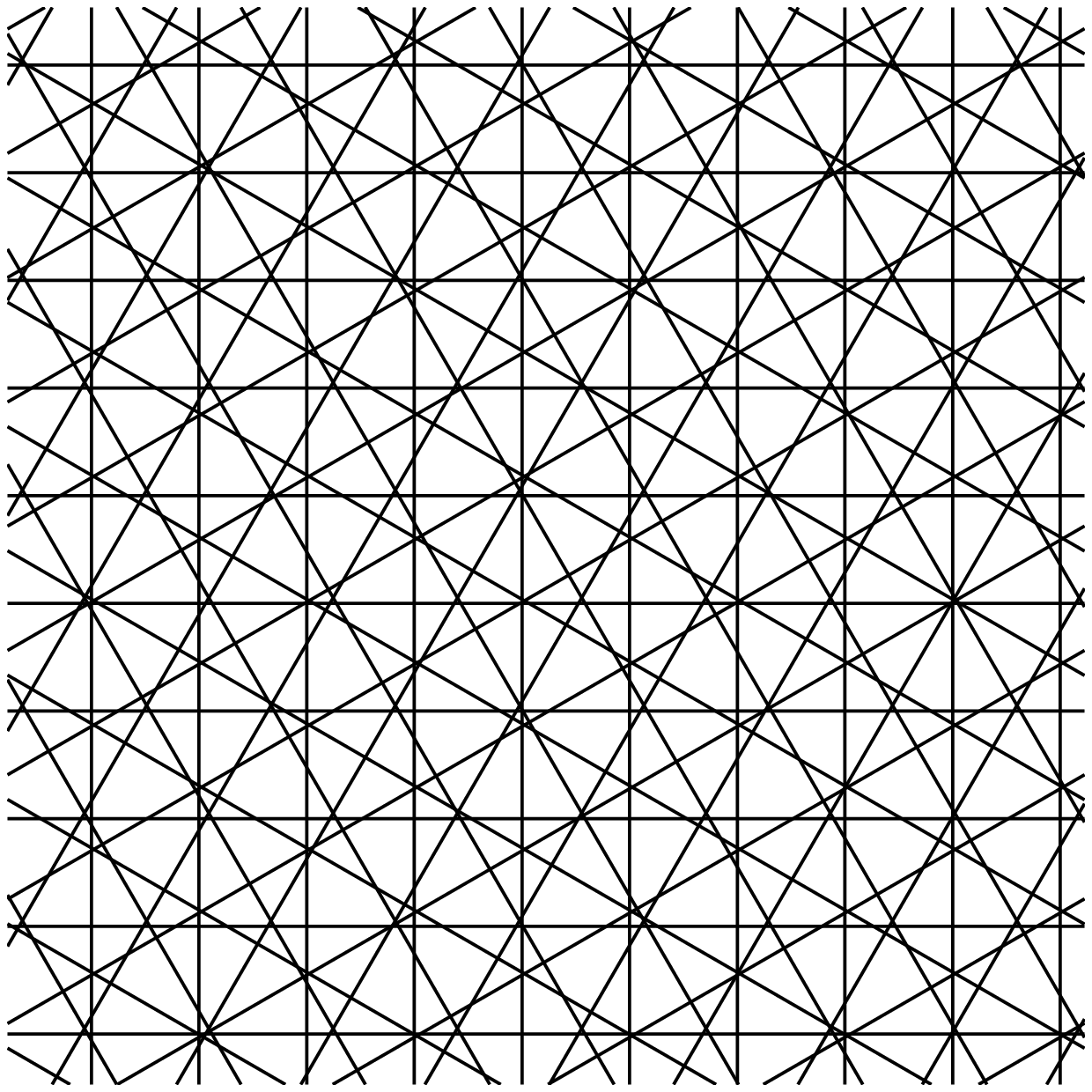} &
   \includegraphics[width=\widthB]{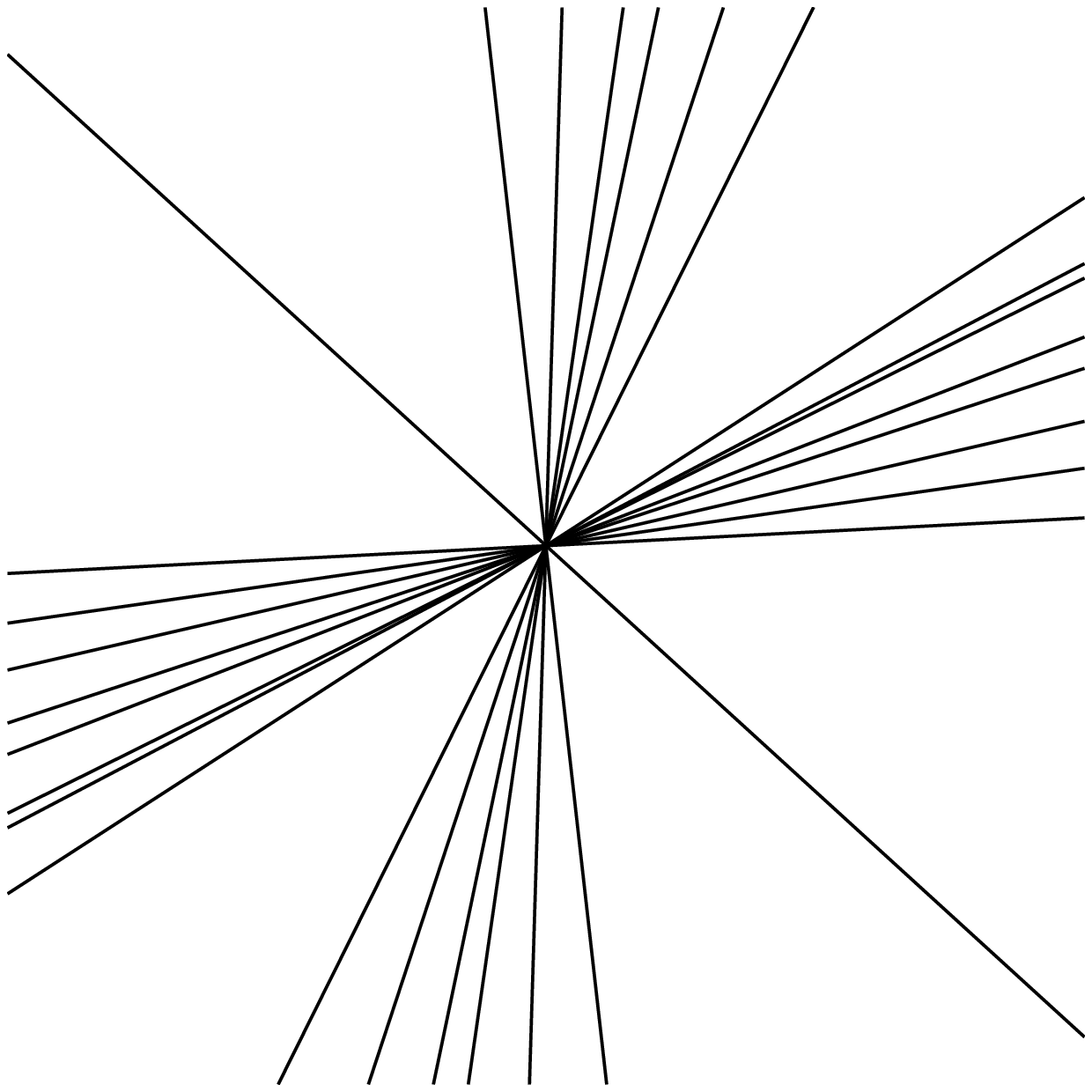} &
   \includegraphics[width=\widthB]{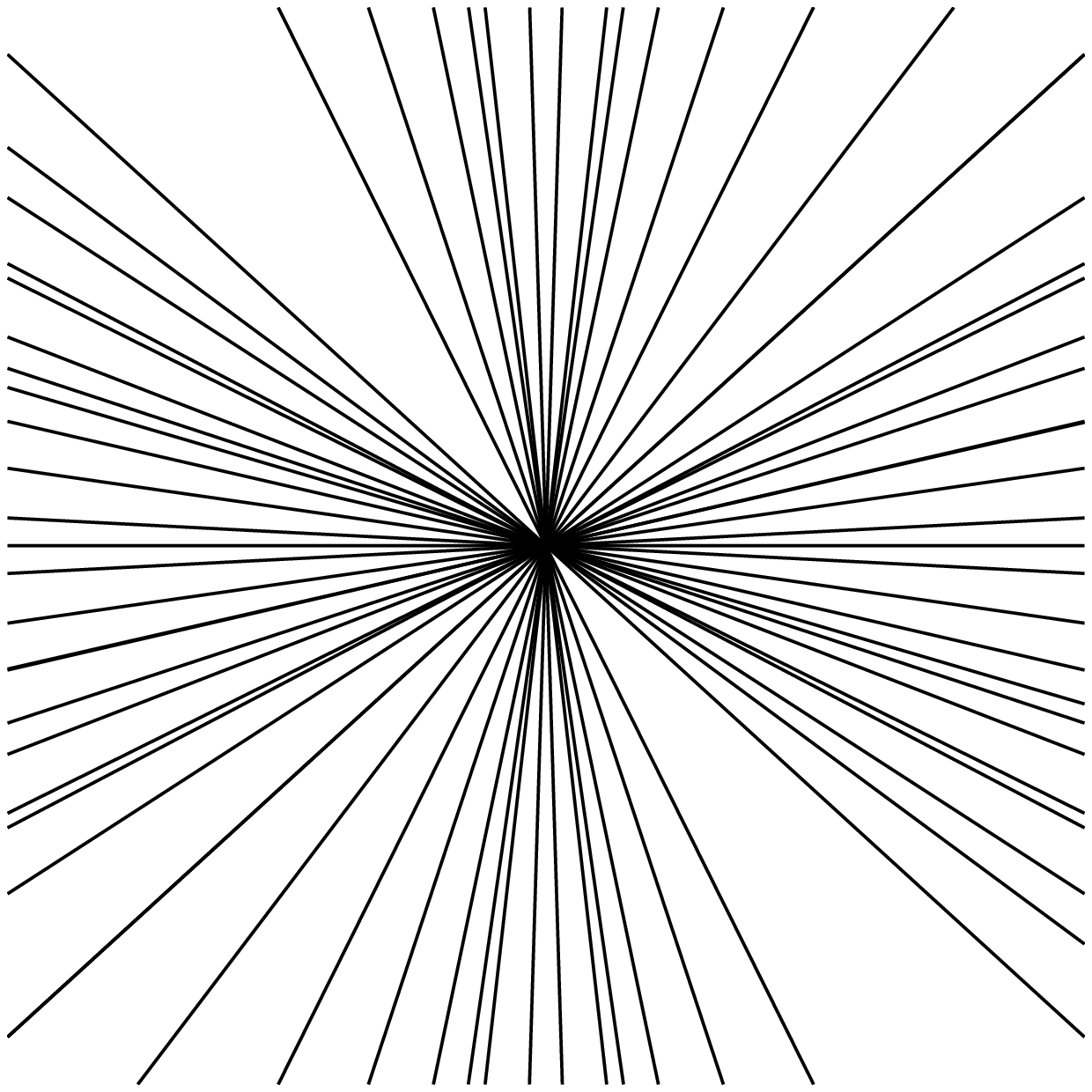} \\
   {\scriptsize (d) Scalar-quantized frame expansion} &
   {\scriptsize (e) Frame permutation quantizer (Var.\ I)} &
   {\scriptsize (f) Frame permutation quantizer (Var.\ II)}
  \end{tabular}
 \end{center}
 \caption{Partition diagrams for $x \in \R^2$.
   (a) Scalar quantization.
   (b) Permutation source code, Variant I\@. 
   (c) Permutation source code, Variant II\@. 
   (Both permutation source codes have $n_1=n_2=1$.)
   (d) Scalar-quantized frame expansion with $M=6$ coefficients
       (real harmonic tight frame).
   (e) Frame permutation quantizer, Variant I\@.
   (f) Frame permutation quantizer, Variant II\@.
   (Both frame permutation quantizers have $M=6$,
   $m_1=m_2=\cdots=m_6=1$, and the same random frame.)
  }
 \label{fig:qfe}
\end{figure}

A way to increase the number of partition cells without increasing the
scalar quantization resolution is to use a frame expansion.
A conventional quantized frame expansion is obtained by scalar quantization
of $y = Fx$, where $F \in \R^{M \times N}$ with $M \geq N$.
Keeping the resolution $K$ fixed, the partition now has $K^M$ cells.
An example with $M=6$ is shown in Fig.~\ref{fig:qfe}(d).
Each frame element $\phi_k$ (transpose of row of $F$)
induces a \emph{hyperplane wave partition}~\cite{ThaoV:96}:
a partition formed by equally-spaced $(N-1)$-dimensional
hyperplanes normal to $\phi_k$.  The overall partition has
$M$ hyperplane waves and is spatially uniform.  A spatial shift
invariance can be ensured formally by the use of subtractively
dithered quantizers~\cite{GrayS:93}.

A Variant I PSC represents $x$ just by which permutation
of the components of $x$ puts the components in descending order.
In other words,
only whether $x_1 > x_2$ or $x_2 > x_1$ is specified.\footnote{The
boundary case of $x_1 = x_2$ can be handled arbitrarily in practice
and safely ignored in the analysis.  When the source vector has an
absolutely continuous distribution, the boundary affects neither the
rate nor the distortion.  For an optimal quantizer, the boundaries
will have zero probability even if the source has
a discrete component~\cite[p.~355]{GershoG:92}.}
The resulting partition is shown in Fig.~\ref{fig:qfe}(b).
A Variant II PSC specifies (at most) the signs of the components
of $x_1$ and $x_2$ and whether $|x_1| > |x_2|$ or $|x_2| > |x_1|$.
The corresponding partitioning of the plane is shown in Fig.~\ref{fig:qfe}(c),
with the vertical line coming from the sign of $x_1$,
the horizontal line coming from the sign of $x_2$,
and the diagonal lines from $|x_1| \gtrless |x_2|$.

While low-dimensional diagrams are often inadequate in explaining PSC,
several key properties are illustrated.
The partition cells are (unbounded) convex cones,
giving special significance to the origin and a lack of
spatial shift invariance.
The unboundedness of cells implies that some additional knowledge,
such as a bound on $\| x \|$ or a probabilistic distribution on $x$,
is needed to compute good estimates.
At first this may seem extremely different from ordinary
scalar quantization or scalar-quantized frame expansions, but those techniques
also require some prior knowledge to allow the quantizer outputs to be
represented with finite numbers of bits.
We also see that the dimension $N$ determines the maximum number of
cells ($N!$ for Variant I and $2^N N!$ for Variant II);
there is no parameter analogous to scalar quantization step size that
allows arbitrary control of the resolution.

To get a finer partition without changing the dimension $N$,
we can again employ a frame expansion.
With $y = Fx$ as before, PSC of $y$ gives more
relative orderings with which to represent $x$.
If $\phi_j$ and $\phi_k$ are frame elements
(transposes of rows of $F$)
then $\ip{x}{\phi_j} \gtrless \ip{x}{\phi_k}$
is $\ip{x}{\phi_j-\phi_k} \gtrless 0$ by linearity of the inner product,
so every pair of frame elements can give a condition on $x$.
An example of a partition obtained with Variant I and $M=6$ is shown in
Fig.~\ref{fig:qfe}(e).
There are many more cells than in Fig.~\ref{fig:qfe}(b).
Similarly, Fig.~\ref{fig:qfe}(f) shows a Variant II example.
The cells are still (unbounded) convex cones.
If additional information such as $\|x\|$ or an affine subspace constraint
(not passing through the origin) is known, $x$ can be specified arbitrarily
closely by increasing $M$.

\section{Background}
\label{sec:background}
Having illustrated the basic idea of PSC and our generalization
using frames to provide resolution control, we now formalize
the background material.
We assume throughout fixed-rate coding and
the conventional squared-error fidelity criterion
$\| x - \hat{x} \|^2$ between source $x$ and reproduction $\hat{x}$.
Some statements---especially those pertaining to data compression---assume
a known source distribution over which
performance is measured in expectation.
Most statements for data acquisition with $M \rightarrow \infty$ apply
pointwise over $x$.

\subsection{Vector Quantization}
A vector quantizer is a mapping from an input $x \in \R^N$ to
a \emph{codeword} $\hat{x}$ from a finite \emph{codebook} $\mathcal{C}$.
Without loss of generality, a vector quantizer can be seen as the
combination of an \emph{encoder}
$$
  \alpha: \R^N \rightarrow \mathcal{I}
$$
and a \emph{decoder}
$$
  \beta: \mathcal{I} \rightarrow \R^N,
$$
where $\mathcal{I}$ is a finite index set.
The encoder partitions $\R^N$ into $|\mathcal{I}|$ regions or \emph{cells}
$\{ \alpha^{-1}(i) \}_{i \in \mathcal{I}}$,
and the decoder assigns a \emph{reproduction value} to each cell.
Examples of partitions are given in Fig.~\ref{fig:qfe}.
For the quantizer to output $R$ bits per component,
we have $|\mathcal{I}| = 2^{NR}$.

For any codebook (i.e., any $\beta)$, the encoder $\alpha$ that minimizes
$\| x - \hat{x} \|^2$ maps $x$ to the nearest element of the codebook.
The partition is thus composed of convex cells.
Since the cells are convex, reproduction values are optimally within the
corresponding cells---whether to minimize mean-squared error distortion,
maximum squared error, or some other reasonable function of squared error.
To minimize maximum squared error, reproduction values should be at centers
of cells; to minimize expected distortion, they should be at centroids
of cells.
Reproduction values being within corresponding cells is formalized
as \emph{consistency}:
\begin{defi}
The reconstruction $\hat{x} = \beta(\alpha(x))$ is called a
\emph{consistent reconstruction of $x$} when $\alpha(x) = \alpha(\hat{x})$
(or equivalently $\beta(\alpha(\hat{x})) = \hat{x}$).
The decoder $\beta$ is called \emph{consistent} when $\beta(\alpha(x))$
is a consistent reconstruction of $x$ for all $x$.
\end{defi}

In practice, the pair $(\alpha,\beta)$ usually does not minimize any
desired distortion criterion for a given codebook size because the
optimal mappings are hard to design and hard to implement~\cite{GershoG:92}.
The mappings are commonly designed subject to certain structural constraints,
and $\beta$ may not even be consistent for $\alpha$~\cite{ThaoV1994,GoyalVT1998}.

\subsection{Permutation Source Codes}
\label{sec:permutationcodes}
A permutation source code is a vector quantizer with the defining
characteristic that codewords are related through permutations and,
possibly, sign changes.
Permutation codes were originally introduced as channel codes by
Slepian~\cite{Slepian1965}.  They were then applied to a specific source coding
problem, through the duality between source encoding and channel decoding,
by Dunn~\cite{Dunn:65} and developed in greater generality by
Berger \emph{et al.}~\cite{BergerJW:72,Berger:72,Berger:82}.
Permutation codes are generated by the group action of a permutation group
and are thus examples of group codes~\cite{Slepian:68}.

\subsubsection{Definitions}
There are two variants of permutation codes:

\emph{Variant I}:
Here codewords are related through permutations, without sign changes.
Let $\mu_1 > \mu_2 > \cdots > \mu_{K}$ be real numbers, and
let $n_1,n_2,\ldots,n_{K}$ be positive integers that sum to $N$
(an \emph{(ordered) composition} of $N$).
The \emph{initial codeword} of the codebook $\mathcal{C}$ has the form
\begin{equation}
    \yInit=(\mathop{{\mu_1,\ldots,\mu_1}}_{\longleftarrow n_1 \longrightarrow},\mathop{{\mu_2,\ldots,\mu_2}}_{\longleftarrow n_2 \longrightarrow},\ldots,\mathop{{\mu_K,\ldots,\mu_K}}_{\longleftarrow n_K \longrightarrow}),
\label{eq:init}
\end{equation}
where each $\mu_i$ appears $n_i$ times.
When $\yInit$ has this form, we call it \emph{compatible with
$(n_1,n_2,\ldots,n_K)$}.
The codebook is the set of all distinct permutations of $\yInit$.
The number of codewords in $\mathcal{C}$ is thus given by the multinomial coefficient
\begin{subequations}
\label{eq:size12}
\begin{equation}
\LI = \frac{N!}{n_1 ! \, n_2 ! \, \cdots \, n_{K} !}. \label{eq:size1}
\end{equation}

The permutation structure of the codebook enables low-complexity
nearest-neighbor encoding~\cite{BergerJW:72}: map $x$ to the codeword
$\hat{x}$ whose components have the same order as $x$;
in other words, replace the $n_1$ largest components of $x$ with $\mu_1$,
the $n_2$ next-largest components of $x$ with $\mu_2$, and so on.

\emph{Variant II}:
Here codewords are related through permutations and sign changes.
Let $\mu_1 > \mu_2 > \cdots > \mu_{K}\geq 0$ be nonnegative real numbers, and
let $\(n_1,n_2,\ldots,n_{K}\)$ be a composition of $N$.
The initial codeword has the same form as in (\ref{eq:init}),
and the codebook now consists of all distinct permutations of $\yInit$
with each possible sign for each nonzero component.
The number of codewords in $\mathcal{C}$ is thus given by
\begin{equation}
\LII = 2^{h}\frac{N!}{n_1 ! \, n_2 ! \, \cdots \, n_{K} !},\label{eq:size2}
\end{equation}
\end{subequations}
where $h=N$ if $\mu_K>0$ and $h=N-n_K$ if $\mu_K=0$.

Nearest-neighbor encoding for Variant II PSCs can be implemented
as follows~\cite{BergerJW:72}: map $x$ to the codeword
$\hat{x}$ whose components have the same order in absolute value and
match the signs of corresponding components of $x$.
Since the complexity of sorting a vector of length $N$ is $O(N \log N)$ operations,
the encoding complexity for either PSC variant is much lower than with
an unstructured
source code and only $O(\log N)$ times higher than scalar quantization.

With the codebook sizes given in (\ref{eq:size12}),
the per-component rate is defined as
\begin{equation}
R = N^{-1}\log_2 L. \label{eq:PSCrate}
\end{equation}
Under certain symmetry conditions on the source distribution,
all codewords are equally likely so the rate
cannot be reduced by entropy coding.
This generation of fixed-rate output---avoiding the possibility of buffer
overflow associated with entropy coding of the highly nonequiprobable
outputs of a quantizer~\cite{Jelinek1968}---is a known advantage of
PSCs~\cite{BergerJW:72}.
An efficient enumeration of permutations, to generate a binary representation,
is described in~\cite{Cover:73}.

\subsubsection{Partition Properties}
\label{sec:psc-partition}
For both historical reasons and to match the conventional approach
to vector quantization, PSCs were defined above in terms of a codebook structure,
and the codebook structure led to an encoding procedure.
Note that we may now examine the partitions induced by PSCs separately
from the particular codebooks for which they are nearest-neighbor partitions.

The partition induced by a Variant I PSC is completely determined by
the composition $\(n_1,n_2,\ldots,n_{K}\)$.
Specifically, the encoding mapping can index the permutation $P$ that
places the $n_1$ largest components of $x$ in the first $n_1$ positions
(without changing the order within those $n_1$ components),
the $n_2$ next-largest components of $x$ in the next $n_2$ positions,
and so on;
the $\mu_i$s are actually immaterial.
This encoding is placing all source vectors $x$ such that $Px$
is $n$-descending in the same partition cell, defined as follows.
\begin{defi}
Given a composition $n = (n_1,n_2,\ldots,n_K)$ of $N$,
a vector in $\R^N$ is called \emph{$n$-descending} if its $n_1$ largest
entries are in the first $n_1$ positions, its $n_2$ next-largest
components are in the next $n_2$ positions, etc.
\end{defi}
The property of being $n$-descending is to be descending up to the
arbitrariness specified by the composition $n$.

Because this is nearest-neighbor encoding for \emph{some} codebook,
the partition cells must be convex.
Furthermore, multiplying $x$ by any nonnegative scalar does not affect
the encoding, so the cells are convex cones.
(This was discussed and illustrated in Section~\ref{sec:preview}.)
We develop a convenient representation for the partition in
Section~\ref{sec:framePSC}.

The situation is only slightly more complicated for Variant II PSCs.
The partition is determined by the composition $\(n_1,n_2,\ldots,n_{K}\)$
and whether or not the signs of the smallest-magnitude components
should be encoded (whether $\mu_K = 0$, in the codebook-centric view).

The PSC literature has mostly emphasized the design of PSCs for
sources with i.i.d.\ components.
But as developed in Section~\ref{sec:framePSC}, the simple structured
encoding of PSCs could be combined with unconventional decoding techniques
for other sources.
The possible suitability of PSCs for sources with unknown or time-varying
statistics has been previously observed~\cite{BergerJW:72}.

\subsubsection{Codebook Optimization}
With the encoding procedure now fixed, let us turn to the decoder
(or codebook) design.
For this we assume that $x$ is random and that the components of $x$ are i.i.d.

Let $\xi_1\geq\xi_2\geq\cdots\geq\xi_N$
denote the order statistics of random vector $x=\(x_1,\ldots,x_N\)$ and
$\eta_1\geq\eta_2\geq\cdots\geq\eta_N$
denote the order statistics of random vector $|x|\defeq\(|x_1|,\ldots,|x_N|\)$.%
\footnote{For consistency with earlier literature on PSCs, we are reversing the
usual sorting of order statistics~\cite{DavidN:03}.}
For a given initial codeword $\yInit$,
the per-letter distortion of optimally-encoded Variant I and Variant II PSCs are given by
\begin{subequations}
\begin{equation}
\label{eq:distort}
\DI = N^{-1}E\left[{\sum_{i=1}^{K}\sum_{\ell \in \setI_i}\left(\xi_\ell-\mu_i\right)^2}\right]
\end{equation}
and
\begin{equation}
\label{eq:distort2}
\DII = N^{-1}E\left[{\sum_{i=1}^{K}\sum_{\ell \in \setI_i}\left(\eta_\ell-\mu_i\right)^2}\right] ,
\end{equation}
\end{subequations}
where $\setI_i$s are the sets of indexes generated by the composition:
\begin{subequations}
\label{eq:index-sets}
\begin{eqnarray}
\setI_1 & = & \{1,2, \ldots ,n_1 \}, \\
\setI_i & = & \left\{\left({\textstyle\sum_{k=1}^{i-1}n_k}\right)+1,\,
                 \left({\textstyle\sum_{k=1}^{i-1}n_k}\right)+2,\,\ldots,\,
                 \left({\textstyle\sum_{k=1}^{i}n_k}\right)\right\}
  , \qquad i\geq 2.
\end{eqnarray}
\end{subequations}
These distortions can be deduced simply by examining which components of
$x$ are mapped to which elements of $\yInit$.

Optimization of (\ref{eq:distort}) and (\ref{eq:distort2}) over both
$\{n_i\}_{i=1}^K$ and $\{\mu_i\}_{i=1}^K$ subject to~(\ref{eq:PSCrate})
is difficult, partly due to the integer constraint of the composition.
However, given a composition $(n_1,n_2,\ldots,n_K)$,
the optimal initial codeword can be determined easily from the means
of the order statistics. In particular, the optimal $\ts\{\mu_i\}_{i=1}^K$
of Variant I and Variant II PSCs are given by
\begin{subequations}
\begin{equation}
\label{eq:optInit1}
\mu_i=n_i^{-1}\sum_{\ell\in \setI_i}E\left[\xi_\ell\right], \qquad \mbox{for Variant I},
\end{equation}
and
\begin{equation}
\label{eq:optInit2}
\mu_i=n_i^{-1}\sum_{\ell\in \setI_i}E\left[\eta_\ell\right], \qquad \mbox{for Variant II}.
\end{equation}
\end{subequations}

The analysis of \cite{Berger:72} shows that when $N$ is large,
the optimal composition gives performance equal to
optimal entropy-constrained scalar quantization (ECSQ) of $x$.
Performance does not strictly improve with increasing $N$;
permutation codes outperform ECSQ for certain combinations of block size
and rate~\cite{GoyalSW:01}.

\subsection{Frame Definitions and Classifications}
The theory of finite-dimensional frames is often developed for a Hilbert space
$\mathbb{C}^N$ of complex vectors. In this paper, we use frame expansions only
for quantization using PSCs, which rely on order relations of real numbers.
Therefore we limit ourselves to real finite frames.
We maintain the Hermitian transpose notation~$^*$ where a transpose would
suffice because this makes several expressions have familiar appearances.

The Hilbert space of interest is $\R^N$ equipped with
the standard inner product (dot product),
\[
\ip{x}{y} = x^{T}y = \sum_{k=1}^N x_k y_k,
\]
for $x=\left[x_1,x_2,\ldots,x_N\right]^T\in\R^N$ and
$y=\left[y_1,y_2,\ldots,y_N\right]^T\in\R^N$.
The norm of a vector $x$ is naturally induced from the inner product,
\[
\|x\|=\sqrt{\ip{x}{x}}.
\]

\begin{defi}[\cite{Daubechies1992}]
A set of $N$-dimensional vectors,
$\Phi=\{\phi_k\}_{k=1}^M\subset\mathbb{R}^{N}$,
is called a \emph{frame} if there exist a lower frame bound, $A>0$,
and an upper frame bound, $B <\infty$,
such that
\begin{subequations}
\label{eq:frameDef}
\begin{equation}
  A\|x\|^2\leq\sum_{k=1}^M|\ip{x}{\phi_k}|^2 \leq B\|x\|^2,
  \qquad
  \mbox{for all $x\in\R^{N}$}.\label{eq:frameDef1}
\end{equation}
The matrix $F \in \R^{M \times N}$ with $k$th row equal to $\phi_k^*$
is called the \emph{analysis frame operator}.
$F$ and $\Phi$ will be used interchangeably to refer to a frame.
Equivalent to (\ref{eq:frameDef1}) in matrix form is
\begin{equation}
A\id_N\leq F^{*}F \leq B\id_N,\label{eq:frameDef2}
\end{equation}
\end{subequations}
where $\id_N$ is the $N\times N$ identity matrix.
\end{defi}

The lower bound in~(\ref{eq:frameDef}) implies that $\Phi$ spans $\R^{N}$;
thus a frame must have $M\geq N$.
It is therefore reasonable to call the ratio $r=M/N$ the
\emph{redundancy} of the frame.  A frame is called a
\emph{tight frame} if the frame bounds can be chosen to be equal.
A frame is an \emph{equal-norm frame} if all of its vectors have the same norm.
If an equal-norm frame is normalized to have all vectors of unit norm, we call it a
\emph{unit-norm frame}
(or sometimes \emph{normalized frame} or \emph{uniform frame}).
For a unit-norm frame, it is easy to verify that $A \leq r \leq B$.
Thus, a unit-norm tight frame (UNTF) must satisfy $A = r = B$ and
\begin{equation}
F^{*}F = r \id_N.\label{eq:TF}
\end{equation}

Naimark's theorem~\cite{HanL2000} provides an efficient way to characterize
the class of equal-norm tight frames: a set of vectors is an equal-norm tight frame
if and only if it is the orthogonal projection (up to a scale factor)
of an orthonormal basis of an ambient Hilbert space on to some subspace.%
\footnote{The theorem holds for a general separable Hilbert space of possibly infinite dimension.}
As a consequence, deleting the last $(M-N)$ columns of the (normalized)
discrete Fourier Transform (DFT) matrix in $\mathbb{C}^{M\times M}$ yields
a particular subclass of UNTFs called
\emph{(complex) harmonic tight frames} (HTFs).
One can adapt this derivation to construct \emph{real} HTFs~\cite{GoyalKK:01},
which are always UNTFs, as follows.
\begin{defi}
\label{def:HTF}
The {real harmonic tight frame} of $M$ vectors in $\R^N$ is defined for even $N$ by
\begin{subequations}
\begin{align}
\phi^{*}_{k+1}=&\sqrt{\frac{2}{N}}\left[\cos{\frac{k\pi}{M}},\cos{\frac{3k\pi}{M}},\ldots,\cos{\frac{(N-1)k\pi}{M}},
\sin{\frac{k\pi}{M}},\sin{\frac{3k\pi}{M}}, \ldots,\sin{\frac{(N-1)k\pi}{M}}\right] \label{eq:HTFeven}
\end{align}
and for odd $N$ by
\begin{align}
\phi^{*}_{k+1}=&\sqrt{\frac{2}{N}}\left[\frac{1}{\sqrt{2}},\cos{\frac{2k\pi}{M}},\cos{\frac{4k\pi}{M}},\ldots,\cos{\frac{(N-1)k\pi}{M}},
\sin{\frac{2k\pi}{M}},\sin{\frac{4k\pi}{M}}, \ldots,\sin{\frac{(N-1)k\pi}{M}}\right],\label{eq:HTFodd}
\end{align}
\end{subequations}
where $k=0,1,\ldots,M-1$.
The \emph{modulated harmonic tight frames} are defined by
\begin{equation}
\label{eq:modulatedHTF}
\psi_k=\gamma(-1)^k\phi_k,
\qquad
\mbox{for $k=1,2,\ldots,M$},
\end{equation}
where $\gamma = 1$ or $\gamma = -1$ (fixed for all $k$).
\end{defi}

HTFs can be viewed as the result of a group of orthogonal operators acting
on one generating vector~\cite{KovacevicC:07b}. This property
has been generalized in~\cite{EldarB2003,Kolesar2004} under the
name \emph{geometrically-uniform frames} (GUFs).
Note that a GUF is a special case of a
group code as developed by Slepian~\cite{Slepian1965,Slepian:68}.
An interesting connection between PSCs and GUFs
is that under certain conditions, a PSC codebook is a GUF
with generating vector $\yInit$ and the generating group action
provided by all permutation matrices~\cite{Abdelkefi2008}.

Classification of frames is often up to some unitary
equivalence~\cite{GoyalKK:01}.
Adopting the terminology of Holmes and Paulsen~\cite{HolmesP2004},
we say two frames in $\R^N$,
$\Phi=\{\phi_k\}_{k=1}^M$ and $\Psi=\{\psi_k\}_{k=1}^M$,
are
\begin{enumerate}
\item [(i)] Type I equivalent if there is an orthogonal matrix $U$ such that $\psi_k=U\phi_k$ for all $k$;
\item [(ii)] Type II equivalent if there is a permutation $\sigma(\cdot)$ on
            $\{1,2,\ldots,M\}$ such that $\psi_k=\phi_{\sigma(k)}$ for all $k$; and
\item [(iii)] Type III equivalent if there is a \emph{sign function} in $k$,
            $\delta(k)=\pm 1$ such that $\psi_k=\delta(k)\phi_k$ for all $k$.
\end{enumerate}
It will be evident that FPQ performance is always invariant to Type II
equivalence; invariant to Type I equivalence when the source distribution
is rotationally invariant; and invariant to Type III equivalence under
Variant II but not, in general, under Variant I\@.

It is important to note
that for $M=N+1$ there is exactly one equivalence class of
UNTFs~\cite[Thm.~2.6]{GoyalKK:01}. Since HTFs are always UNTFs, the following
property follows directly from~\cite[Thm.~2.6]{GoyalKK:01}.
\begin{prop}\label{prop:equiv}
Assume that $M=N+1$, and $\Phi=\{\phi_k\}_{k=1}^{M}\subset\R^{N}$ is the real HTF\@.
Then every UNTF $\Psi=\left\{\psi_k\right\}_{k=1}^M$ can be written as
\begin{equation}
\psi_k=\delta(k) U\phi_{\sigma(k)},
\qquad
\mbox{for $k=1,2,\ldots,M$},
\end{equation}
where $\delta(k)=\pm 1$ is some sign function in $k$,
$U$ is some orthogonal matrix,
and $\sigma(\cdot)$ is some permutation on the index set $\{1,2,\ldots,M\}$.
\end{prop}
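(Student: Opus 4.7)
The plan is to deduce the proposition almost immediately from the cited uniqueness result \cite[Thm.~2.6]{GoyalKK:01}. That theorem asserts that in the case $M = N+1$ there is a single equivalence class of UNTFs in $\R^N$, where the equivalence is the one generated by the three operations (Type I, II, III) just defined. Since the real HTF $\Phi$ is itself a UNTF (as noted right before the proposition's statement) and $\Psi$ is a UNTF by assumption, both must lie in this one equivalence class. Hence there is a finite sequence of Type I, II, and III operations transforming $\Phi$ into $\Psi$.

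What then remains is to verify that any such sequence can be rewritten as a single operation of each kind, applied in the specific order $\psi_k = \delta(k)\, U\, \phi_{\sigma(k)}$. This is essentially an algebraic bookkeeping step. Letting $T_U$, $T_\sigma$, and $T_\delta$ denote the three operations, one checks the commutation rules
\[
T_U T_\sigma = T_\sigma T_U, \qquad T_U T_\delta = T_\delta T_U, \qquad T_\sigma T_\delta = T_{\delta\circ\sigma}\, T_\sigma,
\]
the third holding because permutation and pointwise sign change both act only by scalars/indices. Using these relations, any word in the generators $\{T_U, T_\sigma, T_\delta\}$ can be collapsed: two successive orthogonal operators compose to another orthogonal operator, two successive permutations compose to a permutation, and two successive sign functions multiply pointwise. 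Thus the composition reduces to a triple $(U,\sigma,\delta)$ in the claimed normal form.

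The only mildly delicate point, and the place where I would spend most of the write-up, is making sure the three types of equivalence as defined here really are the three generators of the equivalence relation used in \cite[Thm.~2.6]{GoyalKK:01}; modulo checking that the terminologies agree, there is no genuine obstacle. Composing the two steps above yields the orthogonal matrix $U$, the permutation $\sigma$, and the sign function $\delta$ such that $\psi_k = \delta(k)\, U\, \phi_{\sigma(k)}$ for $k = 1,\ldots,M$, which is precisely the statement of the proposition.
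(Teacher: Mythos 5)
Your proposal takes essentially the same route as the paper, which gives no separate proof and simply notes that since the real HTF is itself a UNTF, the proposition ``follows directly from''~\cite[Thm.~2.6]{GoyalKK:01}. Your extra bookkeeping step collapsing a chain of Type~I/II/III operations into the single normal form $\psi_k=\delta(k)\,U\phi_{\sigma(k)}$ (using that $U$ acts on vectors while $\sigma$ and $\delta$ act on indices, so the operations commute up to relabeling the sign function) is a correct and harmless elaboration of the same argument.
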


Another important subclass of UNTFs is defined as follows:
\begin{defi}[\cite{StrohmerH2003,SustikTDH2007}]
\label{def:ETF}
A UNTF $\Phi=\{\phi_k\}_{k=1}^M\subset\mathbb{R}^{N}$ is called an
\emph{equiangular tight frame} (ETF) if there exists a constant
$\c$ such that $|\ip{\phi_\ell}{\phi_k}|=\c$
for all $1\leq \ell<k\leq M$.
\end{defi}
ETFs are sometimes called \emph{optimal Grassmannian frames}
or \emph{$2$-uniform frames}.
They prove to have rich application in communications, coding theory,
and sparse approximation~\cite{HolmesP2004,StrohmerH2003,Tropp2004}.
For a general pair $(M,N)$,
the existence and constructions of such frames is not fully understood.
Partial answers can be found in~\cite{SustikTDH2007,Strohmer2008,CasazzaRT2008}.

In our analysis of FPQ, we will find that \emph{restricted} ETFs---where the
absolute value constraint can be removed from Definition~\ref{def:ETF}---play
a special role.  In matrix view, a restricted ETF satisfies $F^{*}F=r\id_N$ and
$FF^{*}=(1-\c)\id_M+\c J_M$, where $J_M$ is the all-1s matrix of size $M\times M$.
The following proposition specifies the restricted ETFs
for the codimension-1 case.
\begin{prop}\label{prop:ETF}
For $M=N+1$, the family of all restricted ETFs is constituted by the
Type~I and Type~II equivalents of modulated HTFs.
\end{prop}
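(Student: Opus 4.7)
The plan is to combine the classification of UNTFs in codimension~1 (Proposition~\ref{prop:equiv}) with an explicit computation of the Gram matrix of the real HTF, and then read off the constraint imposed by the ``restricted'' condition on the sign function $\delta$.

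First I would pin down the pairwise inner products that a restricted ETF with $M=N+1$ must have. Since the vectors are unit-norm, $FF^{*}=(1-a)I_M+aJ_M$ has eigenvalues $1+(M-1)a$ (on the all-ones vector) and $1-a$ (with multiplicity $M-1$), while tightness forces these to be $0$ and $r=M/N$ in some order. The only nontrivial assignment yields $a=-1/N$. Thus the restricted ETF condition reduces to $\langle\psi_\ell,\psi_k\rangle=-1/N$ for all $\ell\neq k$.

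Next I would compute the Gram matrix of the real HTF $\Phi=\{\phi_k\}_{k=1}^{M}$ explicitly. For both parities of $N$, the sum collapses via a Dirichlet-kernel identity to
\[
\langle \phi_i,\phi_j\rangle \;=\; \frac{1}{N}\cdot\frac{\sin(N(i-j)\pi/M)}{\sin((i-j)\pi/M)},
\]
and using $M=N+1$ together with $\sin(N\theta)=\sin((N+1)\theta-\theta)=-(-1)^{i-j}\sin((i-j)\pi/M)$ gives $\langle\phi_i,\phi_j\rangle=(-1)^{i-j+1}/N$ for $i\neq j$. This is the main (though routine) calculation; it is the one place where HTF-specific structure enters. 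A short check then confirms that the modulated HTF $\tilde\psi_k=\gamma(-1)^k\phi_k$ has $\langle\tilde\psi_\ell,\tilde\psi_k\rangle=(-1)^{\ell+k}(-1)^{\ell-k+1}/N=-1/N$, so it is a restricted ETF; Type~I and Type~II equivalences clearly preserve this, giving one inclusion.

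For the reverse inclusion, let $\Psi$ be any restricted ETF with $M=N+1$. By Proposition~\ref{prop:equiv}, there exist an orthogonal $U$, a permutation $\sigma$, and signs $\delta(k)=\pm 1$ such that $\psi_k=\delta(k)U\phi_{\sigma(k)}$. Substituting into $\langle\psi_\ell,\psi_k\rangle=-1/N$ and using the formula above gives
\[
\delta(\ell)\delta(k)(-1)^{\sigma(\ell)+\sigma(k)+1} = -1,
\qquad \ell\neq k,
\]
which is equivalent to $[\delta(\ell)(-1)^{\sigma(\ell)}]\,[\delta(k)(-1)^{\sigma(k)}]=1$ for all $\ell\neq k$. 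Hence the quantity $\delta(k)(-1)^{\sigma(k)}$ is a constant $\gamma\in\{-1,+1\}$, independent of $k$. Substituting $\delta(k)=\gamma(-1)^{\sigma(k)}$ back yields
\[
\psi_k \;=\; U\bigl(\gamma(-1)^{\sigma(k)}\phi_{\sigma(k)}\bigr)
       \;=\; U\,\tilde\psi_{\sigma(k)},
\]
where $\tilde\psi_j=\gamma(-1)^j\phi_j$ is a modulated HTF. Thus $\Psi$ is a Type~I equivalent (via $U$) of a Type~II equivalent (via $\sigma$) of a modulated HTF, completing the proof. The only step that is not bookkeeping is the HTF Gram computation; once that inner-product pattern is in hand, the restricted ETF condition collapses into a single rank-one constraint on $\delta$ and $\sigma$ from which the modulated-HTF form is forced.
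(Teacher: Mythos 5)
Your proof is correct and follows essentially the same route as the paper: compute the Gram matrix of the real HTF to obtain $\ip{\phi_k}{\phi_\ell}=(-1)^{k-\ell+1}/N$ (you do this via a Dirichlet-kernel identity, the paper via complex geometric sums in its Lemmas~\ref{lem:alpha} and~\ref{lem:inner}), check that modulated HTFs and their Type~I/II equivalents have constant inner product $-1/N$, and then invoke Proposition~\ref{prop:equiv} to force the alternating sign pattern in the converse. Your two additions---pinning the common inner product to $-1/N$ by an eigenvalue count, and carrying the permutation $\sigma$ explicitly so that $\delta(k)(-1)^{\sigma(k)}$ is seen to be constant---are correct and merely make explicit what the paper compresses into ``represented up to Type~I and Type~II equivalences.''
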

\begin{proof}
See Section~\ref{sec:ETF}.
\end{proof}

The following property of modulated HTFs in the $M=N+1$ case will be very useful.
\begin{prop}
If $M=N+1$ then a modulated harmonic tight frame is a zero-sum frame, i.e.,
each column of the analysis frame operator $F$ sums to zero.
\end{prop}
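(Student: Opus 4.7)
The plan is to compute, column by column, the sum $\sum_{k=0}^{M-1}(-1)^{k}(\phi_{k+1})_{j}$ for every coordinate $j$, and show it equals zero. Since the modulation factor is $\gamma(-1)^{k}$ (with $\gamma=\pm 1$ a global constant), the column sum of the analysis operator for $\{\psi_{k}\}$ is $\gamma$ times the alternating sum of the unmodulated HTF entries, so it suffices to prove $\sum_{k=0}^{M-1}(-1)^{k}\phi_{k+1}=0$.

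First I would handle both the even-$N$ and odd-$N$ formulas uniformly by noting that every entry of $\phi_{k+1}^{*}$ is (up to the common factor $\sqrt{2/N}$) either a constant, or of the form $\cos(jk\pi/M)$, or $\sin(jk\pi/M)$, where $j$ takes only odd values in the even-$N$ case~(\ref{eq:HTFeven}) and only even values in the odd-$N$ case~(\ref{eq:HTFodd}). Writing $(-1)^{k}=e^{i\pi k}$, the relevant trigonometric alternating sum is the real or imaginary part of the geometric series
\[
S_{j}\;=\;\sum_{k=0}^{M-1}e^{ik\pi(M+j)/M}.
\]
Set $\omega=e^{i\pi(M+j)/M}$, so $\omega^{M}=e^{i\pi(M+j)}=(-1)^{M+j}$. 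The key arithmetic observation is that in both the cases that actually arise, $M+j$ is even: when $N$ is even, $M=N+1$ is odd and $j$ is odd, so $M+j$ is even; when $N$ is odd, $M$ is even and $j$ is even, so $M+j$ is even again. Hence $\omega^{M}=1$, and since $|j|<M$ forces $\omega\neq 1$, the geometric sum vanishes: $S_{j}=(1-\omega^{M})/(1-\omega)=0$.

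This immediately kills every cosine and sine column. The only remaining entry to check is the constant first coordinate $1/\sqrt{N}$ appearing when $N$ is odd. There the column sum is proportional to $\sum_{k=0}^{M-1}(-1)^{k}$, which is zero because $M=N+1$ is even in that case. Combining the two observations yields the column sum of $F$ is zero in every coordinate, which is the claim.

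I do not expect any real obstacle: the argument is essentially the identity $\sum_{k=0}^{M-1}e^{i\theta k}=0$ when $e^{i\theta M}=1\neq e^{i\theta}$, applied with a shift of $\pi$ from the modulation. The only care needed is the case split between even and odd $N$ so that the parity of $j$ and of $M$ line up to give $\omega^{M}=1$; once one notices that modulation shifts the frequency argument by exactly $\pi$ (i.e.\ by exactly $M$ harmonic steps), both cases fall out together.
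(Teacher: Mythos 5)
Your proposal is correct and follows essentially the same route as the paper's proof: the alternating sign from the modulation is absorbed as a shift by $\pi$ (equivalently $M$ harmonic steps), each column sum becomes a geometric series in a root of unity, and it vanishes because the relevant exponent ($M+j$, the paper's $2\ell+M-1$) is even while the ratio is not $1$. Your write-up is in fact slightly more complete than the paper's, since you treat both parities of $N$ uniformly and explicitly dispose of the constant first coordinate in the odd-$N$ case, which the paper dismisses as ``similar.''
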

\begin{proof}
We only consider the case when $N$ is even; the $N$ odd case is similar.
For each $\ell \in\{1,2,\ldots,N\}$,
let $\phi_k^\ell$ denote the $\ell$th component of vector $\phi_k$ and let
$S_\ell=\sum_{k=1}^M\phi_k^\ell$ denote the sum of the entries in column $\ell$
of matrix $F$.

For $1\leq \ell \leq N/2$, using Euler's formula, we have
\begin{eqnarray}
S_\ell&=&\pm\sqrt{\frac{2}{N}}\sum_{k=0}^{M-1}(-1)^k \cos{\frac{(2\ell-1)k\pi}{M}}
      \nonumber \\
    &\propto&\sum_{k=0}^{M-1}e^{jk\pi} \left[e^{j\Frac{(2\ell-1)k\pi}{M}}+e^{-j\Frac{(2\ell-1)k\pi}{M}}\right]
      \nonumber \\
    &=&\sum_{k=0}^{M-1}e^{j\pi\left(\Frac{(2\ell-1)}{M}+1\right)k}+\sum_{k=0}^{M-1}e^{-j\pi\left(\Frac{(2\ell-1)}{M}+1\right)k}
      \nonumber \\
    &=&\frac{1-e^{j\pi(2\ell+M-1)}}{1-e^{j\pi\left(\Frac{(2\ell-1)}{M}+1\right)}}+\frac{1-e^{-j\pi(2\ell+M-1)}}{1-e^{-j\pi\left(\Frac{(2\ell-1)}{M}+1\right)}}
      \nonumber \\
    &=&0, \label{eq:zero}
\end{eqnarray}
where (\ref{eq:zero}) follows from the fact that $2\ell+M-1=2\ell+N$ is an even integer.

For $N/2< \ell \leq N$, we can show that $S_\ell =0$ similarly, and so the proposition is proved.
\end{proof}

\subsection{Reconstruction from Frame Expansions}
A central use of frames is to formalize the reconstruction
of $x \in \R^N$ from the frame expansion
$y_k = \ip{x}{\phi_k}$, $k = 1, \, 2, \, \ldots, \, M$,
or estimation of $x$ from degraded versions of the frame expansion.
Using the analysis frame operator we have $y = Fx$,
and (\ref{eq:frameDef}) implies the existence of at least
one linear \emph{synthesis operator} $G$ such that $GF = I_N$.
A frame with analysis frame operator $G^*$ is then said to be \emph{dual}
to $\Phi$.

The frame condition (\ref{eq:frameDef}) also implies that $F^*F$ is invertible,
so 
the Moore--Penrose inverse (pseudo-inverse) of the frame operator
\[
F^\dagger=\pseudo
\]
exists and is a valid synthesis operator.
Using the pseudo-inverse for reconstruction has several important properties
including an optimality for mean-squared error (MSE)
under assumptions of uncorrelated zero-mean additive noise and
linear synthesis~\cite[Sect.~3.2]{Daubechies1992}.
This follows from the fact that $FF^{\dagger}$ is an orthogonal projection from $\R^{M}$ onto the subspace $F(\R^{N})$, the range of $F$.
Because of this special role, reconstruction using $F^\dagger$ is
called \emph{canonical reconstruction}
and the corresponding frame is called the \emph{canonical dual}.
In this paper, we use the term \emph{linear reconstruction}
for reconstruction using an arbitrary linear operator.

When $y$ is quantized to $\hat{y}$, it is possible for the quantization noise
$\hat{y} - y$ to have mean zero and uncorrelated components;
this occurs with subtractive dithered quantization~\cite{GrayS:93}
or under certain asymptotics~\cite{ViswanathanZ:01}.
In this case, the optimality of canonical reconstruction holds.
However, it should be noted that even with these restrictions,
canonical reconstruction is optimal \emph{only amongst linear reconstructions}.

When nonlinear construction is allowed, quantization noise
may behave fundamentally differently than other additive noise.
The key is that a quantized value gives hard constraints that can
be exploited in reconstruction.
For example, suppose that $\hat{y}$ is obtained from $y$ by
rounding each element to the nearest multiple of a quantization
step size $\Delta$.  Then knowledge of $\hat{y}_k$ is equivalent
to knowing
\begin{equation}
\label{eq:qfe-one-interval}
y_k \in [\hat{y}_k - \half \Delta, \, \hat{y}_k + \half \Delta].
\end{equation}
Geometrically,
$\ip{x}{\phi_k} = \hat{y}_k - \Half \Delta$
and
$\ip{x}{\phi_k} = \hat{y}_k + \Half \Delta$
are hyperplanes perpendicular to $\phi_k$,
and (\ref{eq:qfe-one-interval}) expresses that $x$ must lie
between these hyperplanes;
this may be visualized as one pair of parallel lines in Fig.~\ref{fig:qfe}(d).
Using the upper and lower bounds on all $M$ components of $y$,
the constraints on $x$ imposed by $\hat{y}$ are readily expressed
as (see~\cite{GoyalVT1998})
\begin{equation}
\label{eq:qfe-M-intervals}
  \left[ \begin{array}{r} F \\ -F \end{array} \right] x
   \leq 
  \left[ \begin{array}{c} \Half \Delta + \hat{y} \\
                          \Half \Delta - \hat{y} \end{array} \right] ,
\end{equation}
where the inequalities are elementwise.
For example, all $2M$ constraints specify a single cell in Fig.~\ref{fig:qfe}(d).
This formulation inspires Algorithm~\ref{alg:qfe-lp},
which is a modification of~\cite[Table I]{GoyalVT1998} using the
principle of maximizing slackness of inequalities that was also
implemented in~\cite{RanganG:01}.
Section~\ref{sec:linearProg} presents analogues to (\ref{eq:qfe-M-intervals})
and Algorithm~\ref{alg:qfe-lp} for FPQ.

\begin{algorithm}
 \caption{Consistent Reconstruction from Scalar-Quantized Frame Expansion}
 \label{alg:qfe-lp}
  \vspace*{3mm}
   \begin{tabular}{rl}
   {\bf Inputs:} & Analysis frame operator $F$, quantization step size $\Delta$,
      and quantized frame expansion $\hat{y}$ \\
   {\bf Output:} & Estimate $\hat{x}$ consistent with $\hat{y}$ and as far from
      the partition boundaries as possible \\[3mm]
     & 1. Let 
          $A = \left[ \begin{array}{rr} F & 1_{M \times 1} \\
                                       -F & 1_{M \times 1} \end{array} \right]$
          and
          $b = \left[ \begin{array}{r} \hat{y} \\
                                     - \hat{y} \end{array} \right]$. \\
     & $\quad$ (Consistency as in (\ref{eq:qfe-M-intervals}) is expressed as
             $A \left[ \begin{array}{c} x \\ 0 \end{array} \right] \leq b$.) \\
     & 2. Let $c = \left[ \begin{array}{c} 0_{N \times 1} \\
                                          -1 \end{array} \right]$. \\
     & 3. Use a linear programming method to minimize
          $c^T \left[ \begin{array}{c} x \\ \delta \end{array} \right]$
          subject to
          $A \left[ \begin{array}{c} x \\ \delta \end{array} \right] \leq b$. \\
     & $\quad$ Return the first $N$ components of the minimizer as $\hat{x}$. \\
   \end{tabular}
\end{algorithm}

The cost of Algorithm~\ref{alg:qfe-lp} may be prohibitive if $M$ is large.
In particular,
Algorithm~\ref{alg:qfe-lp} uses a linear program with $N+1$ variables and
$2M$ constraints, and solving this has cost that is superlinear in $M$.
One way to reduce the cost to linear in $M$ is to use each of $M$
quantized coefficients only once and in a computation with constant cost.
Algorithm~\ref{alg:Rangan-Goyal} uses each constraint 
(\ref{eq:qfe-one-interval}) once, recursively,
in isolation of all other constraints.
It uses $y_k$ by orthogonally projecting a running estimate $\hat{x}_{k-1}$
to the set consistent with (\ref{eq:qfe-one-interval}).
Remarkably, even though the final estimate is generally not consistent
with all $M$ constraints of the form (\ref{eq:qfe-one-interval}),
the optimal $\Theta(M^{-2})$ decay of $\| x - \hat{x} \|^2$
as a function of the number of coefficients $M$ can be
attained under appropriate technical conditions~\cite{RanganG:01,Powell:10}.

\begin{algorithm}
 \caption{Recursive Estimation from Scalar-Quantized Frame Coefficient Sequence}
 \label{alg:Rangan-Goyal}
  \vspace*{3mm}
   \begin{tabular}{rl}
   {\bf Inputs:} & Analysis frame sequence $\{\phi_j\}_{j=1}^M$,
      quantization step size $\Delta$,
      and quantized coefficients $\{\hat{y}_j\}_{j=1}^M$ \\
   {\bf Output:} & Estimate $\hat{x}$ \\[3mm]
     & 1. Let $\hat{x}_0 = 0_{N \times 1}$ and let $k=1$. \\
     & 2. If      $\ip{\hat{x}_{k-1}}{\phi_k} < \hat{y}_k - \half\Delta$,
          let $\hat{x}_k = \hat{x}_{k-1} +
           \Frac{(\hat{y}_k-\half\Delta-\ip{\hat{x}_{k-1}}{\phi_k})\phi_k}
                {\|\phi_k\|^2}$; \\
     & $\quad$ else if $\ip{\hat{x}_{k-1}}{\phi_k} > \hat{y}_k + \half\Delta$,
          let $\hat{x}_k = \hat{x}_{k-1} +
           \Frac{(\hat{y}_k+\half\Delta-\ip{\hat{x}_{k-1}}{\phi_k})\phi_k}
                {\|\phi_k\|^2}$; \\
     & $\quad$ else let $\hat{x}_{k} = \hat{x}_{k-1}$. \\
     & 3. If $k = M$, return $\hat{x}_k$; else increment $k$ and go to step 2. \\
   \end{tabular}
\end{algorithm}

\section{Frame Permutation Quantization}
\label{sec:framePSC}
With background material on permutation source codes and finite frames
in place, we are now prepared to formally introduce frame permutation
quantization.  FPQ is simply PSC applied to a frame expansion.

\subsection{Encoder Definition and Canonical Decoding}
\label{sec:fpq-encoder}
\begin{defi}
A \emph{frame permutation quantizer} with analysis frame $F \in \R^{M \times N}$,
composition $m = (m_1,\,m_2,\ldots,\,m_K)$, and initial codeword $\yinit$ compatible with $m$ encodes $x \in \R^N$
by applying a permutation source code with composition $m$ and initial codeword $\yinit$ to $Fx$.
The \emph{canonical decoding} gives $\hat{x} = F^{\dagger}\hat{y}$,
where $\hat{y}$ is the PSC reconstruction of $y$.
\end{defi}
The two variants of PSCs yield two variants of FPQ\@.
We sometimes use the triple $\(F,m,\yinit\)$ along with a specification
of Variant I or Variant II to refer to such an FPQ\@.

For Variant I, the result of the encoding can be expressed as a permutation $P$
from the permutation matrices of size $M$.
The permutation is such that $PFx$ is $m$-descending.
For uniqueness in the representation $P$ chosen from the set of permutation matrices, we can specify that the first
$m_1$ components of $Py$ are kept in the same order as they appeared in $y$,
the next $m_2$ components of $Py$ are kept in the same order as they
appeared in $y$, etc.
Then $P$ is in a subset $\mathcal{G}(m)$ of the
$M \times M$ permutation matrices and
\begin{equation}
\label{eq:Gm-size}
 | \mathcal{G}(m) | = \frac{M!}{m_1 ! \, m_2 ! \, \cdots \, m_{K} !}.
\end{equation}
Notice that, analogous to the discussion in Section~\ref{sec:psc-partition},
the encoding uses the composition $m$ but not the initial codeword
$\yinit$.
The PSC reconstruction of $y$ is $P^{-1} \yinit$,
so the canonical decoding gives $F^{\dagger} P^{-1} \yinit$.

For Variant II, we will sidestep the differences between the
$\mu_K = 0$ and $\mu_K \neq 0$ cases in Section~\ref{sec:permutationcodes}
by specifying that the signs of the $m_K$ smallest-magnitude components
of $Fx$ are not encoded and $m_K = 0$ is allowed.
Now the result of encoding can be expressed similarly as
$P \in \mathcal{G}(m)$ along with a diagonal matrix $V$ with $\pm 1$
on its diagonal.
These matrices are selected such that the elementwise absolute values
of $VPFx$ are $m$-descending
and also the first $M-m_K$ entries of $VPFx$ are positive.
The last $m_K$ diagonal entries of $V$ do not affect the encoding
and are set to $+1$.
Thus $V$ is in a subset $\mathcal{Q}(m)$ of the
$M \times M$ sign-changing matrices and
\begin{equation}
 | \mathcal{Q}(m) | = 2^{M-m_K}.
\end{equation}
The PSC reconstruction of $y$ is $P^{-1} V^{-1} \yinit$,
so the canonical decoding gives $F^{\dagger} P^{-1} V^{-1} \yinit$.

The sizes of the sets $\mathcal{G}(m)$ and
$\mathcal{G}(m) \times \mathcal{Q}(m)$ are analogous to the
codebook sizes in (\ref{eq:size12}),
and the per-component rates of FPQ are thus defined as
\begin{subequations}
\label{eq:FPQrate}
\begin{equation}
\RI = N^{-1}\log_2 \frac{M!}{m_1 ! \, m_2 ! \, \cdots \, m_{K} !},
\qquad \mbox{for Variant I,}
\end{equation}
and
\begin{equation}
\RII = N^{-1}\left( M - m_K + \log_2 \frac{M!}{m_1 ! \, m_2 ! \, \cdots \, m_{K} !} \right),
\qquad \mbox{for Variant II.}
\end{equation}
\end{subequations}

\subsection{Expressing Consistency Constraints}
\label{sec:fpq-consistency}

Suppose FPQ encoding of $x \in \R^N$ with frame $F \in \R^{M \times N}$,
composition $m = (m_1,m_2,\ldots,m_K)$, and initial codeword $\yinit$ compatible with $m$
results in permutation $P \in \mathcal{G}(m)$
(and, in the case of Variant II, $V \in \mathcal{Q}(m)$)
as described in Section~\ref{sec:fpq-encoder}.
We would like to express constraints on $x$ that are specified by
$(F,m,\yinit,P)$ (or $(F,m,\yinit,P,V)$).
This will provide an explanation of the partitions induced by FPQ
and lead to reconstruction algorithms in Section~\ref{sec:linearProg}.

Knowing that a vector is $m$-descending 
is a specification of many inequalities.
Recall the definitions of the index sets generated by a composition
given in (\ref{eq:index-sets}), and use the same notation with $n_k$s
replaced by $m_k$s.
Then $z$ being $m$-descending implies that for any $i < j$,
$$
  z_k \geq z_\ell
\qquad
  \mbox{for every $k \in \setI_i$ and $\ell \in \setI_j$}.
$$
By transitivity, considering every $i < j$ gives redundant inequalities.
Taking only $j = i+1$, we obtain a full description
\begin{equation}
\label{eq:m-descending-ineq}
  z_k \geq z_\ell
\qquad
  \mbox{for every $k \in \setI_i$ and $\ell \in \setI_{i+1}$ with $i = 1, 2, \ldots, K-1$}.
\end{equation}
For one fixed $(i,\ell)$ pair, (\ref{eq:m-descending-ineq}) gives
$|\setI_i| = m_i$ inequalities, one for each $k \in \setI_i$.
These inequalities can be gathered into an elementwise matrix inequality as
$$
  \left[ \begin{array}{ccc}
          0_{m_i \times M_{i-1}} & I_{m_i} & 0_{m_i \times (M-M_i)} 
         \end{array} \right] z
   \ \geq \
  \left[ \begin{array}{ccc}
          0_{m_i \times (\ell-1)} & 1_{m_i \times 1} & 0_{m_i \times (M-\ell)} 
         \end{array} \right] z
$$
where $M_k = m_1 + m_2 + \cdots + m_k$,
or $\minus{i,\ell}{m} z \geq 0_{m_i \times 1}$ where
\begin{subequations}
\label{eq:minus}
\begin{equation}
\minus{i,\ell}{m} =
  \left[ \begin{array}{ccccc}
          0_{m_i \times M_{i-1}} & I_{m_i} &
          0_{m_i \times (\ell-M_i-1)} & -1_{m_i \times 1} & 0_{m_i \times M-\ell} 
         \end{array} \right]
\end{equation}
is an $m_i \times M$ differencing matrix.
Allowing $\ell$ to vary across $\setI_{i+1}$, we define
the $m_i m_{i+1} \times M$ matrix
\begin{equation}
  \minus{i}{m} = \left[ \begin{array}{c}
        \minus{i,M_i+1}{m} \\[\xtraV]
        \minus{i,M_i+2}{m} \\[\xtraV]
        \vdots             \\[\xtraV]
        \minus{i,M_i+m_{i+1}}{m} 
              \end{array} \right]
\end{equation}
and express all of (\ref{eq:m-descending-ineq}) for one fixed $i$ as
$\minus{i}{m} z \geq 0_{m_i m_{i+1} \times 1}$.

Continuing our recursion, it only remains to gather the inequalities
(\ref{eq:m-descending-ineq}) across $i \in \{1,2,\ldots,K-1\}$.
Let
\begin{equation}
  \minus{}{m} = \left[ \begin{array}{c}
        \minus{1}{m} \\[\xtraV]
        \minus{2}{m} \\[\xtraV]
        \vdots             \\[\xtraV]
        \minus{K-1}{m} 
              \end{array} \right],
\end{equation}
\end{subequations}
which has
\begin{equation}
\label{eq:L-of-m}
L(m) = \sum_{i=1}^{K-1} m_i m_{i+1}
\end{equation}
rows.
The property of $z$ being $m$-descending can be expressed as
$\minus{}{m} z \geq 0_{L(m) \times M}$.
The following example illustrates the form of $\minus{}{m}$:
$$
  \minus{}{(2,3,2)} = \begin{bmatrix}
     1  &   0  &  -1  &   0  &   0  &   0  &   0 \\
     0  &   1  &  -1  &   0  &   0  &   0  &   0 \\
     1  &   0  &   0  &  -1  &   0  &   0  &   0 \\
     0  &   1  &   0  &  -1  &   0  &   0  &   0 \\
     1  &   0  &   0  &   0  &  -1  &   0  &   0 \\
     0  &   1  &   0  &   0  &  -1  &   0  &   0 \\
     0  &   0  &   1  &   0  &   0  &  -1  &   0 \\
     0  &   0  &   0  &   1  &   0  &  -1  &   0 \\
     0  &   0  &   0  &   0  &   1  &  -1  &   0 \\
     0  &   0  &   1  &   0  &   0  &   0  &  -1 \\
     0  &   0  &   0  &   1  &   0  &   0  &  -1 \\
     0  &   0  &   0  &   0  &   1  &   0  &  -1
                      \end{bmatrix}.
$$
Notice the important property that each row of $\minus{}{m}$ has
one $1$ entry and one $-1$ entry with the remaining entries 0\@.
This will be exploited in Section~\ref{sec:consistency}.

Now we can apply these representations to FPQ\@.

\emph{Variant I}:
In this case, we know $PFx$ is $m$-descending.
Consistency is thus simply expressed as
\begin{equation}
\label{eq:fpq-I-consistency}
  \minus{}{m} P F x \ \geq \ 0.
\end{equation}

\emph{Variant II}:  The second variant has an $m$-descending property
after $V$ has made the signs of the significant frame coefficients
(all but last $m_K$) positive:
$\minus{}{m} V P F x \geq 0$.
In addition, we have the nonnegativity of all of the first $M-m_K$
sorted and sign-changed coefficients.  To specify
$$
  \begin{bmatrix} I_{M-m_K} & 0_{(M-m_K) \times M} \end{bmatrix} VPFx
  \ \geq \ 0_{(M-m_K) \times 1}
$$
is redundant with what is expressed with the $m$-descending property.
The added constraints can be applied only to the entries of $VPFx$
with indexes in $\setI_{K-1}$ because all the earlier entries are
already ensured to be larger.
We thus express consistency as
\begin{equation}
\label{eq:fpq-II-consistency}
  \underbrace{\begin{bmatrix}
     & \minus{}{m} & \\
     0_{m_{K-1} \times M_{K-1}} & I_{m_{K-1}} & 0_{m_{K-1} \times m_K}
  \end{bmatrix}}_{\Minus{}{m}}
  V P F x \ \geq \ 0.
\end{equation}

\subsection{Consistent Reconstruction Algorithms}
\label{sec:linearProg}
The constraints (\ref{eq:fpq-I-consistency}) and (\ref{eq:fpq-II-consistency})
both specify unbounded sets, as discussed previously and illustrated
in Fig.~\ref{fig:qfe}(e) and~(f).
To be able to decode FPQs in analogy to Algorithm~\ref{alg:qfe-lp},
we require some additional constraints.  We develop two examples:
a source $x$ bounded to $[-\half,\half]^N$
(e.g., having an i.i.d.\ uniform distribution over $[-\half,\half]$)
or having an i.i.d.\ standard Gaussian distribution.
For the remainder of this section, we consider only Variant I because
adjusting for Variant II using (\ref{eq:fpq-II-consistency}) is easy.

\subsubsection{Source Bounded to $[-\half,\half]^N$}
To impose (\ref{eq:fpq-I-consistency}) along with $x \in [-\half,\half]^N$
is trivial because $x \in [-\half,\half]^N$ is decomposable into $2N$
inequality constraints:
$$
  \left[ \begin{array}{r} I_N \\ -I_N \end{array} \right]  x 
    \ \leq \ \Half 
  \left[ \begin{array}{r} 1_{N \times 1} \\ -1_{N \times 1} \end{array} \right].
$$
A linear programming formulation will return some corner of the consistent
set, depending on the choice of cost function.
The unknown vector $x$ can be augmented with a variable $\delta$
that represents the slackness of the inequality constraint with the
least slack.  Maximizing $\delta$ moves the solution away from the boundary of
the consistent set (partition cell) as much as possible.
Reconstruction using this principle is outlined in Algorithm~\ref{alg:varI-uniform}.

\begin{algorithm}
 \caption{Estimation of Source on $[-\half,\half]^N$ for Variant I Frame Permutation Quantization}
 \label{alg:varI-uniform}
  \vspace*{3mm}
   \begin{tabular}{rl}
   {\bf Inputs:} & Analysis frame operator $F$, composition $m$,
      and FPQ encoding $P$ \\
   {\bf Output:} & Estimate $\hat{x}$ consistent with $(F,m,P)$ and as far from
      the partition boundaries as possible \\[3mm]
     & 1. Let 
          $A = \left[ \begin{array}{rr}
               - \minus{}{m} P F & 1_{L(m) \times 1} \\
                            -I_N & 1_{N \times 1} \\
                             I_N & 1_{N \times 1}
                      \end{array} \right]$
          and
          $b = \Half
               \left[ \begin{array}{r}
                   0_{L(m) \times 1} \\
                   1_{2N \times 1}
                      \end{array} \right]$, \\
     & $\quad$ where $\minus{}{m}$ is defined in 
                (\ref{eq:minus}) and $L(m)$ is defined in (\ref{eq:L-of-m}). \\
     & $\quad$ (Consistency with (\ref{eq:fpq-I-consistency}) and
          $x \in [-\half,\half]^N$ is expressed as
             $A \left[ \begin{array}{c} x \\ 0 \end{array} \right] \leq b$.) \\
     & 2. Let $c = \left[ \begin{array}{c} 0_{N \times 1} \\
                                          -1 \end{array} \right]$. \\
     & 3. Use a linear programming method to minimize
          $c^T \left[ \begin{array}{c} x \\ \delta \end{array} \right]$
          subject to
          $A \left[ \begin{array}{c} x \\ \delta \end{array} \right] \leq b$. \\
     & $\quad$ Return the first $N$ components of the minimizer as $\hat{x}$. \\
   \end{tabular}
\end{algorithm}

If the source $x$ is random and the distribution $p(x)$ is known,
then one could optimize some criterion explicitly.
For example, one could maximize $p(x)$ over the consistent set
or compute the centroid of the consistent set with respect to $p(x)$.
This would improve upon reconstructions computed with
Algorithm~\ref{alg:varI-uniform} but presumably increase complexity greatly.

\subsubsection{Source with i.i.d.\ Standard Gaussian Distribution}
Suppose $x$ has i.i.d.\ Gaussian
components with mean zero and unit variance.
Since the source support is unbounded, something beyond consistency
must be used in reconstruction. 
Here we use a quadratic program to find a good bounded, consistent
estimate and combine this with the average value of $\|x\|$.

The problem with using
(\ref{eq:fpq-I-consistency}) combined with maximization of minimum
slackness alone (without any additional boundedness constraints)
is that for any purported solution, multiplying by a scalar larger
than 1 will increase the slackness of all the constraints.
Thus, any solution technique will naturally and correctly have
$\| \hat{x} \| \rightarrow \infty$.
Actually, because the partition cells are convex cones, we should
not hope to recover the radial component of $x$ from the partition.
Instead, we should only hope to recover a good estimate of $x / \| x \|$.

To estimate the angular component $x / \| x \|$ from the partition,
it would be convenient to maximize minimum slackness
while also imposing a constraint of $\| \hat{x} \| = 1$.
Unfortunately, this is a nonconvex constraint.
It can be replaced by $\| \hat{x} \| \leq 1$ because slackness is
proportional to $\| \hat{x} \|$.
This suggests the optimization
$$
  \mbox{maximize $\delta$ subject to $\| x \| \leq 1$ and
  $\minus{}{m} P F x \ \geq \ \delta 1_{L(m) \times 1}$}.
$$
Denoting the $x$ at the optimum by $\hat{x}_{\rm ang}$,
we still need to choose the radial component, or length, of $\hat{x}$.

For the $\mathcal{N}(0,I_N)$ source, the mean length is~\cite{Sakrison:68}
$$
  E[\| x \|] = \frac{\sqrt{2\pi}}{\beta(N/2,1/2)} \approx \sqrt{N - 1/2}.
$$
We can combine this with $\hat{x}_{\rm ang}$ to obtain a reconstruction
$\hat{x}$.

We use a slightly different formulation to have a quadratic program
in standard form.
We combine the radial component constraint with the goal of maximizing
slackness to obtain
$$
  \mbox{minimize $\half x^T x - \lambda \delta$ subject to 
  $- \minus{}{m} P F x \ \leq \ -\delta 1_{L(m) \times 1}$},
$$
where $\lambda$ trades off slackness against the radial component of $x$.
Since the radial component will be replaced with its expectation,
the choice of $\lambda$ is immaterial;
it is set to 1 in Algorithm~\ref{alg:varI-Gaussian}.

\begin{algorithm}
 \caption{Estimation of $\mathcal{N}(0,I_N)$ Source for Variant I Frame Permutation Quantization}
 \label{alg:varI-Gaussian}
  \vspace*{3mm}
   \begin{tabular}{rl}
   {\bf Inputs:} & Analysis frame operator $F$, composition $m$,
      and FPQ encoding $P$ \\
   {\bf Output:} & Estimate $\hat{x}$ consistent with $(F,m,P)$ and as far from
      the partition boundaries as possible \\
                 & while keeping $\|\hat{x}\| = E[\|x\|]$ \\[3mm]
     & 1. Let 
          $A = \left[ \begin{array}{rr}
               - \minus{}{m} P F & 1_{L(m) \times 1}
                      \end{array} \right]$
          and
          $b = 0_{L(m) \times 1}$, \\
     & $\quad$ where $\minus{}{m}$ is defined in 
                (\ref{eq:minus}) and $L(m)$ is defined in (\ref{eq:L-of-m}). \\
     & $\quad$ (Consistency with (\ref{eq:fpq-I-consistency})
          is expressed as
             $A \left[ \begin{array}{c} x \\ 0 \end{array} \right] \leq b$.) \\
     & 2. Let $c = \left[ \begin{array}{c} 0_{N \times 1} \\
                                          -1 \end{array} \right]$
          and $H = \begin{bmatrix}
                      I_N & 0_{N \times 1} \\
                  0_{1 \times N} & 0
                   \end{bmatrix}$. \\
     & 3. Use a quadratic programming method to minimize
          $\half \left[ \begin{array}{c} x \\ \delta \end{array} \right]^T
             H \left[ \begin{array}{c} x \\ \delta \end{array} \right] +
           c^T \left[ \begin{array}{c} x \\ \delta \end{array} \right]$ \\
     & $\quad$     subject to
          $A \left[ \begin{array}{c} x \\ \delta \end{array} \right] \leq b$. 
     Denote the first $N$ components of the minimizer as $\hat{x}_{\rm ang}$. \\
     & 4. Return $(\sqrt{2\pi}/\beta(N/2,1/2)) \, \hat{x}_{\rm ang}$.
   \end{tabular}
\end{algorithm}

\subsection{Recursive Estimation Algorithms}
\label{sec:recursive}

Algorithms~\ref{alg:varI-uniform} and~\ref{alg:varI-Gaussian}
are practical for small values of $N$ and $M$,
as one would encounter in data compression applications,
but not for large values of $N$ and $M$,
as may be of interest in data acquisition.
Specifically, Algorithm~\ref{alg:varI-uniform} uses a linear program with
$N+1$ variables and $L(m)+2N$ constraints,
while Algorithm~\ref{alg:varI-Gaussian} uses a quadratic program with
$N+1$ variables and $L(m)$ constraints.
The costs of these computations are superlinear in $L(m)$,
and $L(m)$ is at least $M-1$.

One way to lower the reconstruction complexity is to sacrifice global
consistency in favor of recursive computability,
in analogy to Algorithm~\ref{alg:Rangan-Goyal}.
For recursive algorithms, we restrict our attention to the composition
$m = (1,\,1,\,\ldots,\,1)$.  We also again restrict our attention to
Variant I because adjusting for Variant II is straightforward.

With all-1s compositions, FPQ encoding produces a
\emph{successive} or \emph{embedded} representation of $x$:
a representation with a $k$-element analysis frame is a ranking of
$\{\ip{x}{\phi_j}\}_{j=1}^k$ (for Variant I),
and adding a vector $\phi_{k+1}$ to the analysis frame
amounts to inserting $\ip{x}{\phi_{k+1}}$ in the ranked list.
Equivalently, the encoding of $x$ with a $k$-element frame is the set
$$
  \sign(\ip{x}{\phi_i}-\ip{x}{\phi_j}),
\qquad
  i,j \in \{1,\,2,\,\ldots,\,k\},
$$
and adding $\phi_{k+1}$ to the analysis frame adds
$$
  \sign(\ip{x}{\phi_{k+1}}-\ip{x}{\phi_j}),
\qquad
  j \in \{1,\,2,\,\ldots,\,k\},
$$
to the representation without removing any of the previous information.

In estimating $x$ from FPQ with the full $(k+1)$-element analysis frame,
one could impose
\begin{equation}
\label{eq:nonrecursive-constraint}
  \sign(\ip{\hat{x}}{\phi_i-\phi_j}) = \sign(\ip{x}{\phi_i-\phi_j}),
\qquad
  i,j \in \{1,\,2,\,\ldots,\,k+1\}
\end{equation}
(equivalent to (\ref{eq:fpq-I-consistency}))
for \emph{global} consistency.
However, for a recursive computation we will compute an estimate $\hat{x}_{k+1}$
from an estimate $\hat{x}_k$ and some subset of the constraints
\begin{equation}
\label{eq:recursive-constraint}
  \sign(\ip{\hat{x}}{\phi_{k+1}-\phi_j}) = \sign(\ip{x}{\phi_{k+1}-\phi_j}),
\qquad
  j \in \{1,\,2,\,\ldots,\,k\}.
\end{equation}
Updating $\hat{x}_k$ to satisfy any of the constraints
(\ref{eq:recursive-constraint}) can cause constraints
(\ref{eq:nonrecursive-constraint}) with $i < k+1$ to be violated,
so a strategy of imposing \emph{local} consistency does not ensure
global consistency.
However, we will demonstrate by extending results
from~\cite{RanganG:01,Powell:10} that optimal MSE decay as a function
of $M$ can still be obtained.

A recursive computation that uses local consistency is described
explicitly in Algorithm~\ref{alg:varI-recursive}.
For each $k$, the set $\setJ_k$ represents the indexes $j$ for
which the constraint (\ref{eq:recursive-constraint}) is used.
Any one of them is used (in Step 3b)
by orthogonally projecting the running estimate
$\hat{x}_{k-1}$ to the half-space consistent with
(\ref{eq:recursive-constraint}).
This yields a monotonicity result analogous to~\cite[Thm.~1]{RanganG:01}
and~\cite[Lem.~7.2]{Powell:10}:

\begin{algorithm}
 \caption{Recursive Estimation for Variant I Frame Permutation Quantization}
 \label{alg:varI-recursive}
  \vspace*{3mm}
   \begin{tabular}{rl}
   {\bf Inputs:} & Analysis frame sequence $\{\phi_j\}_{j=1}^M$,
                   FPQ representation $\nu_{i,j} = \sign(\ip{x}{\phi_i-\phi_j})$
                   for $j=1,2,\ldots,M$, \\ & $i=1,2,\ldots,j-1$, and
                   index sets $\setJ_2,\,\setJ_3,\,\ldots,\,\setJ_M$ with
                   $\setJ_k \subset \{1,\,2,\,\ldots,\,k-1\}$ \\
   {\bf Output:} & Estimate $\hat{x}_M$ \\[3mm]
     & 1. Let $\hat{x}_1$ be a vector chosen uniformly at random from the
          unit sphere in $\R^N$ and let $k=2$. \\
     & 2. Let $\hat{x}_k = \hat{x}_{k-1}$. \\
     & 3. For each $j \in \setJ_k$, taken in random order: \\
     & $\quad$ 3a. Let $\psi_{k,j} = \phi_k - \phi_j$. \\
     & $\quad$ 3b. If $\sign(\ip{\hat{x}_{k}}{\psi_{k,j}}) \neq \nu_{k,j}$,
          update $\hat{x}_k$ to
          $\hat{x}_{k} - \Frac{\ip{\hat{x}_{k}}{\psi_{k,j}}\psi_{k,j}}{\|\psi_{k,j}\|^2}$. \\
     & 4. If $k = M$, return $\hat{x}_k/\|\hat{x}_k\|$;
          else increment $k$ and go to step 2. \\
   \end{tabular}
\end{algorithm}

\begin{thm}\label{thm:recursive-monotonicity}
Let $x \in \R^N$ be a unit vector.  The sequence of estimates produced
in Algorithm~\ref{alg:varI-recursive} satisfies
$$
  \| x - \hat{x}_{k+1} \| \leq \| x - \hat{x}_k \|.
$$
\end{thm}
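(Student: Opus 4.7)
The plan is to show that every individual update performed inside Step~3 of Algorithm~\ref{alg:varI-recursive} is nonexpansive with respect to the (unknown) target $x$, and then to chain these inequalities together as $j$ ranges over $\setJ_{k+1}$. This is a direct adaptation to FPQ of the projection-onto-convex-sets / Kaczmarz-style argument underlying the analogous monotonicity results in~\cite{RanganG:01,Powell:10}.

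First I would fix $k$ and a single iteration of Step 3b occurring while computing $\hat{x}_{k+1}$. Write $\psi = \psi_{k+1,j} = \phi_{k+1} - \phi_j$, let $\hat{x}$ denote the running estimate just before the update, and let $\hat{x}' = \hat{x} - c\psi$ with $c = \ip{\hat{x}}{\psi}/\|\psi\|^2$ denote the estimate just after. Geometrically, $\hat{x}'$ is the orthogonal projection of $\hat{x}$ onto the hyperplane $H = \{z \in \R^N : \ip{z}{\psi} = 0\}$. The update is invoked only when $\sign(\ip{\hat{x}}{\psi}) \neq \nu_{k+1,j} = \sign(\ip{x}{\psi})$, that is, precisely when $\hat{x}$ and $x$ lie on opposite (closed) sides of $H$.

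The core one-line calculation is
\begin{equation*}
  \|x - \hat{x}\|^2 - \|x - \hat{x}'\|^2 = c^2 \|\psi\|^2 - 2c\,\ip{x - \hat{x}'}{\psi} = c^2 \|\psi\|^2 - 2c\,\ip{x}{\psi},
\end{equation*}
where the second equality uses $\ip{\hat{x}'}{\psi} = 0$. The first term on the right is nonnegative, and since $\sign(c) = \sign(\ip{\hat{x}}{\psi})$ is opposite to $\sign(\ip{x}{\psi})$ by the trigger condition, the product $c\,\ip{x}{\psi}$ is nonpositive, so the second term is also nonnegative. Hence $\|x - \hat{x}'\| \leq \|x - \hat{x}\|$, i.e., each update is a nonexpansion toward $x$.

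To finish, I would note that $\hat{x}_{k+1}$ is obtained from $\hat{x}_k$ by Step~2 (which sets $\hat{x}_{k+1} = \hat{x}_k$, preserving the distance to $x$) followed by a finite sequence of updates of the form above, one for each $j \in \setJ_{k+1}$ in the chosen random order. Composing the single-step nonexpansion along this sequence yields $\|x - \hat{x}_{k+1}\| \leq \|x - \hat{x}_k\|$. There is no real obstacle here; the only mildly delicate point is the boundary case $\ip{x}{\psi} = 0$, where $c \cdot \ip{x}{\psi} \leq 0$ is automatic and the argument goes through unchanged. Note the result is purely geometric and makes no use of the frame structure of $\{\phi_j\}$, so it holds independently of whether the final estimate is globally consistent.
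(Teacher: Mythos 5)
Your argument is correct and is essentially the paper's own proof: the paper disposes of this in one line by noting that Step~3b is an orthogonal projection onto a convex set (the half-space $\{z:\sign(\ip{z}{\psi_{k,j}})=\nu_{k,j}\}$) containing $x$, hence nonexpansive toward $x$, and your explicit computation of $\|x-\hat{x}\|^2-\|x-\hat{x}'\|^2=c^2\|\psi\|^2-2c\,\ip{x}{\psi}\geq 0$ is just that same fact worked out by hand, with the chaining over $j\in\setJ_{k+1}$ handled the same way.
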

\begin{proof}
  Since Step 3b is an orthogonal projection to a convex set containing $x$,
  no occurrence of this step can increase the estimation error.
\end{proof}

The number of projection steps in Algorithm~\ref{alg:varI-recursive}
depends on the sizes of the $\setJ_k$s.
At one extreme, each $\setJ_k$ is a singleton so there are $M-1$ projections.
At the other extreme, each $\setJ_k$ has $k-1$ elements and there are
$\half M (M-1)$ projections.
The empirical behavior in Section~\ref{sec:numerical-recursive}
shows that the MSE decays as the square of the number of projection steps.
This behavior is provable in some cases.  One such result is
the following theorem, analogous to~\cite[Thm.~2]{RanganG:01}
and presumably extendable to match~\cite[Thm.~7.9]{Powell:10}:

\begin{thm}\label{thm:recursive-rate}
Let $x$ and $\{\phi_k\}_{k=1}^\infty$ be i.i.d.\ vectors drawn from the
uniform distribution on the unit sphere in $\R^N$, and
for each $k \in \{2,\,3,\,\ldots\}$, let $\setJ_k = \{1\}$.
Then the normalized sequence of estimates produced by
Algorithm~\ref{alg:varI-recursive} satisfies
\begin{equation}
 \label{eq:recursive-decay}
  \| x - \hat{x}_k \|^2 \, k^p \rightarrow 0
\quad
\mbox{almost surely, for every $p < 2$}.
\end{equation}
\end{thm}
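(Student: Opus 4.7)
The plan is to adapt the proof of Theorem~2 of~\cite{RanganG:01} (and its sharpening in~\cite{Powell:10}) to the FPQ setting by observing that, with $\setJ_k=\{1\}$, the algorithm is exactly a one-bit sign-quantized recursive estimator whose effective measurement directions are the \emph{difference vectors} $\psi_k = \phi_k - \phi_1$. Indeed, the constraint used at iteration $k$ is $\sign(\ip{x}{\phi_k-\phi_1})$, and Step~3b performs the orthogonal projection of $\hat x_{k-1}$ onto the half-space consistent with this single sign bit. So up to a change of measurement distribution, the problem falls squarely into the framework those papers treat.

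First I would condition on $\phi_1$ and work with the sequence $\psi_2,\psi_3,\ldots$. Conditionally on $\phi_1$, the $\phi_k$ for $k\ge 2$ remain i.i.d.\ uniform on $S^{N-1}$, so the normalized directions $\tilde\psi_k := \psi_k/\|\psi_k\|$ are conditionally i.i.d.\ with some distribution $\rho_{\phi_1}$ on $S^{N-1}$. Two properties of $\rho_{\phi_1}$ are easy to verify and are what the rate argument needs: (i) $\rho_{\phi_1}$ is supported in the closed hemisphere $\{z:\ip{z}{\phi_1}\le 0\}$ (because $\ip{\phi_k-\phi_1}{\phi_1}=\ip{\phi_k}{\phi_1}-1\le 0$), and (ii) $\rho_{\phi_1}$ is rotationally symmetric about the axis $-\phi_1$. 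Property (ii) is the key asset: it reduces every required expectation to an integral over a single angle, exactly as in the uniform case.

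Next I would set up the error dynamics. Writing $e_k = x - \hat x_k$ and $\theta_k = \angle(x,\hat x_k)$, Theorem~\ref{thm:recursive-monotonicity} already gives monotonicity of $\|e_k\|$, and the single-projection update yields the exact recursion
\[
 \|e_k\|^2 \;=\; \|e_{k-1}\|^2 \;-\; \bigl(\ip{\hat x_{k-1}}{\tilde\psi_k}\bigr)^2 \cdot \1_{E_k},
\]
where $E_k$ is the event that the signs of $\ip{x}{\tilde\psi_k}$ and $\ip{\hat x_{k-1}}{\tilde\psi_k}$ disagree. Geometrically, $E_k$ is the event that $\tilde\psi_k$ falls in the ``lune'' of directions separating $x$ from $\hat x_{k-1}$. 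Working in the two-dimensional plane spanned by $x$ and $\hat x_{k-1}$ and exploiting the symmetry of $\rho_{\phi_1}$, a short calculation gives $P(E_k \mid \phi_1,\hat x_{k-1}) = c_1(\phi_1,N)\,\theta_{k-1} + O(\theta_{k-1}^2)$ and $E\bigl[(\ip{\hat x_{k-1}}{\tilde\psi_k})^2 \1_{E_k} \mid \phi_1,\hat x_{k-1}\bigr] = c_2(\phi_1,N)\,\theta_{k-1}^3 + o(\theta_{k-1}^3)$, with $c_1,c_2$ strictly positive for a.e.\ $\phi_1$. Combining these gives a one-step contraction of the form $E[\theta_k^2\mid\mathcal F_{k-1}] \le \theta_{k-1}^2 - c(\phi_1,N)\theta_{k-1}^3$ (conditional on $\phi_1$ and the past).

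Finally, I would convert the deterministic cubic recursion into the advertised almost-sure rate. Rangan and Goyal handle exactly this step by a nonnegative supermartingale argument applied to $\theta_k^2 k^p$ for $p<2$; the recursion $\theta_k^2\lesssim\theta_{k-1}^2-c\theta_{k-1}^3$ together with Borel--Cantelli gives $\theta_k = O(k^{-1})$ almost surely, which sharpens to $\theta_k^2 k^p\to 0$ a.s.\ for every $p<2$. After normalization in Step~4 of the algorithm, $\|x-\hat x_k/\|\hat x_k\|\|\le 2\sin(\theta_k/2)\le\theta_k$, so (\ref{eq:recursive-decay}) follows. Removing the conditioning on $\phi_1$ is immediate because the rate holds for a.e.\ $\phi_1$. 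The main technical obstacle is Step~3, namely verifying that the hemispherical, non-uniform distribution $\rho_{\phi_1}$ still produces the correct cubic-in-$\theta_{k-1}$ expected decrement with positive constants; once those two constants are pinned down, the martingale machinery of~\cite{RanganG:01,Powell:10} carries the rest.
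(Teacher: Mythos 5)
Your proposal is correct and follows essentially the same route as the paper's proof: with $\setJ_k=\{1\}$ the FPQ bits are treated as sign-quantized measurements along the difference vectors $\psi_{k,1}=\phi_k-\phi_1$, one conditions on $\phi_1$ so that these vectors become conditionally i.i.d.\ with enough (rotational/hemispherical) symmetry, the Rangan--Goyal/Powell recursive-consistency machinery gives the rate conditionally, and the conditioning is removed because the almost-sure statement holds for a.e.\ $\phi_1$. The only substantive difference is one of granularity: the paper simply invokes \cite[Thm.~2]{RanganG:01} after loosening its bounded-noise assumption to one-sided intervals, whereas you re-derive its interior (lune probability linear in the angle, cubic expected decrement, supermartingale argument) for the non-uniform hemispherical direction law---a reasonable plan, though note your ``exact recursion'' for $\|e_k\|^2$ is really an inequality (you drop a nonpositive cross term), which is the direction needed anyway.
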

\begin{proof}
We give only a brief sketch of a proof since the main ideas have been developed
by Rangan and Goyal~\cite{RanganG:01} and Powell~\cite{Powell:10}.

According to~\cite[Thm.~2]{RanganG:01},
Algorithm~\ref{alg:Rangan-Goyal} gives performance satisfying
(\ref{eq:recursive-decay}) under the following assumptions:
\begin{enumerate}
\item Quantization noise is on a known, bounded interval;
\item the frame sequence is i.i.d.\ and independent of the quantization noise;
      and
\item the frame vectors are bounded and satisfy
      $$E[|\ip{z}{\phi_k}|] \geq \varepsilon \| z \| \quad
      \mbox{for all $z \in \R^N$}
      $$
      for some $\varepsilon > 0$.
\end{enumerate}
Assumption 1 can be loosened to quantization noise known to lie in
$[-1,\infty)$ or $(-\infty,1]$ without changing the conclusion that
(\ref{eq:recursive-decay}) holds;
it is qualitatively like discarding half of the quantized frame coefficients
since each frame coefficient gives one half-space constraint rather than two.
Assumption 3 is a very mild condition that simply ensures that there is
no nonzero vector $z \in \R^N$ such that all of the probability mass of the
frame vector distribution is orthogonal to $z$; such orthogonality implies
$\{\ip{x}{\phi_k}\}_{k=1}^\infty$ gives no information about the component
of $x$ in the subspace generated by $z$ and hence makes recovery
from the inner product sequence impossible even without quantization.

An FPQ representation is through $\sign(\ip{x}{\psi_{k,j}})$ where
$\psi_{k,j} = \phi_k - \phi_j$, $j \in \setJ_k$, and $k = 2,\,3,\,\ldots$.
This is equivalent to a quantized
frame expansion with analysis vectors $\psi_{k,j}$
and quantization noise bounded to $[-1,\infty)$
when the signum function returns $1$ and
bounded to $(-\infty,1]$ when the signum function returns $-1$.
Due to the symmetric distribution of $x$ and its independence of
the frame sequence,
the quantization noise is independent of the frame sequence.

We are considering reconstruction where each $\setJ_k$ is $\{1\}$,
so we are interested in whether quantized versions of
$\{\ip{x}{\psi_{k,1}}\}$ are adequate to ensure the error decay
(\ref{eq:recursive-decay}).
While the vectors $\{\psi_{k,1}\}_{k=2}^\infty$ are identically distributed
and (through spherical symmetry) satisfy Assumption~3 above,
they are not independent.
Nevertheless, we can employ~\cite[Thm.~2]{RanganG:01} as follows:
Conditioned on any value of $\phi_1$, the vectors
$\{\psi_{k,1}\}_{k=2}^\infty$ are conditionally independent, so 
by eliminating the radial component of $x$,
one can conclude that the error decay (\ref{eq:recursive-decay}) holds
conditionally.
Since the almost sure convergence in (\ref{eq:recursive-decay}) holds
under every conditional probability law specified by $\phi_1$,
it must also hold unconditionally.
\end{proof}

Extensions of~\cite[Thm.~2]{RanganG:01} and~\cite[Thm.~7.9]{Powell:10}
to non-i.i.d.\ frame sequences would lead to extensions of
Theorem~\ref{thm:recursive-rate} beyond singleton $\setJ_k$s.

\section{Conditions on the Choice of Frame}
\label{sec:consistency}
In this section, we provide necessary and sufficient conditions so that a
linear reconstruction is also consistent. We first consider a general linear
reconstruction, $\hat{x}=R\hat{y}$, where $R$ is some $N \times M$ matrix
and $\hat{y}$ is a decoding of the PSC of $y$.
We then restrict attention to canonical reconstruction, where $R=F^{\dagger}$.
For each case, we describe all possible choices of a ``good'' frame $F$,
in the sense of the consistency of the linear reconstruction.

\subsection{Arbitrary Linear Reconstruction}
We begin by introducing a useful term.

\begin{defi}
A matrix is called \emph{column-constant} when each column of the
matrix is a constant.  The set of all $M \times M$ column-constant
matrices is denoted $\mathcal{J}$.
\end{defi}

We now give our main results for arbitrary linear reconstruction
combined with FPQ decoding of an estimate of $y$.

\begin{thm}\label{thm:sufficient}
Suppose $A = FR = \c\id_{M}+J$ for some $\c\geq 0$ and $J \in\mathcal{J}$.
Then the linear reconstruction $\hat{x}=R\hat{y}$ is consistent with
Variant~I FPQ encoding using frame $F$, an arbitrary composition and an arbitrary Variant~I initial codeword compatible with it.
\end{thm}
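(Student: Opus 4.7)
The plan is to verify the theorem directly from the encoder definition: we must show that the Variant~I FPQ encoding of $\hat{x} = R\hat{y}$ yields the same permutation $P \in \mathcal{G}(m)$ as that of $x$. By the description of the encoder in Section~\ref{sec:fpq-encoder}, this reduces to checking that $PF\hat{x}$ is $m$-descending, since $P$ was defined (together with the tie-breaking convention built into $\mathcal{G}(m)$) so that $PFx$ has this property. The PSC reconstruction of $y$ is $\hat{y} = P^{-1}\yinit$, so $\hat{x} = R P^{-1} \yinit$ and the object to analyze is $PFR P^{-1}\yinit = PAP^{-1}\yinit$.

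First I would plug in $A = a\id_M + J$ and split:
\[
    PF\hat{x} \;=\; P(a\id_M + J)P^{-1}\yinit \;=\; a\,\yinit \;+\; PJP^{-1}\yinit.
\]
Next I would exploit the column-constant structure. Writing $J = 1_{M\times 1}\, c^{T}$ for the row vector $c^T = [c_1,\ldots,c_M]$ whose entries are the constant column values of $J$, we get $J \hat{y} = (c^T \hat{y})\, 1_{M\times 1}$, which is a constant vector. Every permutation matrix fixes a constant vector, so $PJ\hat{y} = P J P^{-1} \yinit = \alpha\, 1_{M\times 1}$ with $\alpha = c^T \hat{y}$. Combining,
\[
    PF\hat{x} \;=\; a\,\yinit \;+\; \alpha\, 1_{M\times 1}.
\]

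Finally I would verify that the right-hand side is $m$-descending. Since $\yinit$ is compatible with the composition $m$, its first $m_1$ entries equal $\mu_1$, the next $m_2$ entries equal $\mu_2$, and so on, with $\mu_1 > \mu_2 > \cdots > \mu_K$. Multiplying by $a \ge 0$ gives the block-constant vector whose blocks have values $a\mu_1 \ge a\mu_2 \ge \cdots \ge a\mu_K$, and adding the scalar $\alpha$ to every entry preserves this block structure and the (weak) block ordering. Therefore $PF\hat{x}$ is $m$-descending, which (up to the tie-breaking convention used to pick a unique element of $\mathcal{G}(m)$) means the encoder assigns $\hat{x}$ to the same permutation $P$, establishing consistency.

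The argument is short, and the main conceptual step is the realization that column-constancy of $J$ turns its contribution into a uniform shift that commutes with $P$ and is invisible to the order-based encoder; once this is seen, the rest is just the elementary observation that scaling by a nonnegative factor and shifting by a constant preserve the $m$-descending property. The only mild subtlety is the degenerate case $a=0$, where $PF\hat{x}$ is a constant vector and every permutation is technically a valid encoding; this is absorbed by the tie-breaking convention introduced in Section~\ref{sec:preview} and does not obstruct consistency as defined.
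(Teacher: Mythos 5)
Your proposal is correct and follows essentially the same route as the paper: you compute $PF\hat{x} = PAP^{-1}\yinit$, split $A = \c\id_M + J$, and observe that the $J$-part is invisible to the order-based test while $\c\,\yinit$ with $\c \ge 0$ preserves the $m$-descending structure. The only difference is cosmetic: you realize the column-constant part as the rank-one factorization $J = 1_{M\times 1}c^{T}$ producing a uniform shift, whereas the paper encodes the same facts as properties (P1) ($PJP^{-1}\in\mathcal{J}$) and (P2) ($\minus{}{m}J = 0$) and verifies the inequality $\minus{}{m}PF\hat{x} \ge 0$ directly.
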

\begin{proof}
We start the proof by pointing out two special properties of
any matrix $J \in \mathcal{J}$:
\begin{eqnarray*}
  \mbox{(P1)}&& P J P^{-1} \in\mathcal{J} \qquad \mbox{for any permutation matrix $P$; and} \\
  \mbox{(P2)}&& \minus{}{m} J = 0_{L(m) \times 1}
     \qquad \mbox{for any composition $m$.}
\end{eqnarray*}
(P1) follows from the fact that neither left multiplying by $P$
  nor right multiplying by $P^{-1}$ disturbs column-constancy.
(P2) is true because each row of $\minus{}{m}$ has zero entries except for
one $1$ and one $-1$.

Suppose $m=\(m_1,m_2,\ldots,m_K\)$ is an arbitrary composition of $M$ and $\yinit$ is an arbitrary Variant~I initial codeword compatible with $m$.
Let $P$ be the Variant I FPQ encoding of $x$ using $\(F,m,\yinit\)$.
We would like to check that $\hat{x} = R\hat{y}$ is consistent with the encoding $P$.
This is verified through the following computation:
\begin{eqnarray}
\minus{}{m} P F \hat{x}
   &=& \minus{}{m} P F R \hat{y} \nonumber\\
   &=& \minus{}{m} P F R P^{-1} \yinit \label{eq:suffI-1} \\
   &=& \minus{}{m} P A P^{-1} \yinit \nonumber \\
   &=& \minus{}{m} P \left(\c\id_{M}+J\right) P^{-1} \yinit \label{eq:hypo} \\
   &=& \c\,\minus{}{m}\,\yinit + \minus{}{m}\hat{J}\yinit \qquad
         \mbox{for some $\hat{J} \in \mathcal{J}$} \label{eq:p1}\\
   &=& \c\,\minus{}{m}\,\yinit \label{eq:p2}\\
   &\geq & 0_{L(m) \times 1}, \label{eq:positive2}
\end{eqnarray}
where (\ref{eq:suffI-1}) uses the conventional decoding of a PSC;
(\ref{eq:hypo}) follows from the hypothesis of the theorem on $A$;
(\ref{eq:p1}) follows from (P1);
(\ref{eq:p2}) follows from (P2);
and (\ref{eq:positive2}) follows from the definition of Variant I initial codewords compatible with $m$,
and the nonnegativity of $\c$.  This completes the proof.
\end{proof}

The key point of the proof of Theorem~\ref{thm:sufficient} is showing that the
inequality
\begin{equation}
\minus{}{m} P A P^{-1} \yinit \geq 0 \label{eq:keyIneq},
\end{equation}
where $A=FR$, holds for every composition $m$ and every initial codeword $\yinit$ compatible with it.
It turns out that the form of matrix $A$ given in Theorem~\ref{thm:sufficient}
is the unique form that guarantees that (\ref{eq:keyIneq}) holds for every pair $(m,\yinit)$.
In other words, the condition on $A$ that is sufficient for any composition $m$ and any initial codeword $\yinit$ compatible with it is also necessary for consistency for every pair $(m,\yinit)$.

\begin{thm}\label{thm:necessary}
Consider Variant I FPQ using frame $F$ with $M\geq 3$.
If linear reconstruction $\hat{x}=R\hat{y}$ is consistent with every composition and every Variant I initial codeword compatible with it, then matrix $A=FR$ must be of the form
$\c\id_M + J$, where $\c\geq 0$ and $J \in \mathcal{J}$.
\end{thm}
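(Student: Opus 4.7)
The plan is to specialize (\ref{eq:keyIneq}) to judiciously chosen triples $(m,\yinit,P)$, extract scalar inequalities linear in the parameters $\mu_k$, and push those parameters to limits to force equalities among entries of $A$. The permutations $P$ available in (\ref{eq:keyIneq}) are those realized as the FPQ encoding of some $x\in\R^N$, so I would first verify a small achievability lemma: for every index $i$ there is an $x$ with $(Fx)_i$ the unique maximum, and for every ordered pair $i\neq j$ there is an $x$ with $(Fx)_i$ and $(Fx)_j$ the unique maximum and second maximum. Both statements concern nonemptiness of open half-space intersections in the range of $F$ and should hold under the standing hypothesis $M\geq 3$ together with the frame (spanning) property of $F$.

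With achievability in hand, first take $m=(1,M-1)$ and $\yinit=\mu_2\mathbf{1}+(\mu_1-\mu_2)e_1$ with $\mu_1>\mu_2$. If $x$ is chosen so that $(Fx)_i$ is the unique max, then $P$ sends $i\mapsto 1$ and the $M-1$ rows of $\minus{}{m}$ reduce (\ref{eq:keyIneq}) to
\[
\mu_2(r_i-r_j)+(\mu_1-\mu_2)(A_{ii}-A_{ji})\geq 0,\qquad j\neq i,
\]
where $r_k=\sum_\ell A_{k\ell}$. Sending $\mu_2\to\pm\infty$ with $\mu_1-\mu_2$ fixed positive forces $r_i=r_j$; since every $i$ is achievable as a max-position, $A$ has constant row sums. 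Setting $\mu_2=0$ in the same inequality yields $A_{ii}\geq A_{ji}$ for all $j\neq i$.

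Next take $m=(1,1,M-2)$ with $\yinit=\mu_3\mathbf{1}+(\mu_1-\mu_3)e_1+(\mu_2-\mu_3)e_2$, $\mu_1>\mu_2>\mu_3$. For $x$ with $(Fx)_i$ the unique max and $(Fx)_j$ the unique second max, the constant-row-sum property causes $\mu_3$ to drop out. The block of $\minus{}{m}$ comparing $\setI_2$ with $\setI_3$ gives, for each $k\neq i,j$,
\[
(\mu_1-\mu_3)(A_{ji}-A_{ki})+(\mu_2-\mu_3)(A_{jj}-A_{kj})\geq 0,
\]
and letting $\mu_2\to\mu_3^+$ yields $A_{ji}\geq A_{ki}$. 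Running this with the role of second max switched from $j$ to $k$ (achievable for any choice by the lemma) gives $A_{ki}\geq A_{ji}$, so all off-diagonal entries of column $i$ agree in a single value $c_i$. The row of $\minus{}{m}$ comparing $\setI_1$ with $\setI_2$ gives $(A_{ii}-A_{ji})+(A_{ij}-A_{jj})\geq 0$ as $\mu_2-\mu_3\to(\mu_1-\mu_3)^-$; swapping $i$ and $j$ reverses it, so $A_{ii}-A_{ji}=A_{jj}-A_{ij}$, which by column-constancy reads $A_{ii}-c_i=A_{jj}-c_j$. This common value $\c$ is independent of $i$, and $\c=A_{ii}-c_i\geq 0$ by the dominance inequality established in the previous paragraph. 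Defining $J\in\mathcal{J}$ to have $j$th column equal to $c_j\mathbf{1}$ gives $A=\c\id_M+J$, as required.

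The main obstacle is the achievability lemma; everything after it is bookkeeping of linear inequalities and parameter limits. The hypothesis $M\geq 3$ enters here precisely because one needs a max-position and a separate second-max-position, a distinction that is vacuous for $M=2$. For generic frames the required chambers are obviously nonempty; handling possibly degenerate frames (e.g.\ with some $\phi_k$ colinear with another) likely requires a small perturbation or limit argument, or a clever alternative choice of composition that avoids the troublesome ordered pairs.
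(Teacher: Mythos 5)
Your argument stands or falls on the ``achievability lemma,'' and that lemma is false---and not only for degenerate frames. For the ordered-pair statement, take $N=2$, $M=4$ with $\phi_1=(1,1)$, $\phi_2=(0,1)$, $\phi_3=(\tfrac12,0)$, $\phi_4=(\tfrac12,2)$: then $(\phi_1-\phi_2)+(\phi_2-\phi_3)+(\phi_2-\phi_4)=0$, so the strict inequalities $\ip{x}{\phi_1-\phi_2}>0$, $\ip{x}{\phi_2-\phi_3}>0$, $\ip{x}{\phi_2-\phi_4}>0$ cannot hold simultaneously; no $x$ makes coordinate $1$ the unique maximum of $Fx$ with coordinate $2$ the unique second maximum. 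No two of these frame vectors are collinear, and since $0$ here is a strictly positive combination of three vectors spanning $\R^2$, the same failure persists for every sufficiently small perturbation of the frame---so neither ``generic position'' nor a perturbation/limit patch rescues the lemma. The first half of the lemma fails robustly as well: if some $\phi_i$ lies in the convex hull of the other frame vectors (possible with strict interior containment once $M\geq N+2$), then $\ip{x}{\phi_i}\leq\max_{k\neq i}\ip{x}{\phi_k}$ for every $x$, so $i$ is never a max position. Because your column-constancy step needs every ordered pair $(i,j)$ to be realizable as (max, second max), the proof as proposed breaks on an open set of frames; the hypothesis $M\geq 3$ does not address this obstruction.

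The contrast with the paper's proof is instructive. The paper never restricts to encoder-realizable cells: its Lemmas 7.3--7.5 simply choose the permutation in $\minus{}{m}PAP^{-1}\yinit\geq 0$ at will (the identity or a transposition), i.e., the consistency hypothesis is applied to every composition, every compatible initial codeword, and every permutation, and the conclusion ($A$ has constant off-diagonal entries in each column, diagonal part a nonnegative multiple of the identity) is then forced by sending the $\mu_k$ to extremes, much as you do. Under that reading your remaining bookkeeping with the compositions $(1,M-1)$ and $(1,1,M-2)$ is correct and essentially parallel to the paper's argument, and the permutations you need can simply be selected rather than realized by some source vector. But under your stricter, encoder-realizable interpretation, emptiness of cells is an intrinsic feature of FPQ partitions---the paper itself exhibits a cell with no nonzero points for the modulated HTF with $(N,M)=(2,4)$ and $m=(2,2)$---so closing your gap would require a genuinely different mechanism for recovering the missing inequalities, not the achievability lemma as stated.
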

\begin{proof}
See Section~\ref{app:necessary}.
\end{proof}

The column-constant matrices are those obtained by multiplying an
all-1s matrix on the right by a diagonal matrix.
Thus, except in the case of an all-0s matrix, a column-constant
matrix has rank 1\@.
A matrix of the form $aI_M + J$ where $a > 0$ and $J \in \mathcal{J}$
thus has rank $M-1$ or $M$.
Since $A = FR$ has rank at most $N$ because of the dimensions of $F$ and $R$,
the necessary and sufficient condition from Theorems~\ref{thm:sufficient}
and~\ref{thm:necessary} imply $M = N$ or $M = N+1$.

Similar necessary and sufficient conditions can be derived for
linear reconstruction of Variant II FPQs. Since the partition cell
associated with a codeword of a Variant II FPQ is much smaller than
that of the corresponding Variant I FPQ,
we expect the condition for a linear reconstruction to be consistent
to be more restrictive than that given in Theorems~\ref{thm:sufficient}
and~\ref{thm:necessary}.
The following two theorems show that this is indeed the case.

\begin{thm}\label{thm:varIIsuf}
Suppose $A = FR = \c\id_{M}$ for some $\c\geq 0$ and $M = N$.
Then the linear reconstruction $\hat{x}=R\hat{y}$ is consistent with
Variant~II FPQ encoding using frame $F$, an arbitrary composition, and an arbitrary Variant II initial codeword compatible with it.
\end{thm}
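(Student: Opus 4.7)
The plan is to follow the same template as the proof of Theorem~\ref{thm:sufficient}, but now applied to the Variant~II consistency criterion in (\ref{eq:fpq-II-consistency}). Let $m = (m_1,\ldots,m_K)$ be any composition of $M$ and $\yinit$ any Variant~II initial codeword compatible with $m$. Suppose the encoding of $x$ yields $(P,V) \in \mathcal{G}(m) \times \mathcal{Q}(m)$, so that the PSC reconstruction is $\hat{y} = P^{-1}V^{-1}\yinit$ and hence $\hat{x} = R\hat{y} = RP^{-1}V^{-1}\yinit$. Variant~II consistency requires verifying $\Minus{}{m} V P F \hat{x} \geq 0$.

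I would then carry out the substitution step by step, in exact analogy with (\ref{eq:suffI-1})--(\ref{eq:positive2}):
\begin{align*}
\Minus{}{m}\, V P F \hat{x}
   &= \Minus{}{m}\, V P F R P^{-1} V^{-1} \yinit \\
   &= \Minus{}{m}\, V P A P^{-1} V^{-1} \yinit \\
   &= \c\, \Minus{}{m}\, V P P^{-1} V^{-1} \yinit \\
   &= \c\, \Minus{}{m}\, \yinit.
\end{align*}
The third equality is the key use of the hypothesis: because $A = \c\id_M$ is a scalar multiple of the identity, it commutes with everything, so the conjugation by $VP$ disappears. There is no need for the auxiliary column-constant part that played a role in Theorem~\ref{thm:sufficient}; this is precisely why the Variant~II sufficient condition is strictly more restrictive.

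To close the argument, I would invoke the definition of a Variant~II initial codeword compatible with $m$. By construction, $\yinit$ is $m$-descending, so $D^{(m)} \yinit \geq 0$; moreover, the entries of $\yinit$ indexed by $\setI_{K-1}$ are nonnegative (they equal $\mu_{K-1} \geq \mu_K \geq 0$), which supplies the nonnegativity of the extra block of rows appended to $D^{(m)}$ to form $\Minus{}{m}$ in (\ref{eq:fpq-II-consistency}). Thus $\Minus{}{m}\yinit \geq 0$, and combining with $\c \geq 0$ yields $\Minus{}{m}\, V P F \hat{x} \geq 0$, establishing consistency.

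I do not anticipate a serious obstacle here: the only subtle point is making sure the extra rows of $\Minus{}{m}$ (beyond those of $D^{(m)}$) act correctly on $\yinit$, which is immediate from the Variant~II definition. The compatibility constraint $M = N$ is also forced, as the authors already observed, because $A = FR$ has rank at most $N$ while $\c\id_M$ (for $\c > 0$) has rank $M$; this is consistent with, rather than an obstacle to, the statement.
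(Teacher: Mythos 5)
Your proof is correct and follows essentially the same route as the paper's: substitute $\hat{y}=P^{-1}V^{-1}\yinit$, use $A=FR=\c\id_M$ to collapse the conjugation by $VP$, and reduce to $\c\,\Minus{}{m}\,\yinit\geq 0$, which holds by the definition of a Variant~II initial codeword compatible with $m$. Your extra remark unpacking why the appended rows of $\Minus{}{m}$ are satisfied (the $\setI_{K-1}$ entries equal $\mu_{K-1}\geq\mu_K\geq 0$) is just an explicit spelling-out of the step the paper states by appeal to that definition.
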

\begin{proof}
Suppose that $m =(m_1,m_2,\ldots,m_K)$ is an arbitrary composition of $M$ and $\yinit$ is an arbitrary Variant II initial codeword compatible with it. Let $(P,V)$ be the Variant II FPQ encoding of $x$ using $\(F,m,\yinit\)$.
We would like to check that $\hat{x} = R\hat{y}$ is consistent with the encoding $(P,V)$.
This is verified through the following computation:
\begin{eqnarray}
\Minus{}{m} V P F \hat{x}
   &=& \Minus{}{m} V P F R     \hat{y} \nonumber\\
   &=& \Minus{}{m} V P F R     P^{-1} V^{-1} \yinit \label{eq:suffII-1} \\
   &=& \Minus{}{m} V P A       P^{-1} V^{-1} \yinit \nonumber \\
   &=& \Minus{}{m} V P \c\id_M P^{-1} V^{-1} \yinit \label{eq:hypoII} \\
   &=& \c\,\Minus{}{m}\,\yinit \nonumber \\
   &\geq & 0_{L(m) \times 1}, \label{eq:positive2-II}
\end{eqnarray}
where (\ref{eq:suffII-1}) uses the conventional decoding of a PSC;
(\ref{eq:hypoII}) follows from the hypothesis of the theorem on $A$;
and (\ref{eq:positive2-II}) follows from the definition of Variant II
initial codewords compatible with $m$, and the nonnegativity of $\c$.
This completes the proof.
\end{proof}

\begin{thm}\label{thm:varIInec}
Consider Variant II FPQ using frame $F$ with $M\geq 3$.
If linear reconstruction $\hat{x}=R\hat{y}$ is consistent with every composition and every Variant II initial codeword compatible with it, then matrix $A=FR$ must be of the form
$\c\id_M$, where $\c\geq 0$ and $M = N$.
\end{thm}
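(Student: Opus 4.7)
The plan is to reduce to Theorem~\ref{thm:necessary} and then sharpen the conclusion using the two new features of Variant~II: the sign matrix $V$ and the extra rows of $\Minus{}{m}$. As in the proof of Theorem~\ref{thm:varIIsuf}, the consistency hypothesis is equivalent to the matrix inequality
\[
  \Minus{}{m} \, V P A P^{-1} V^{-1} \, \yinit \;\geq\; 0
\]
holding for every valid $(P, V, m, \yinit)$, where $A = FR$ and $(P, V)$ is the Variant~II encoding of some $x \in \R^N$. This is the starting point.

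First, I would restrict attention to encodings with $V = \id_M$, which arise whenever $Fx$ has all nonnegative entries. On this subfamily the upper block of $\Minus{}{m}$ is exactly $\minus{}{m}$, so the inequality reduces to the Variant~I consistency condition $\minus{}{m} P A P^{-1} \yinit \geq 0$. By re-running the argument of Theorem~\ref{thm:necessary} with test vectors chosen inside the cone $\{x : Fx \geq 0\}$ and with nonnegative initial codewords, one obtains the Variant~I structural form $A = a \id_M + J$ for some $a \geq 0$ and some $J \in \mathcal{J}$.

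Next, write $J = \mathbf{1}_{M\times 1} c^T$ with $c \in \R^M$ and $v = \diag(V) \in \{\pm 1\}^M$. Using $P\mathbf{1} = \mathbf{1}$ and $V^{-1} = V$, a direct calculation gives
\[
  V P A P^{-1} V^{-1} \yinit \;=\; a \yinit + \sigma v,
  \qquad
  \sigma = \ip{v \circ Pc}{\yinit}.
\]
Take the all-ones composition $m = (1,1,\ldots,1)$, for which $K = M \geq 3$, so that the difference rows of $\Minus{}{m}$ yield $a(\mu_i - \mu_{i+1}) + \sigma(v_i - v_{i+1}) \geq 0$ for $i = 1, \ldots, M-1$ and the extra sign row yields $a\mu_{M-1} + \sigma v_{M-1} \geq 0$. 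For any $i$ with $i+1 \leq M-1$, both $v_i$ and $v_{i+1}$ take either sign in realizable encodings, so both $(v_i, v_{i+1}) = (\pm 1, \mp 1)$ occur. This gives $a(\mu_i - \mu_{i+1}) \geq 2|\sigma|$, and letting $\mu_i - \mu_{i+1} \to 0$ while holding the remaining $\mu_j$ fixed forces $\sigma = 0$. Since $\sigma$ must vanish as a linear functional of $\yinit$ on the open descending cone of initial codewords, each coefficient $v_j (Pc)_j$ is zero; hence $Pc = 0$ and $c = 0$, so $J = 0$.

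Finally, $A = a \id_M$ has rank $M$ when $a > 0$, but $\operatorname{rank}(FR) \leq N$, forcing $M = N$. The case $a = 0$ gives $A = 0$, whence $R = 0$ (since $F^{*}F$ is invertible) and $\hat{x} \equiv 0$, which is inconsistent for any nonzero source vector and hence excluded. The main obstacle is the reduction in the second paragraph: verifying that the Variant~I necessity argument of Section~\ref{app:necessary} can be executed using only encodings with $V = \id_M$, i.e.\ source vectors confined to the cone $\{x : Fx \geq 0\}$ rather than all of $\R^N$. A secondary subtlety is guaranteeing that the two-position sign patterns used in the sharpening step are realized as actual encodings for the given frame $F$; this is routine when $F$ is in general position but may require a perturbation argument or a different choice of composition in degenerate cases.
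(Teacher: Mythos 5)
Your overall skeleton is the same as the paper's: read consistency as the inequality $\Minus{}{m}\,VPAP^{-1}V\,\yinit\geq 0$ for all admissible $(P,V,m,\yinit)$, set $V=\id_M$ to recover the Variant~I condition and conclude $A=a\id_M+J$ with $J\in\mathcal{J}$, then use the extra Variant~II structure to force $J=0$, and finish with a rank argument giving $M=N$. (Incidentally, the ``main obstacle'' you flag in the reduction is not an obstacle in the paper's formulation: Lemmas~\ref{lm:column1}--\ref{lm:diagonal} are already stated and proved for Variant~II initial codewords, so fixing $V=\id_M$ and citing Theorem~\ref{thm:necessary} is all that is needed; the realizability of particular $(P,V)$ cells is glossed over by the paper as well.) The genuine gap is in your step killing $J$. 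Writing $J=\mathbf{1}c^T$ and $\sigma=\ip{v\circ Pc}{\yinit}$ is fine, and so are the row constraints $a(\mu_i-\mu_{i+1})+\sigma(v_i-v_{i+1})\geq 0$ and $a\mu_{M-1}+\sigma v_{M-1}\geq 0$. But $\sigma$ depends on $v$: flipping $(v_i,v_{i+1})$ from $(1,-1)$ to $(-1,1)$ changes $\ip{v\circ Pc}{\yinit}$ by $-2\left((Pc)_i\mu_i-(Pc)_{i+1}\mu_{i+1}\right)$, so the two inequalities you combine involve two \emph{different} numbers $\sigma$, and the displayed bound $a(\mu_i-\mu_{i+1})\geq 2|\sigma|$ does not follow. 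Consequently the limit $\mu_i-\mu_{i+1}\to 0$ does not give $\sigma=0$; it gives only one-sided conditions on two different linear functionals, and only at boundary points with $\mu_i=\mu_{i+1}$, which lie outside the open descending cone, so the final step (``$\sigma$ vanishes on the open cone, hence $v\circ Pc=0$, hence $c=0$'') is unsupported as written.

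The paper's own argument avoids this by exploiting the \emph{extra positivity row} of $\Minus{}{m}$ (the row selecting the $(M-1)$th sorted coefficient), which you state but never use. With $m=(1,1,\ldots,1)$ and $P=V=\id_M$ that row reads $b_1\mu_1+a\mu_{M-1}+\sum_{j\geq 2}b_j\mu_j\geq 0$, so $b_1<0$ is ruled out by letting $\mu_1\to\infty$; if $b_1>0$, the single flip $V=\diag(-1,1,\ldots,1)$ makes the $(1)$-entry of row $M-1$ of $VJV$ negative (here $M\geq 3$ ensures $M-1\neq 1$) and the same blow-up applies; a transposition moves any other nonzero column into position $1$. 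Your difference-row route can probably be repaired—tracking the change of $\sigma$ under the flip yields, in the limit $\mu_{i+1}\to\mu_i=\mu^*$, the inequality $\left((Pc)_i-(Pc)_{i+1}\right)\mu^*\geq\bigl|\sum_{j\neq i,i+1}v_j(Pc)_j\mu_j\bigr|$, and varying $P$, the remaining signs $v_j$, and the $\mu_j$ then forces all $c_j$ equal and hence zero when $M\geq 3$—but that additional work is exactly what is missing. Your explicit handling of the degenerate case $a=0$ and the rank argument for $M=N$ are fine (indeed more explicit than the paper's biorthogonality remark).
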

\begin{proof}
See Section~\ref{app:VarIInec}.
\end{proof}

The two theorems above show that,
if we insist on linear consistent reconstructions for Variant II FPQs,
the frame must degenerate into a basis.
For nonlinear consistent reconstructions,
we could use algorithms analogous to those presented in
Section~\ref{sec:linearProg} for an arbitrary frame that is not necessarily a basis.

\subsection{Canonical Reconstruction}
We now restrict the linear reconstruction to use the canonical dual;
i.e., $R$ is restricted to be the pseudo-inverse $F^{\dagger}=\pseudo$.
The following corollary characterizes the non-trivial frames for
which canonical reconstructions are consistent.

\begin{coro}\label{coro:pseudoInverse}
Consider Variant I FPQ using rank-$N$ frame $F$ with $M > N$ and $M\geq 3$.
For canonical reconstruction to be consistent with every composition and every Variant I initial codeword compatible with it, it is necessary and sufficient to have
$M = N+1$ and $A = FF^{\dagger} = \id_M-\frac{1}{M}J_M$,
where $J_M$ is the $M \times M$ all-1s matrix.
\end{coro}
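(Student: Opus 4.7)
The plan is to combine Theorems~\ref{thm:sufficient} and~\ref{thm:necessary} (applied with $R=F^{\dagger}$) with the additional structure that $A=FF^{\dagger}$ is the orthogonal projector onto $\mathrm{range}(F)$, so it is symmetric and idempotent with $\mathrm{rank}(A)=N$.

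By those two theorems, the stated consistency property is equivalent to $A=a\id_{M}+J$ with $a\geq 0$ and $J\in\mathcal{J}$. Because every column of $J\in\mathcal{J}$ is constant, I can write $J=\mathbf{1}_{M}c^{T}$ for some $c\in\R^{M}$; symmetry of $A$ then forces $c_{i}=c_{j}$ for every $i\neq j$, i.e., $c=b\mathbf{1}_{M}$, so $J=bJ_{M}$. Hence $A=a\id_{M}+bJ_{M}$, a two-parameter family.

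Imposing $A^{2}=A$ and using $J_{M}^{2}=MJ_{M}$ gives
\[
a^{2}\id_{M}+(2ab+Mb^{2})J_{M}=a\id_{M}+bJ_{M},
\]
equivalent to $a^{2}=a$ and $b(2a+Mb-1)=0$. The admissible pairs are therefore $(a,b)\in\{(0,0),(0,1/M),(1,0),(1,-1/M)\}$. Rank $N$ together with $M>N$ immediately rules out $(0,0)$ (rank $0$) and $(1,0)$ (rank $M$). The pair $(0,1/M)$ would force $N=1$ and $F$ proportional to $\mathbf{1}_{M}$, a degenerate ``frame'' whose coefficients $\ip{x}{\phi_{k}}$ agree across $k$ for every $x$, so the PSC ordering carries no information and this case is excluded from the corollary's scope. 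This leaves $(a,b)=(1,-1/M)$, i.e., $A=\id_{M}-\tfrac{1}{M}J_{M}$, which has rank $M-1$; matching with $\mathrm{rank}(A)=N$ yields $M=N+1$.

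Sufficiency is then immediate from Theorem~\ref{thm:sufficient}, since $\id_{M}-\tfrac{1}{M}J_{M}$ is exactly of the form $a\id_{M}+J$ with $a=1\geq 0$ and $J=-\tfrac{1}{M}J_{M}\in\mathcal{J}$. The main obstacle is pinning down the two scalar parameters simultaneously from symmetry, idempotency, and the rank constraint, and cleanly dismissing the degenerate $a=0$ branch; once that enumeration is handled, the remainder is routine.
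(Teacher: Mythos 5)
Your proposal is correct and follows essentially the same route as the paper: sufficiency directly from Theorem~\ref{thm:sufficient}, and necessity from Theorem~\ref{thm:necessary} combined with self-adjointness of $FF^{\dagger}$ (forcing $J=bJ_M$), idempotence, and the rank argument pinning down $M=N+1$. The only difference is bookkeeping order---you enumerate all four idempotent pairs $(a,b)$ and then discard the degenerate ones (including the $a=0$, $N=1$ branch, which you dismiss about as informally as the paper does via its rank remark), whereas the paper uses the rank condition to force $a>0$ and $M=N+1$ before invoking idempotence.
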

\begin{proof}
Sufficiency follows immediately from Theorem~\ref{thm:sufficient}.
From Theorem~\ref{thm:necessary}, it is necessary to have
$A=FF^{\dagger} = \c\id_M+J$ for some $\c \geq 0$ and $J \in \mathcal{J}$.
The rank condition further implies $a > 0$, so we must have $M = N+1$
by the argument following Theorem~\ref{thm:necessary}.
Now since $A$ is an orthogonal projection operator,
it is self-adjoint so
\begin{equation}
\c\id_M+J = \left(\c\id_M+J\right)^{*}=\c\id_M+J^{*}.
\end{equation}
Thus, $J=J^{*}$, and it follows that $J=b J_M$, for some constant $b$.
The idempotence of $A$ gives
\begin{eqnarray}
\c\id_M+bJ_M&=&\left(\c\id_M+bJ_M\right)^{2}\nonumber\\
&=& \c^2\id_M+(2\c b+b^2M)J_M.\label{eq:idem}
\end{eqnarray}
Equating the two sides of (\ref{eq:idem}) yields
$\c  =  1$ and $b  = -\Frac{1}{M}$
as desired.
\end{proof}

We continue to add more constraints to frame $F$.
Tightness and equal norm are common requirements in
frame design~\cite{KovacevicC:07a}.
By imposing tightness and unit norm on our analysis frame,
we can progress a bit further from Corollary~\ref{coro:pseudoInverse}
to derive the form of $FF^{*}.$
\begin{coro}\label{coro:UNTF}
Consider Variant I FPQ using unit-norm tight frame $F$ with $M > N$ and $M\geq 3$.
For canonical reconstruction to be consistent for every composition and every Variant~I initial codeword compatible with it, it is necessary and sufficient to have
$M = N+1$ and
\begin{equation}\label{eq:equiangular}
FF^{*}=
\begin{bmatrix}
1&-\frac{1}{N}&\cdots&-\frac{1}{N}\\
-\frac{1}{N}&1&\cdots&-\frac{1}{N}\\
\vdots&\vdots&\ddots&\vdots\\
-\frac{1}{N}&-\frac{1}{N}&\cdots&1
\end{bmatrix}.
\end{equation}
\end{coro}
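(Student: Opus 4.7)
The plan is to reduce this corollary directly to Corollary~\ref{coro:pseudoInverse}, exploiting the extra structure provided by unit-norm tightness. The key observation is that for a UNTF, the pseudo-inverse takes a particularly simple form: by (\ref{eq:TF}), $F^{*}F = r\id_N = \frac{M}{N}\id_N$, so
\[
F^{\dagger} = (F^{*}F)^{-1}F^{*} = \frac{N}{M} F^{*},
\]
and hence $FF^{\dagger} = \frac{N}{M} FF^{*}$. This single identity lets me pass back and forth between the hypothesis of the earlier corollary (a condition on $FF^{\dagger}$) and the claimed conclusion (a condition on $FF^{*}$).

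For sufficiency, I would assume $M = N+1$ and that $FF^{*}$ has the form in (\ref{eq:equiangular}), i.e.\ $FF^{*} = \frac{M}{N}\id_M - \frac{1}{N}J_M$ (the diagonal entries $1$ and off-diagonal entries $-1/N$ assemble into this expression using $M = N+1$). The diagonal entries confirm that $\|\phi_k\|^2 = 1$, consistent with the unit-norm assumption. Multiplying by $\frac{N}{M}$ gives $FF^{\dagger} = \id_M - \frac{1}{M}J_M$, which is exactly the condition in Corollary~\ref{coro:pseudoInverse}; that corollary then yields consistency of canonical reconstruction for every composition and compatible initial codeword.

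For necessity, I would assume that $F$ is a UNTF with $M > N$, $M \geq 3$, and that canonical reconstruction is consistent for every composition and compatible initial codeword. Corollary~\ref{coro:pseudoInverse} immediately forces $M = N+1$ and $FF^{\dagger} = \id_M - \frac{1}{M}J_M$. Using $FF^{\dagger} = \frac{N}{M} FF^{*}$, I invert the relation to obtain
\[
FF^{*} = \frac{M}{N}\left(\id_M - \frac{1}{M}J_M\right) = \frac{M}{N}\id_M - \frac{1}{N}J_M.
\]
The diagonal entries are $\frac{M}{N} - \frac{1}{N} = \frac{M-1}{N} = 1$ (since $M = N+1$) and the off-diagonal entries are $-\frac{1}{N}$, which is precisely (\ref{eq:equiangular}).

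There is essentially no substantial obstacle here: all the real work has already been done in Theorems~\ref{thm:sufficient}--\ref{thm:necessary} and Corollary~\ref{coro:pseudoInverse}. The only care needed is in the bookkeeping between $FF^{*}$ and $FF^{\dagger}$ and in checking that the diagonal computation indeed produces $1$ under $M = N+1$. As a side remark, the matrix in (\ref{eq:equiangular}) is a restricted equiangular tight frame Gram matrix with $\c = 1/N$, so Proposition~\ref{prop:ETF} characterizes the admissible frames as Type~I and Type~II equivalents of modulated harmonic tight frames.
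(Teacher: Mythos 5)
Your proof is correct and follows essentially the same route as the paper: both reduce the statement to Corollary~\ref{coro:pseudoInverse} and use the unit-norm tightness relation $(F^{*}F)^{-1}=\frac{N}{M}\id_N$ (equivalently $FF^{\dagger}=\frac{N}{M}FF^{*}$) to translate the condition on $FF^{\dagger}$ into the stated form of $FF^{*}$. The bookkeeping with $M=N+1$ matches the paper's computation exactly.
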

\begin{proof}
Corollary~\ref{coro:pseudoInverse} asserts that $M=N+1$ and
\begin{equation}\label{eq:A}
F(F^{*}F)^{-1}F^{*}=
\begin{bmatrix}
\frac{N}{M}&-\frac{1}{M}&\cdots&-\frac{1}{M}\\
-\frac{1}{M}&\frac{N}{M}&\cdots&-\frac{1}{M}\\
\vdots&\vdots&\ddots&\vdots\\
-\frac{1}{M}&-\frac{1}{M}&\cdots&\frac{N}{M}
\end{bmatrix}.
\end{equation}
On the other hand, the tightness of a unit-norm frame $F$ implies
\begin{equation}\label{eq:tightness}
(F^{*}F)^{-1}=\frac{N}{M}\id_N.
\end{equation}
Combining (\ref{eq:A}) with (\ref{eq:tightness}), we get (\ref{eq:equiangular}).
\end{proof}

Recall that a UNTF that satisfies (\ref{eq:equiangular}) is a restricted ETF\@.
Therefore Corollary~\ref{coro:UNTF} together with Proposition~\ref{prop:ETF}
gives us a complete characterization of UNTFs that are ``good'' in the
sense of canonical reconstruction being consistent.
\begin{coro}
Consider Variant I FPQ using unit-norm tight frame $F$ with $M > N$ and $M\geq 3$.
For canonical reconstruction to be consistent for every composition and every Variant~I initial codeword compatible with it, it is necessary and sufficient for $F$ to be a modulated HTF or a Type~I or Type~II equivalent.
\end{coro}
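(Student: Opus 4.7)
The plan is to chain together Corollary~\ref{coro:UNTF} and Proposition~\ref{prop:ETF}, which together already contain essentially all of the content needed. First I would invoke Corollary~\ref{coro:UNTF}: for a UNTF $F$ with $M>N$ and $M\geq 3$, the canonical reconstruction is consistent for every composition and every compatible Variant~I initial codeword if and only if $M=N+1$ and $FF^{*}$ has the explicit form given in (\ref{eq:equiangular}).

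Next I would recognize the matrix in (\ref{eq:equiangular}) as the Gram matrix of a restricted ETF. Concretely, the diagonal entries are $1$ (consistent with unit norm) and all off-diagonal entries equal the common value $-1/N$. Comparing with the matrix-form description of a restricted ETF preceding Proposition~\ref{prop:ETF} (namely $FF^{*}=(1-\c)\id_M+\c J_M$ together with $F^{*}F=r\id_N$), this is exactly the restricted ETF relation with $\c=-1/N$. Since $F$ is a UNTF by assumption, $F^{*}F=r\id_N$ is automatic by (\ref{eq:TF}). Hence (\ref{eq:equiangular}) holds if and only if $F$ is a restricted ETF.

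Now Proposition~\ref{prop:ETF} characterizes all restricted ETFs in the $M=N+1$ case as precisely the Type~I and Type~II equivalents of modulated HTFs. Concatenating the two equivalences yields the claim: canonical reconstruction is consistent for every composition and every compatible Variant~I initial codeword iff $F$ is a modulated HTF or a Type~I or Type~II equivalent of one. For the sufficiency direction I would just reverse the chain, noting that any such $F$ is a UNTF with $M=N+1$, is a restricted ETF by Proposition~\ref{prop:ETF}, hence has $FF^{*}$ of the form (\ref{eq:equiangular}), and therefore satisfies the hypothesis of Corollary~\ref{coro:UNTF}.

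There is no real obstacle in this proof since the substantive work has already been done in Corollary~\ref{coro:UNTF} (which in turn relies on Corollary~\ref{coro:pseudoInverse} and Theorems~\ref{thm:sufficient}--\ref{thm:necessary}) and in Proposition~\ref{prop:ETF}. The only care needed is the bookkeeping of the constant $\c$: verifying that the value forced by tightness, unit norm, and $M=N+1$ (namely $\c=-1/N$) matches the generic restricted ETF parametrization, so that Proposition~\ref{prop:ETF} applies verbatim rather than only to some subfamily.
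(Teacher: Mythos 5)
Your proof is correct and follows essentially the same route as the paper, which obtains this corollary precisely by observing that a UNTF satisfying (\ref{eq:equiangular}) is a restricted ETF and then combining Corollary~\ref{coro:UNTF} with Proposition~\ref{prop:ETF}. Your extra check that tightness, unit norm, and $M=N+1$ force the constant $\c=-1/N$ in the restricted ETF parametrization is sound bookkeeping, matching what the paper leaves implicit.
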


\section{Numerical Results}\label{sec:simulations}
In this section, we provide simulations to demonstrate some properties of FPQ\@.
For data compression, we demonstrate that FPQ with decoding
using Algorithms~\ref{alg:varI-uniform} and~\ref{alg:varI-Gaussian}
can give performance better than entropy-constrained scalar quantization (ECSQ)
and ordinary PSC for certain combinations of signal dimension and rate.
For data acquisition, we demonstrate that FPQ with recursive estimation
through Algorithm~\ref{alg:varI-recursive} empirically gives the
optimal decay of MSE,
inversely proportional to the square of the number of orthogonal
projection steps, validating Theorem~\ref{thm:recursive-rate}
but also suggesting that this holds more generally.

\subsection{Fixed-Rate Compression Experiments}
All FPQ compression simulations use modulated harmonic tight frames and
are based on implementations of
Algorithms~\ref{alg:varI-uniform} and~\ref{alg:varI-Gaussian}
using MATLAB, with linear programming and quadratic programming
provided by the Optimization Toolbox.
For every data point shown, the distortion represents a sample mean
estimate of $N^{-1} E[ \| x - \hat{x} \|^2]$ over at least $10^6$ trials.
Testing was done with exhaustive enumeration of the relevant
compositions.  This makes the complexity of simulation high,
and thus experiments are only shown for small $N$ and $M$.
Recall the encoding complexity of FPQ is low, $O(M \log M)$ operations.
The decoding complexity is polynomial in $M$ for either of the algorithms
presented explicitly, and in some applications it could be worthwhile
to precompute the entire codebook at the decoder.
Thus much larger values of $N$ and $M$ than used here
may be practical.

\emph{Uniform source.}
Let $x$ have i.i.d.\ components uniformly distributed on $[-\half,\half]$.
Algorithm~\ref{alg:varI-uniform} is clearly well-suited to this source
since the support of $x$ is properly specified and reconstructions near
the centers of cells is nearly optimal.
Fig.~\ref{fig:sim-uniform} summarizes the performance of Variant I FPQ
for several frames and an enormous number of compositions.
Also shown are the performances of ordinary PSC and
optimal ECSQ~\cite{GoyalSW:01,GyorgiL:00}\@.
Approaching the indicated performance of ECSQ requires entropy coding over
many quantized symbols.

\begin{figure}
 \begin{center}
  \begin{tabular}{cc}
   \includegraphics[width=\widthA]{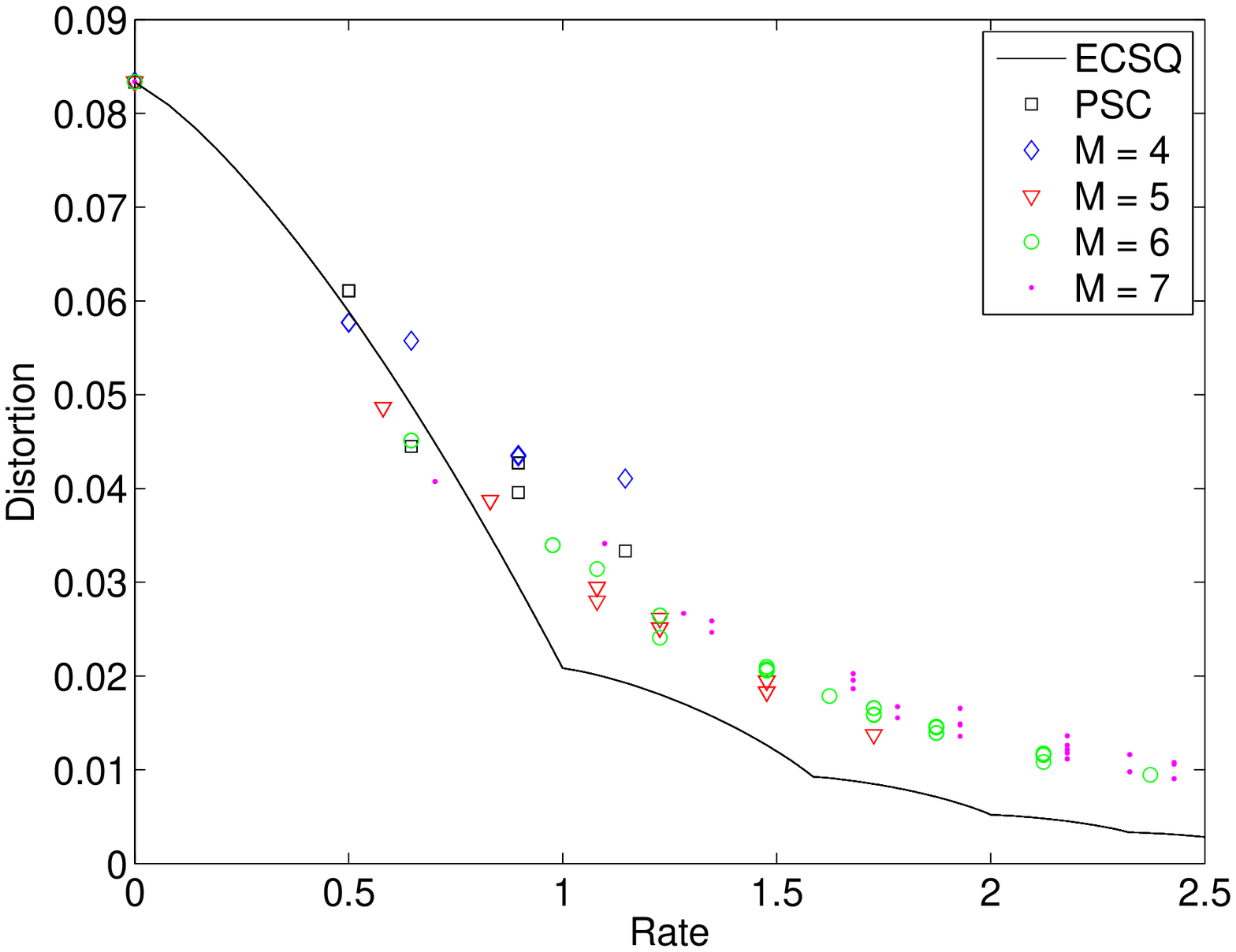} &
   \includegraphics[width=\widthA]{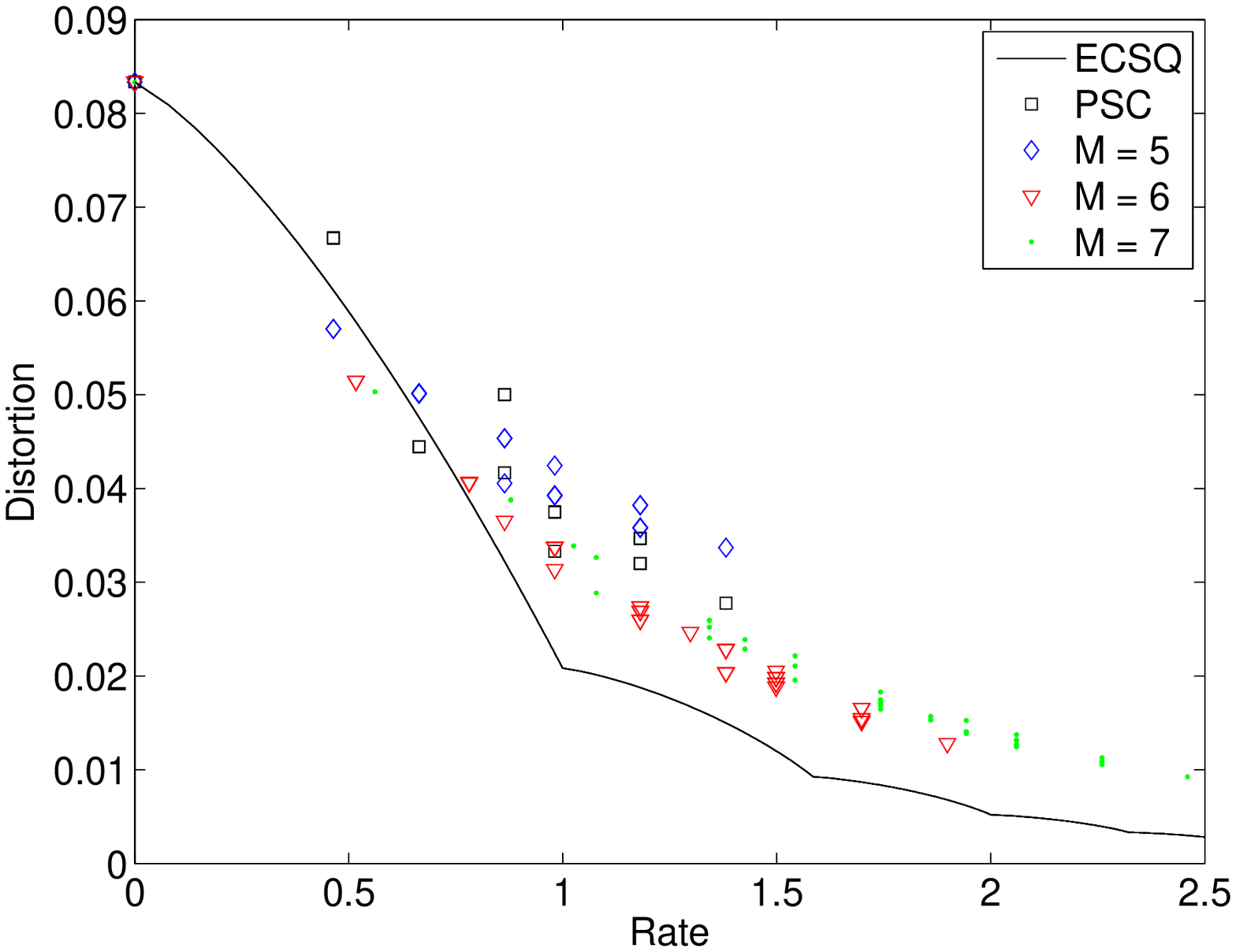} \\[1mm]
   {\small (a) $N=4$} &
   {\small (b) $N=5$}
  \end{tabular}
 \end{center}
 \caption{Performance of Variant I FPQ on an i.i.d.\ uniform($[-\half,\half]$)
   source using modulated harmonic tight frames ranging in size from $N$ to 7\@.
   Also shown are the performances of ordinary PSC
   (equivalent to FPQ with frame $F = I_N$),
   and optimal entropy-constrained scalar quantization.}
 \label{fig:sim-uniform}
\end{figure}

Using $F = I_N$ makes FPQ reduce to ordinary PSC\@.
We see that, consistent with results in~\cite{GoyalSW:01},
PSC is sometimes better than ECSQ\@.
Next notice that FPQ is not identical to PSC when $F$ is square but
not the identity matrix.  The modulated harmonic frame with $M = N$
provides an orthogonal matrix $F$.  The set of rates obtained with
$M = N$ is the same as PSC, but since the source is not rotationally-invariant, 
the partitions and hence distortions are not the same;
the distortion is sometimes better and sometime worse.
Increasing $M$ gives more operating points---some of which are better
than those for lower $M$---and
a higher maximum rate.\footnote{A discussion of the density of PSC rates is
given in~\cite[App.~B]{NguyenVG:10}.}
In particular, for both $N=4$ and $N=5$, it seems that $M=N+1$ gives
several operating points better than those obtainable with larger or
smaller values of $M$.

\emph{Gaussian source.}
Let $x$ have the $\mathcal{N}(0,I_N)$ distribution.
Algorithm~\ref{alg:varI-Gaussian} is designed precisely for this source.
Fig.~\ref{fig:sim-Gaussian} summarizes the performance of Variant I FPQ
with decoding using Algorithm~\ref{alg:varI-Gaussian}.
Also shown are the distortion--rate bound and the performances of
two types of entropy-constrained scalar quantization:
uniform thresholds with uniform codewords (labeled ECUSQ)
and uniform thresholds with optimal codewords (labeled ECSQ)\@.
At all rates, the latter is a very close approximation to optimal ECSQ;
in particular,
it has optimal rate--distortion slope at rate zero~\cite{MarcoN2006}.
Of course, the distortion--rate bound
can only be approached with $N \rightarrow \infty$; it is not presented
as a competitive alternative to FPQ for $N=4$ and $N=5$.

\begin{figure}
 \begin{center}
  \begin{tabular}{cc}
   \includegraphics[width=\widthA]{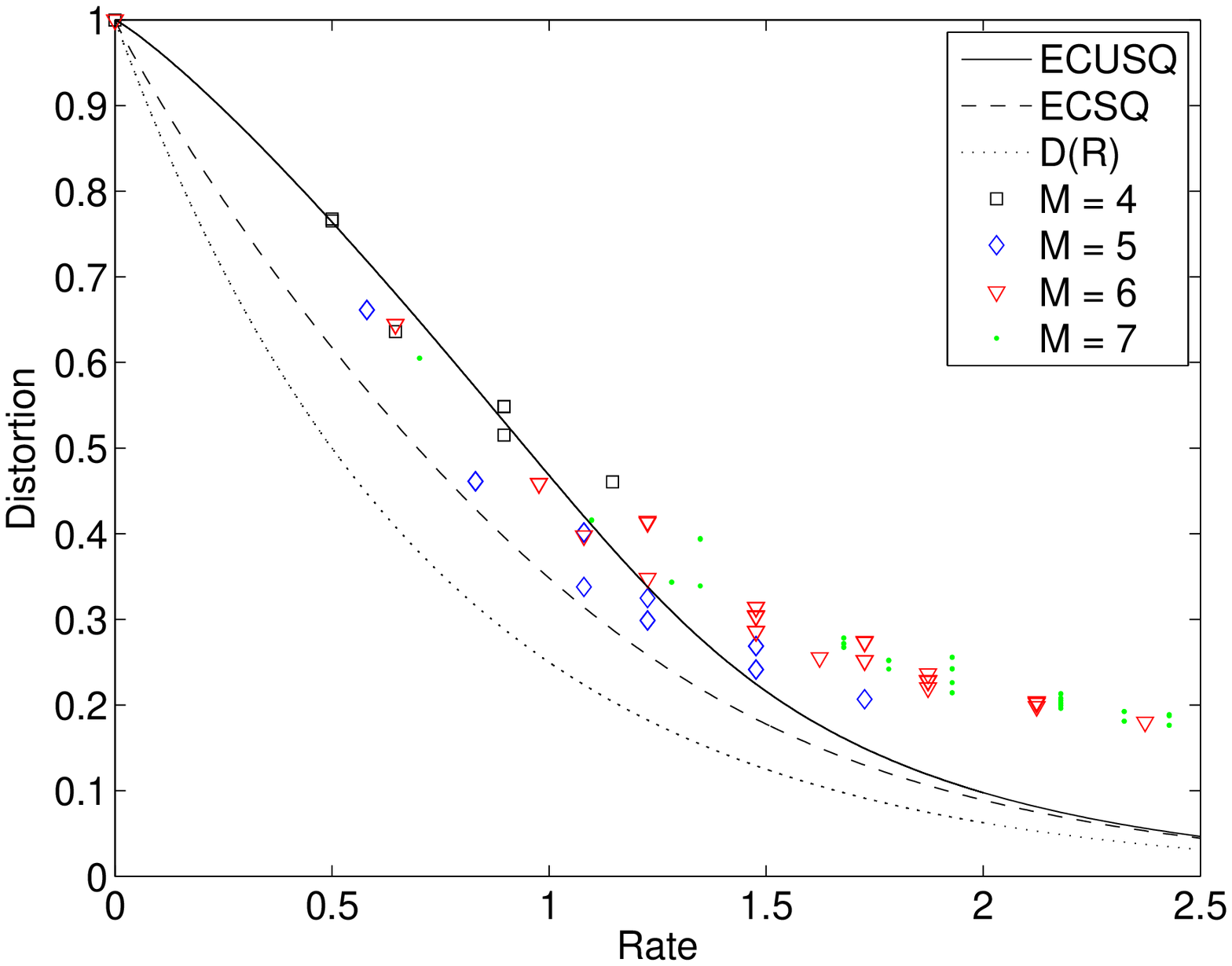} &
   \includegraphics[width=\widthA]{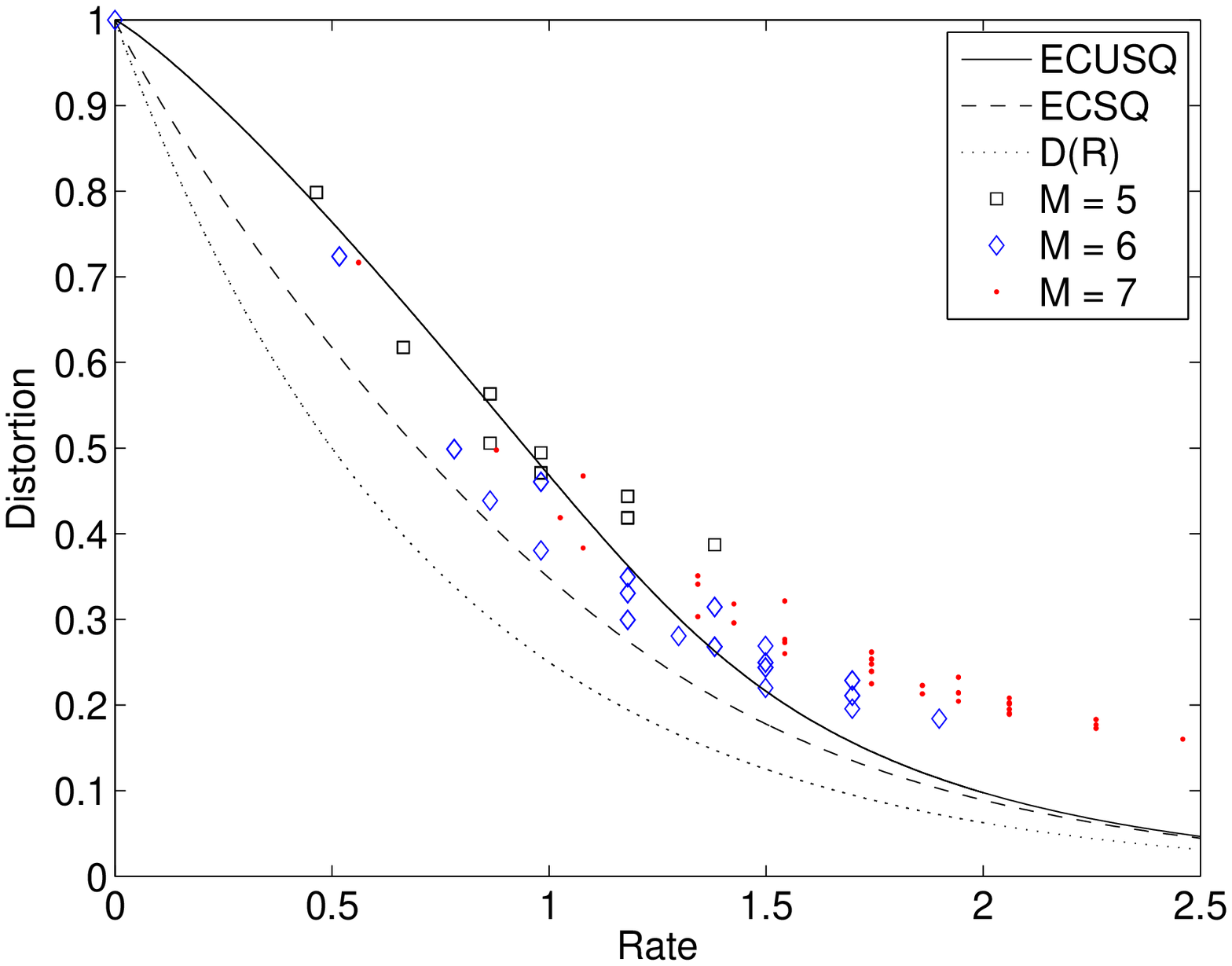} \\[1mm]
   {\small (a) $N=4$} &
   {\small (b) $N=5$}
  \end{tabular}
 \end{center}
 \caption{Performance of Variant I FPQ on an i.i.d.\ $\mathcal{N}(0,1)$ source
   using modulated harmonic tight frames ranging in size from $N$ to 7\@.
   Performance of PSC is not shown
   because it is equivalent to FPQ with $M = N$ for this source.
   Also plotted are the performance of entropy-constrained scalar quantization
   and the distortion--rate bound.}
 \label{fig:sim-Gaussian}
\end{figure}

We have not provided an explicit comparison to ordinary PSC because,
due to rotational-invariance of the Gaussian source, FPQ with any
orthonormal basis as the frame is identical to PSC\@.
(The modulated harmonic tight frame with $M=N$ is an orthonormal basis.)
The trends are similar to those for the uniform source:
PSC and FPQ are sometimes better than ECSQ;
increasing $M$ gives more operating points and a higher maximum rate;
and $M=N+1$ seems especially attractive.

\subsection{Variable-Rate Compression Experiments and Discussion}
We have posed FPQ as a fixed-rate coding technique.
As mentioned in Section~\ref{sec:permutationcodes},
symmetries will often make the outputs of a PSC equally likely,
making variable-rate coding superfluous.
This does not necessarily carry over to FPQ\@.

In Variant I FPQ with modulated HTFs,
when $M > N+1$ the codewords are not only nonequiprobable,
some cannot even occur.
To see an example of this, consider the case of $(N,M) = (2,4)$.
Then
$$
  F = \begin{bmatrix}
           1 &  0 \\
      -\zeta & -\zeta \\
           0 &  1 \\
       \zeta & -\zeta
      \end{bmatrix},
\qquad
\mbox{where $\zeta$ denotes $1/\sqrt{2}$}.
$$
If we choose the composition $m = (2,2)$, we might expect six
distinct codewords that are equiprobable for a rotationally-invariant source.
The permutation matrices consistent with this composition are
$$
 \left\{
  \begin{bmatrix}
     0  &  0  &  1  &  0 \\
     0  &  0  &  0  &  1 \\
     1  &  0  &  0  &  0 \\
     0  &  1  &  0  &  0
  \end{bmatrix},\:
  \begin{bmatrix}
     0  &  0  &  1  &  0 \\
     1  &  0  &  0  &  0 \\
     0  &  1  &  0  &  0 \\
     0  &  0  &  0  &  1
  \end{bmatrix},\:
  \begin{bmatrix}
     0  &  0  &  1  &  0 \\
     1  &  0  &  0  &  0 \\
     0  &  0  &  0  &  1 \\
     0  &  1  &  0  &  0
  \end{bmatrix},\:
  \begin{bmatrix}
     1  &  0  &  0  &  0 \\
     0  &  0  &  1  &  0 \\
     0  &  1  &  0  &  0 \\
     0  &  0  &  0  &  1
  \end{bmatrix},\:
  \begin{bmatrix}
     1  &  0  &  0  &  0 \\
     0  &  1  &  0  &  0 \\
     0  &  0  &  1  &  0 \\
     0  &  0  &  0  &  1
  \end{bmatrix},\:
  \begin{bmatrix}
     1  &  0  &  0  &  0 \\
     0  &  0  &  1  &  0 \\
     0  &  0  &  0  &  1 \\
     0  &  1  &  0  &  0
  \end{bmatrix}
 \right\}.
$$
The first and fifth of these occur with probability zero
because the corresponding partition cells have zero volume.
Let us verify this for the fifth permutation matrix ($P = I_4$).
By forming $\minus{}{(2,2)} I_4 F$, we see that the fifth cell is
described by
\begin{equation}
\label{eq:zeroVolume}
  \begin{bmatrix}
              1 & -1 \\
        - \zeta & -1 - \zeta \\
      1 - \zeta & \zeta \\
        -2\zeta & 0
  \end{bmatrix}
  \begin{bmatrix} x_1 \\ x_2 \end{bmatrix}
  \geq
  \begin{bmatrix} 0 \\ 0 \\ 0 \\ 0 \end{bmatrix}.
\end{equation}
This has no nonzero solutions.  (Subtracting the second and third
inequalities from the first gives $2\zeta \, x_1 \geq 0$, which
combines with the fourth inequality to give $x_1 = 0$.  With
$x_1 = 0$, the first and third inequalities combine to give $x_2 = 0$.)

While further investigation of the joint design of the composition $m$
and frame $F$---or of the product $\minus{}{m} P F$ as
$P$ varies over the permutations induced by $m$---is merited,
it is beyond the scope of this paper.
Instead, we have extended our experiments with uniform sources
to show the potential benefit of using entropy coding to
exploit the lack of equiprobable codewords. 

Fig.~\ref{fig:sim-UniformVR} summarizes experiments similar to those
reported in Figs.~\ref{fig:sim-uniform} and~\ref{fig:sim-Gaussian}.
Each curve in this figure shows, for any given rate $R$ on the horizontal axis,
the lowest distortion can be achieved at any rate not exceeding $R$.
The source $x \in \R^4$ has i.i.d.\ components uniformly distributed
on $[-\half,\half]$, and Variant I FPQ with modulated harmonic tight frames
of sizes $M = 6$ and $7$ were used.
Performance with rate measured only by (\ref{eq:Gm-size}) as before
is labeled {\tt fixed rate}.
The codewords are highly nonequiprobable at all but the lowest rates.
To demonstrate this, we alternatively measure rate by the empirical
output entropy and label the performance {\tt variable rate}.
Clearly, the rate is significantly reduced by entropy coding at all
but the lowest rates.

\begin{figure}
 \begin{center}
  \begin{tabular}{c}
   \includegraphics[width=\widthA]{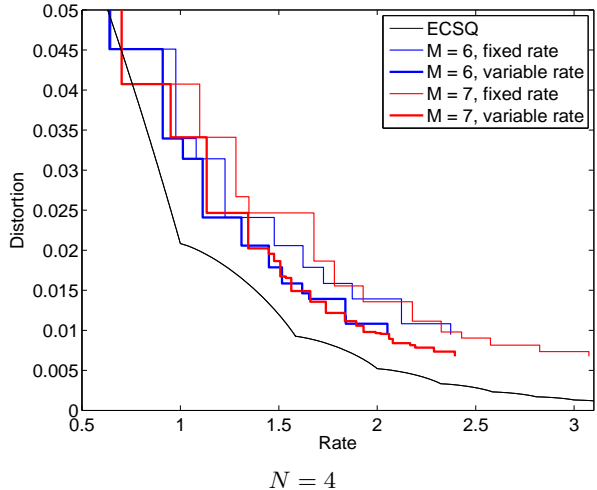} \\
   {\small $N=4$} 
  \end{tabular}
 \end{center}
 \caption{Performance of Variant I FPQ for fixed- and variable-rate coding of an i.i.d.\ uniform($[-\half,\half]$)
   source with $N=4$ using modulated harmonic tight frames of sizes 6 and 7\@.
   Also plotted is the performance of entropy-constrained scalar quantization.}
 \label{fig:sim-UniformVR}
\end{figure}

\subsection{Recursive Estimation Experiments}
\label{sec:numerical-recursive}
The recursive estimation technique detailed in
Algorithm~\ref{alg:varI-recursive} remains computationally feasible
for large $N$ and $M$.
Here we simulate it with $x \in \R^N$ a nonrandom unit vector
and $\{\phi_k\}_{k=1}^\infty$ an i.i.d.\ sequence of vectors drawn
from the uniform distribution on the unit sphere in $\R^N$.
Several choices for the $\setJ_k$ sets are used:
\begin{itemize}
\item \emph{Singleton sets}: $\setJ_k = \{k-1\}$;
\item \emph{Square-root sets}: $\setJ_k \subset \{1,\,2,\,\ldots,\,k-1\}$
      is chosen uniformly at random from subsets of size
      $\lfloor \sqrt{k} \rfloor$; and
\item \emph{Exhaustive sets}: $\setJ_k = \{1,\,2,\,\ldots,\,k-1\}$.
\end{itemize}
Figure~\ref{fig:sim-recursive} shows the sample mean estimate of
$N^{-1}E[\|x - \hat{x}\|^2]$ over 1000 trials with $N = 8$ and
$M$ up to 10\,000\@.

\begin{figure}
 \begin{center}
   \includegraphics[width=\widthA]{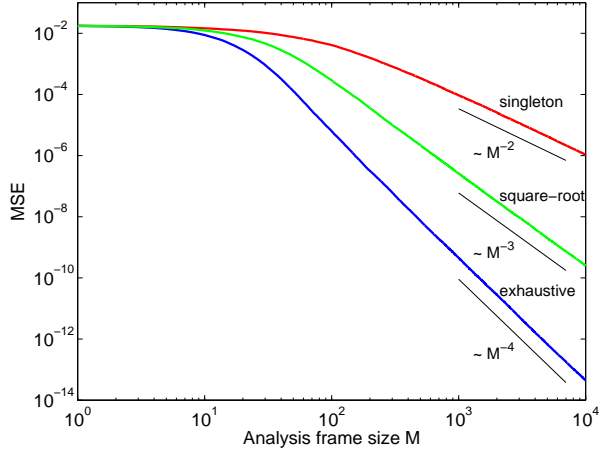}
 \end{center}
 \caption{Performance of recursive estimation for Variant I FPQ\@.
   The signal is of dimension $N=8$ and various index sets $\setJ_k$ are used.}
 \label{fig:sim-recursive}
\end{figure}

With singleton sets,
we expect to see
$\| x - \hat{x}_M \|^2 = \Theta(M^{-2})$
when $M$ is increased without bound for any fixed $N$;
Theorem~\ref{thm:recursive-rate} gives an upper bound of this order,
and related lower bound results include~\cite[Thm.~6.1]{ThaoV:96}
and~\cite[Thm.~3]{RanganG:01}.
With exhaustive sets, the total number of projections is $\half M (M-1)$,
so squared error decay with the square number of projections would give
$\| x - \hat{x}_M \|^2 = \Theta(M^{-4})$,
and we indeed see this.  Similarly, the empirical behavior is
$\| x - \hat{x}_M \|^2 = \Theta(M^{-3})$
with square-root sets.

\section{Proofs}
\label{sec:proofs}
\subsection{Proof of Proposition~\ref{prop:ETF}}\label{sec:ETF}
In order to prove Proposition~\ref{prop:ETF}, we need the following lemmas.
\begin{lemma}\label{lem:alpha}
Assume that $M=N+1$ and let $W=e^{j\Frac{2\pi}{M}}$.
Then for all $\alpha\in\R$ we have
$$
\sum_{i=1}^{N/2}W^{\Frac{\alpha (2i-1)}{2}}
  =\frac{(-1)^{\alpha}-W^{\Frac{\alpha}{2}}}{W^\alpha-1},
  \qquad \mbox{if $N$ is even};
$$
and
$$
\sum_{i=1}^{(N-1)/2}W^{\alpha i}
  =\frac{(-1)^{\alpha}-W^\alpha}{W^\alpha-1},
  \qquad \mbox{if $N$ is odd}.
$$
\end{lemma}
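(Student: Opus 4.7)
The plan is to recognize both sums as finite geometric series and apply the standard closed-form formula, using a single key identity that exploits the hypothesis $M = N+1$.

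First I would handle the even-$N$ case. Setting $W = e^{j2\pi/M}$, the sum
$\sum_{i=1}^{N/2} W^{\alpha(2i-1)/2}$ has first term $W^{\alpha/2}$, common ratio $W^{\alpha}$, and $N/2$ terms, so the geometric series formula gives
\[
\sum_{i=1}^{N/2} W^{\alpha(2i-1)/2}
 = W^{\alpha/2}\,\frac{W^{\alpha N/2} - 1}{W^{\alpha} - 1}
 = \frac{W^{\alpha(N+1)/2} - W^{\alpha/2}}{W^{\alpha} - 1}.
\]
The key identity is then $W^{\alpha(N+1)/2} = W^{\alpha M/2} = e^{j\pi\alpha} = (-1)^{\alpha}$ (taking the principal branch for real $\alpha$), which immediately yields the claimed right-hand side.

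Second I would handle the odd-$N$ case by the same recipe: $\sum_{i=1}^{(N-1)/2} W^{\alpha i}$ is geometric with first term $W^{\alpha}$, ratio $W^{\alpha}$, and $(N-1)/2$ terms, giving
\[
W^{\alpha}\,\frac{W^{\alpha(N-1)/2} - 1}{W^{\alpha} - 1}
 = \frac{W^{\alpha(N+1)/2} - W^{\alpha}}{W^{\alpha} - 1},
\]
and the same identity $W^{\alpha(N+1)/2} = (-1)^{\alpha}$ finishes the computation.

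There is essentially no obstacle here; the only subtlety worth flagging is that the formulas require $W^{\alpha} \neq 1$ (otherwise both sides are meaningless as written), which corresponds to $\alpha \notin M\mathbb{Z}$, and the interpretation of $(-1)^{\alpha}$ as $e^{j\pi\alpha}$ for non-integer $\alpha$. In the application to the zero-sum property of modulated HTFs, $\alpha$ will take integer values, so these caveats cause no difficulty.
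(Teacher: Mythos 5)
Your proposal is correct and follows essentially the same route as the paper: both proofs evaluate the sums as finite geometric series and then invoke the hypothesis $M=N+1$ through the identity $W^{\alpha M/2}=(-1)^{\alpha}$ (the paper factors out $W^{-\alpha/2}$ before summing, which is only a bookkeeping difference). Your caveats about $W^{\alpha}\neq 1$ and the meaning of $(-1)^{\alpha}$ for non-integer $\alpha$ are apt but harmless, since in the paper's application $\alpha=k-\ell$ with $1\leq|\alpha|\leq M-1$.
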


\begin{proof}
By noting that $W^{M/2}=-1$, we have the following computations.

\underline{\emph{N even}}:
\begin{eqnarray*}
\sum_{i=1}^{N/2}W^{\alpha \Frac{(2i-1)}{2}}
  &=&W^{\Frac{-\alpha}{2}}\cdot\sum_{i=1}^{N/2}W^{\alpha i}
   = W^{\Frac{-\alpha}{2}}\cdot\frac{W^{\alpha(N+2)/2}-W^\alpha}{W^\alpha-1}\\
  &=&W^{\Frac{-\alpha}{2}}\cdot\frac{\(W^{M/2}\)^{\alpha}W^{\Frac{\alpha}{2}}-W^\alpha}{W^\alpha-1}
   = \frac{(-1)^{\alpha}-W^{\Frac{\alpha}{2}}}{W^\alpha-1}.
\end{eqnarray*}

\underline{\emph{N odd}}:
\begin{eqnarray*}
\sum_{i=1}^{(N-1)/2}W^{\alpha i}
  &=&\frac{W^{\alpha(N+1)/2}-W^{\alpha}}{W^\alpha-1}
   = \frac{\(W^{M/2}\)^{\alpha}-W^\alpha}{W^\alpha-1}
   = \frac{(-1)^{\alpha}-W^\alpha}{W^\alpha-1}.
\end{eqnarray*}
\end{proof}

\begin{lemma}\label{lem:inner}
For $M=N+1$, the HTF $\Phi=\{\phi_k\}_{k=1}^{M}$ satisfies
$\ip{\phi_k}{\phi_\ell}=(-1)^{k-\ell+1}/N$, for all $1\leq k<\ell\leq M$.
\end{lemma}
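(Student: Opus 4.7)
The plan is to expand the inner product $\ip{\phi_k}{\phi_\ell}$ directly from Definition~\ref{def:HTF}, reduce it by a product-to-sum identity to a single cosine sum whose argument depends only on $\alpha := \ell - k$, and then identify that cosine sum with the real part of the complex geometric sum evaluated in Lemma~\ref{lem:alpha}. The key fact driving everything is $W^{M/2} = e^{j\pi} = -1$, which Lemma~\ref{lem:alpha} already exploits and which will inject the $(-1)^\alpha$ factor that produces the alternating sign in the conclusion.

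For even $N$, I would re-index with $j = k-1$ and $i = \ell-1$ and observe that the product-to-sum identity $\cos A \cos B + \sin A \sin B = \cos(A-B)$ collapses the $N/2$ cosine terms and the $N/2$ sine terms in~(\ref{eq:HTFeven}) into the single sum
\[
\ip{\phi_k}{\phi_\ell} \;=\; \frac{2}{N}\sum_{r=1}^{N/2}\cos\frac{(2r-1)\alpha\pi}{M}.
\]
Writing this as $\frac{2}{N}\,\mathrm{Re}\sum_{r=1}^{N/2} W^{\alpha(2r-1)/2}$ and invoking the even-$N$ case of Lemma~\ref{lem:alpha}, the task reduces to computing the real part of $((-1)^\alpha - W^{\alpha/2})/(W^\alpha-1)$. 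Setting $\theta = \pi\alpha/M$ and factoring $W^\alpha - 1 = 2j e^{j\theta}\sin\theta$, this simplifies algebraically to $\tfrac{1}{2}(-1)^{\alpha+1} + j\,(\text{real stuff})/\sin\theta$, giving the desired value $(-1)^{\alpha+1}/N$ after multiplication by $2/N$.

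For odd $N$, I would handle the leading $1/\sqrt{2}$ entries in~(\ref{eq:HTFodd}) separately; they contribute $1/N$ to the inner product. The remaining cosine and sine pairs again collapse via the product-to-sum identity to $\frac{2}{N}\sum_{r=1}^{(N-1)/2}\cos(2\pi r\alpha/M)$, and I would apply the odd-$N$ case of Lemma~\ref{lem:alpha} to its complex-exponential form $\frac{2}{N}\,\mathrm{Re}\sum_{r=1}^{(N-1)/2}W^{\alpha r}$. Taking the real part of $((-1)^\alpha - W^\alpha)/(W^\alpha-1)$ via multiplication by the conjugate of the denominator yields $-\tfrac12((-1)^\alpha+1)$, which when combined with the leading $1/N$ contribution again produces $(-1)^{\alpha+1}/N$.

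The only genuine obstacle is the bookkeeping in extracting the real part of the ratios supplied by Lemma~\ref{lem:alpha}. In both parities the answer must work out to $\tfrac{1}{2}(-1)^{\alpha+1}$ (respectively, its shifted odd-$N$ counterpart), and a uniform way to see this is to write the numerator as $(-1)^\alpha - W^{\alpha/2}$ or $(-1)^\alpha - W^\alpha$ and factor the denominator as $W^{\alpha/2}(W^{\alpha/2}-W^{-\alpha/2})$ or $W^{\alpha/2}\cdot 2j\sin(\pi\alpha/M)$ so that the imaginary and real contributions separate cleanly. Everything else is mechanical trigonometric simplification, and the final factor of $2/N$ from the frame normalization converts $\tfrac{1}{2}(-1)^{\alpha+1}$ into the claimed $(-1)^{k-\ell+1}/N$ since $(-1)^{\alpha+1} = (-1)^{\ell-k+1} = (-1)^{k-\ell+1}$.
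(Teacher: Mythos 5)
Your proposal is correct and takes essentially the same route as the paper: both reduce $\ip{\phi_k}{\phi_\ell}$ to the geometric sums evaluated in Lemma~\ref{lem:alpha} and let $W^{M/2}=-1$ supply the alternating sign. The only cosmetic difference is that you collapse the coordinates to a single cosine sum and extract the real part of one application of Lemma~\ref{lem:alpha}, whereas the paper keeps the conjugate pair of exponential sums and adds them---the same computation in different packaging.
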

\begin{proof}
Consider the two following cases.

\underline{\emph{N even}}: Using Euler's formula, for $k=0,\,1,\,\ldots,\,M-1$,
$\phi_{k+1}^{*}$ can be rewritten as
\begin{eqnarray*}
\sqrt{\frac{2}{N}}\left[\frac{W^{k}+W^{-k}}{2},\:
                        \frac{W^{3k}+W^{-3k}}{2},\:
                        \ldots,\:
                        \frac{W^{(N-1)k}+W^{-(N-1)k}}{2},\qquad
                  \right.\\
                  \left.\qquad\frac{W^{k}-W^{-k}}{2j},\:
                        \frac{W^{3k}-W^{-3k}}{2j},\:
                        \ldots,\:
                        \frac{W^{(N-1)k}-W^{-(N-1)k}}{2j}\right].
\end{eqnarray*}
For $1\leq k<\ell\leq M$, let $\alpha = k-\ell$.
After some algebraic manipulations we can obtain
\begin{eqnarray}
N\cdot\ip{\phi_k}{\phi_\ell}&=&\sum_{i=1}^{N/2}W^{\Frac{(k-\ell) (2i-1)}{2}}+\sum_{i=1}^{N/2}W^{\Frac{(\ell-k) (2i-1)}{2}}\nonumber\\                                    &=&\frac{(-1)^{\alpha}-W^{\Frac{\alpha}{2}}}{W^\alpha-1}+\frac{(-1)^{-\alpha}-W^{\Frac{-\alpha}{2}}}{W^{-\alpha}-1}\label{eq:alpha}\\
&=&\frac{(-1)^{\alpha}W^{-\alpha/2}-1}{W^{\alpha/2}-W^{-\alpha/2}}-\frac{(-1)^{\alpha}W^{\alpha/2}-1}{W^{\alpha/2}-W^{-\alpha/2}}\nonumber\\
&=&(-1)^{\alpha+1}\nonumber,
\end{eqnarray}
where (\ref{eq:alpha}) is obtained using Lemma~\ref{lem:alpha}.

\underline{\emph{N odd}}: Similarly, for $1\leq k<\ell\leq M$ and $\alpha=k-\ell$, we have
\begin{eqnarray}
N\cdot\ip{\phi_k}{\phi_\ell}&=&1+\sum_{i=1}^{(N-1)/2}W^{(k-\ell)i}+\sum_{i=1}^{(N-1)/2}W^{(\ell-k)i}\nonumber\\
&=&1+\frac{(-1)^{\alpha}-W^\alpha}{W^\alpha-1}+\frac{(-1)^{-\alpha}-W^{-\alpha}}{W^{-\alpha}-1}\label{eq:alphaOdd}\\
&=&1+\frac{(-1)^{\alpha}W^{-\alpha/2}-W^{\alpha/2}}{W^{\alpha/2}-W^{-\alpha/2}}-\frac{(-1)^{\alpha}W^{\alpha/2}-W^{-\alpha/2}}{W^{\alpha/2}-W^{-\alpha/2}}\nonumber\\
&=&1-(-1)^{\alpha}-1\nonumber\\
&=&(-1)^{\alpha+1}\nonumber,
\end{eqnarray}
where (\ref{eq:alphaOdd}) is due to Lemma~\ref{lem:alpha}.
\end{proof}

\begin{proof}[Proposition~\ref{prop:ETF}]
For a modulated HTF $\Psi=\{\psi_k\}_{k=1}^M$,
as defined in Definition~\ref{def:HTF},
for all $1 \leq k < \ell \leq M$ we have
\begin{eqnarray}
\ip{\psi_k}{\psi_\ell}
   &=&\ip{\gamma(-1)^k\phi_k}{\gamma(-1)^{\ell}\phi_{\ell}}\nonumber\\
   &=&\gamma^2(-1)^{k+\ell}{(-1)^{k-\ell+1}/N}\label{eq:product}\\
   &=&(-1)^{k+\ell}{(-1)^{k-\ell+1}/N}\label{eq:gamma}\\
   &=&-1/N, \label{eq:inner2}
\end{eqnarray}
where (\ref{eq:product}) is due to Lemma~\ref{lem:inner}; and (\ref{eq:gamma}) is true because $|\gamma|=1$ for all $1 \leq k < \ell \leq M$.
Since the inner product is preserved through an orthogonal mapping,
(\ref{eq:inner2}) is true for Type~I and/or Type~II equivalences of
modulated HTFs as well. The tightness and unit norm of the HTF
are obviously preserved for Type~I and/or Type~II equivalences.
Therefore, the modulated HTFs and their equivalences of
Type~I and/or Type~II are all restricted ETFs.

Conversely, from Proposition~\ref{prop:equiv},
every restricted ETF $\Psi=\{\psi\}_{k=1}^M$
can be represented up to Type~I and Type~II equivalences as follows:
$$
\psi_k=\delta(k)\phi_k, \qquad \mbox{for all $1\leq k \leq M$},
$$
where $\delta(k)=\pm 1$ is some sign function on $k$.
Thus, the constraint $\ip{\psi_k}{\psi_\ell} = \c$ for some constant $\c$
of a restricted ETF is equivalent to
\begin{eqnarray*}
\c N&=&N\delta(k)\delta(\ell) \cdot \ip{\phi_k}{\phi_\ell} \\
    &=&\delta(k)\delta(\ell)(-1)^{k-\ell+1},
         \quad \mbox{for all $1\leq k<\ell\leq M$}.
\end{eqnarray*}
Therefore, $\delta(k)\delta(\ell)(-1)^{k-\ell}$ is constant for all
$1\leq k<\ell\leq M$.  If we fix $k$ and vary $\ell$, it is clear that the sign of
$\delta(\ell)$ must be alternatingly changed.
Thus, $\Psi$ is one of the two HTFs specified in the proposition,
completing the proof.
\end{proof}

\subsection{Proof of Theorem~\ref{thm:necessary}}
\label{app:necessary}
The following lemmas are all stated for Variant II initial codewords.
They are somewhat stronger than what we need for the proof of
Theorem~\ref{thm:necessary} because a Variant II initial codeword
is automatically a Variant~I initial codeword. However,
these lemmas will be reused to prove Theorem~\ref{thm:varIInec}
in Section~\ref{app:VarIInec}.

For convenience, if $\{i_1,\ldots i_k\}$ is a subset of $\{1,2,\ldots,M\}$
and $\sigma$ is a permutation on that subset, we simply write
\[
\O=
\begin{pmatrix}
    i_1 & i_2 & \cdots i_k\\
    \sigma(i_1) & \sigma(i_2) & \cdots \sigma(i_k)
\end{pmatrix}
\]
if $\O y$ maps $y_{i_\ell}$ to $y_{\sigma(i_\ell)}$, $1\leq \ell\leq k$,
and fixes all the other components of vector $y$.
This notation with round brackets should not be confused with a matrix,
for which we always use square brackets.

Proofs of the lemmas rely heavily on the key observation that the operator
$\O (\cdot)\,\O^{-1}$ first permutes the columns of the original matrix,
then permutes the rows of the resulting matrix in the same manner.
\begin{lemma}\label{lm:column1}
Assume that $M\geq 3$. If the entries of matrix $A$ satisfy
$a_{k,1}\neq a_{\ell,1}$ for some $1<k<\ell$,
then there exists a pair $\left(\O,\yinit\right)$,
where $\O$ is a permutation matrix and $\yinit$ is a
Variant II initial codeword compatible with some composition,
such that the inequality (\ref{eq:keyIneq}) is violated.
\end{lemma}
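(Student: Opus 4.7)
The strategy is to exhibit a composition $m$, a compatible Variant~II initial codeword $\yinit$, and a permutation $\O$ that force $(\O A\O^{-1}\yinit)_{2} < (\O A\O^{-1}\yinit)_{3}$, thereby producing a strictly negative coordinate in $\minus{}{m}\O A\O^{-1}\yinit$. Without loss of generality I take $a_{k,1} > a_{\ell,1}$ (otherwise the same construction works after interchanging the labels of $k$ and $\ell$).

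First I choose $m = (1,1,M-2)$, which is a valid composition because $M \geq 3$, together with $\yinit = (\mu_1,\mu_2,0,\ldots,0)^{T}$ satisfying $\mu_1 > \mu_2 > 0 = \mu_3$. This is a legitimate Variant~II initial codeword compatible with $m$. For the permutation $\pi$ I set $\pi(1)=1$, $\pi(\ell)=2$, and $\pi(k)=3$, extending arbitrarily to a full permutation on $\{1,\ldots,M\}$; this is possible because $k\notin\{1,\ell\}$ so the three specified assignments are consistent and the remaining $M-3$ domain values can be biject to the remaining $M-3$ targets. Let $\O$ be the matrix of $\pi$.

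Using the identity $(\O A \O^{-1})_{i,j} = a_{\pi^{-1}(i),\pi^{-1}(j)}$, a direct computation yields
$$
  (\O A \O^{-1}\yinit)_2 = \mu_1 a_{\ell,1} + \mu_2 a_{\ell,\ell},\qquad
  (\O A \O^{-1}\yinit)_3 = \mu_1 a_{k,1} + \mu_2 a_{k,\ell}.
$$
The $m$-descending constraint encoded in $\minus{}{m}$ includes the demand that position $2$ dominates position $3$, which rearranges to
$$
  \mu_2\,(a_{\ell,\ell}-a_{k,\ell}) \ \geq\ \mu_1\,(a_{k,1}-a_{\ell,1}).
$$
The right-hand side is a strictly positive quantity independent of $\mu_2$, while the left-hand side is bounded in absolute value by $\mu_2\,|a_{\ell,\ell}-a_{k,\ell}|$. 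Hence choosing $\mu_2$ small enough (but still positive and less than $\mu_1$) violates this inequality, delivering the desired counterexample to (\ref{eq:keyIneq}).

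The only real obstacle is finding the right pair $(m,\yinit)$: the composition needs at least three blocks, so that some position in the second block can be required to dominate some position in the third; and $\yinit$ needs exactly two nonzero entries, so that $\O A\O^{-1}\yinit$ is a two-column combination from $A$ with one of those columns steerable to column~$1$. The choice $(1,1,M-2)$ with $\mu_3=0$ is the minimal configuration achieving both, and once it is in hand the verification is just routine bookkeeping through the simultaneous row--column permutation $\O(\cdot)\O^{-1}$.
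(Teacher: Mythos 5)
Your proof is correct and follows essentially the same strategy as the paper's: pick a composition, a compatible Variant II initial codeword, and a permutation so that a single row of the $m$-descending constraint reduces to a comparison of $a_{k,1}$ against $a_{\ell,1}$, then tune the codeword entries to violate it. The differences are only cosmetic --- the paper takes the composition $(1,1,\ldots,1)$ with the identity permutation and lets $\mu_1 \to \infty$, handling the opposite sign of $a_{k,1}-a_{\ell,1}$ by conjugating with the transposition of $k$ and $\ell$, whereas you relabel $k$ and $\ell$, steer rows $k,\ell$ to positions $3,2$ by a permutation fixing index $1$, use the sparse codeword $(\mu_1,\mu_2,0,\ldots,0)$ compatible with $(1,1,M-2)$, and shrink $\mu_2$ instead.
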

\begin{proof}
Consider the two following cases.

\underline{\emph{Case 1}}:
If $a_{k,1}< a_{\ell,1}$, choose $\O=\id_M$, and
$\yinit = (\mu_1,\mu_2,\ldots,\mu_M)$.
Consider the following difference:
\begin{eqnarray*}
\Delta_{k,\ell}
  &=&\ip{(a_{k,j})_j}{\yinit}-\ip{(a_{\ell,j})_j}{{\yinit}}\\
  &=&\sum_{j=1}^M {a_{k,j}\mu_j}-\sum_{j=1}^M {a_{\ell,j}\mu_j}\\
  &=&(a_{k,1}-a_{\ell,1})\mu_1+\left(\sum_{j=2}^M {a_{k,j}\mu_j}-\sum_{j=2}^M {a_{\ell,j}\mu_j}\right)
\end{eqnarray*}
Fix $\mu_2>\mu_3>\cdots>\mu_M\geq 0$ and let $\mu_1$ go to $+\infty$.
Since $a_{k,1} < a_{\ell,1}$,
$\Delta_{k,\ell}$ will go to $-\infty$.
Thus, there exist $\mu_1>\mu_2>\ldots>\mu_M\geq 0$
such that $\Delta_{k,\ell}<0$.
On the other hand, for $m=(1,1,\ldots,1)$,
inequality (\ref{eq:keyIneq}) requires that
$\Delta_{k,\ell}\geq 0$ for all $k<\ell$.
Therefore the chosen pair violates inequality (\ref{eq:keyIneq}).

\underline{\emph{Case 2}}: If $a_{k,1}>a_{\ell,1}$,
choose $\O=\left(\begin{array}{cc}k& \ell\\ \ell & k\end{array}\right)$.
Since $k,\ell\neq 1$, the entries of matrix $A' = \O A\,\O^{-1} $
will satisfy $a'_{k,1}< a'_{\ell,1}$.
We return to the first case, completing the proof.
\end{proof}

\begin{lemma}\label{lm:columnj}
Assume that $M\geq 3$. If the entries of matrix $A$ satisfy
$a_{k,j}\neq a_{\ell,j}$, for any pairwise distinct triple $(k,j,\ell)$,
then there exists a pair $\left(\O,\yinit\right)$,
where $\O$ is a permutation matrix and $\yinit$
is a Variant II initial codeword compatible with some composition,
such that the inequality (\ref{eq:keyIneq}) is violated.
\end{lemma}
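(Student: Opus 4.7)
The natural strategy is to reduce to Lemma~\ref{lm:column1} by conjugating $A$ with a permutation matrix that moves the discrepant pair $a_{k,j},a_{\ell,j}$ from column $j$ into column $1$, while keeping both of their row indices at least $2$.

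First I would dispose of the trivial subcase $j=1$: the hypothesis then coincides with that of Lemma~\ref{lm:column1} (since $k$ and $\ell$, being distinct from $j=1$, both lie in $\{2,\ldots,M\}$, and we may relabel so that $k<\ell$), so Lemma~\ref{lm:column1} directly supplies the desired pair $(\O,\yinit)$.

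For the main case $j\neq 1$, let $\sigma$ be the transposition $(1\ j)$ and $\Pi$ its permutation matrix, and set $A'=\Pi A\Pi^{-1}$. A direct computation gives $(A')_{r,s}=A_{\sigma(r),\sigma(s)}$, so column~$1$ of $A'$ is a reshuffle of column~$j$ of $A$: the entry $A_{1,j}$ moves to row~$j$ of $A'$, while $A_{r,j}$ for $r\notin\{1,j\}$ remains at row~$r$. A brief case analysis on whether $1\in\{k,\ell\}$ (noting that $k,j,\ell$ are pairwise distinct) shows that in each case the two unequal values $a_{k,j}$ and $a_{\ell,j}$ appear in column~$1$ of $A'$ at two row indices that are both at least~$2$ and distinct. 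Hence, after possibly relabeling them as $1<k'<\ell'$, the matrix $A'$ satisfies the hypothesis of Lemma~\ref{lm:column1}.

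Applying Lemma~\ref{lm:column1} to $A'$ yields a permutation matrix $\O'$ and a Variant~II initial codeword $\yinit$ (compatible with some composition $m$) for which
\[
\minus{}{m}\,\O'\,A'\,(\O')^{-1}\,\yinit \;\not\geq\; 0.
\]
Setting $\O=\O'\Pi$, we have $\O A\O^{-1}=\O'\Pi A\Pi^{-1}(\O')^{-1}=\O'A'(\O')^{-1}$, so the pair $(\O,\yinit)$ violates (\ref{eq:keyIneq}), as required. The only real subtlety is the index bookkeeping in the conjugation step, i.e.\ verifying that $\sigma=(1\ j)$ always lands the two differing entries at rows $\geq 2$ of the first column; this is handled by the short case analysis indicated above, after which the result is a clean reduction to the already-proved column-$1$ case.
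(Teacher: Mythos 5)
Your proposal is correct and follows essentially the same route as the paper's proof: conjugate $A$ by the transposition $(1\ j)$ to move the differing entries into column~1 at rows $\geq 2$ (with the same case split on whether $1\in\{k,\ell\}$), invoke Lemma~\ref{lm:column1}, and compose the two permutations. No gaps.
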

\begin{proof}
We first show that there exists some permutation matrix $\O_1$ such that
$\tilde{A} = \O_1\, A\, \O_1^{-1}$ satisfies the hypothesis of
Lemma~\ref{lm:column1}. Indeed, consider the following cases:
\begin{enumerate}
\item If $j=1$, it is obvious to choose $\O_1=\id_M$.
\item If $j>1$ and $k>1$, choosing
         $\O_1=\left(\begin{array}{cc}1& j\\
                                     j & 1\end{array}\right)$
      yields $\tilde{a}_{k,1}=a_{k,j}\neq a_{\ell,j}=\tilde{a}_{\ell,1}$,
      since $k,\ell\not \in \{1,j\}$.
\item If $j>1$ and $k=1$, choosing
         $\O_1=\left(\begin{array}{cc}1& j\\
                                     j & 1\end{array}\right)$
      yields $\tilde{a}_{j,1}=a_{k,j}\neq a_{\ell,j}=\tilde{a}_{\ell,1}$,
      since $k=1$, and $\ell\not \in \{1,j\}$. Note that in this case,
      $j\neq 1$, and so $\tilde{A}$ satisfies the hypothesis of Lemma~\ref{lm:column1}.
\end{enumerate}
Now with $\O_1$ chosen as above, according to Lemma~\ref{lm:column1}
there exists a pair $\(\O_2,\yinit\)$,
where $\O$ is a permutation matrix and $\yinit$ is a
Variant II initial codeword compatible with some composition, such that
\begin{eqnarray*}
\mathbf{0}&\not\leq & \minus{}{m}\O_2\, \tilde{A}\,\O_2^{-1}\,\yinit\\
   &=& \minus{}{m} \O_2\(\O_1\,A\,\O_1^{-1}\)\O_2^{-1}\,\yinit \\
   &=& \minus{}{m} \O A\,\O^{-1}\,\yinit ,
\end{eqnarray*}
where $\O\defeq \O_2 \O_1$.
Since the product of any two permutation matrices is also a permutation matrix,
the pair $\(\O,\yinit\)$ violates the inequality (\ref{eq:keyIneq}).
\end{proof}

\begin{lemma}\label{lm:diagonal}
Suppose that $A$ is a diagonal matrix.
Then the inequality (\ref{eq:keyIneq}) holds for every composition and
every Variant II initial codeword compatible with it,
only if $A$ is equal to the identity matrix scaled by a nonnegative factor.
\end{lemma}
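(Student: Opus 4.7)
The plan is to specialize to the finest composition, $m = (1,1,\ldots,1)$, for which $\mathcal{G}(m)$ consists of all $M \times M$ permutation matrices (by (\ref{eq:Gm-size})), and a Variant~II initial codeword is any strictly decreasing nonnegative sequence $\mu_1 > \mu_2 > \cdots > \mu_M \geq 0$. Writing $A = \diag(a_1, a_2, \ldots, a_M)$, the key observation is that conjugation of a diagonal matrix by a permutation matrix is again diagonal, $PAP^{-1} = \diag(a_{\pi(1)}, \ldots, a_{\pi(M)})$, where the permutation $\pi$ is determined by $P$ and ranges over the full symmetric group on $\{1,2,\ldots,M\}$ as $P$ varies over $\mathcal{G}(m)$.

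For this composition, each row of $\minus{}{m}$ has the form $e_i^T - e_{i+1}^T$, so (\ref{eq:keyIneq}) amounts to the $M-1$ scalar inequalities $a_{\pi(i)} \mu_i \geq a_{\pi(i+1)} \mu_{i+1}$, $i=1,\ldots,M-1$, and a single negative entry in $\minus{}{m}PAP^{-1}\yinit$ is enough to defeat the hypothesis. I would argue by contradiction: assume there is a pair $i \neq j$ with $a_i < a_j$, and select $P$ so that $\pi(1) = i$, $\pi(2) = j$, together with a codeword satisfying $\mu_2 > 0$ and $\mu_1 = \mu_2 + \varepsilon$ for small $\varepsilon > 0$ (the remaining $\mu_3 > \cdots > \mu_M \geq 0$ chosen arbitrarily to respect strict decrease). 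The first scalar inequality then becomes
\[
a_i(\mu_2 + \varepsilon) - a_j \mu_2 = (a_i - a_j)\mu_2 + a_i \varepsilon,
\]
which is strictly negative for all sufficiently small $\varepsilon$, contradicting (\ref{eq:keyIneq}). Hence all diagonal entries coincide and $A = a\id_M$.

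It remains to verify $a \geq 0$. With $A = a\id_M$, taking $P = \id_M$, the same composition $m = (1,1,\ldots,1)$, and any Variant~II initial codeword yields $\minus{}{m}\, a\yinit = a(\mu_1 - \mu_2, \ldots, \mu_{M-1} - \mu_M)^T$, whose first coordinate $a(\mu_1 - \mu_2)$ with $\mu_1 > \mu_2$ forces $a \geq 0$. I do not anticipate any serious technical obstacle; the only delicate point is the bookkeeping that conjugating a diagonal matrix by $P$ produces the appropriately permuted diagonal, and that the freedom in choosing $P \in \mathcal{G}((1,\ldots,1))$ realizes \emph{every} permutation of the $a_i$'s in the resulting diagonal. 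The latter is precisely what allows an arbitrary pair $(a_i, a_j)$ to be placed in the top-left two positions and thereby picked off by the first row of $\minus{}{m}$.
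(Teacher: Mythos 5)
Your proof is correct and follows essentially the same route as the paper's: specialize to the finest composition $(1,1,\ldots,1)$, exploit that conjugating the diagonal matrix by a permutation just permutes its diagonal, and perturb the initial codeword to force a negative entry in $\minus{}{m}P A P^{-1}\yinit$. The only differences are cosmetic---you prove equality of the diagonal entries first (placing the offending pair in positions 1 and 2 and taking $\mu_1=\mu_2+\varepsilon$ small) and then get $a\geq 0$ trivially, whereas the paper establishes nonnegativity first and then equality via a large choice of $\mu_k$.
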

\begin{proof}
Suppose that $A=\diag(a_1,a_2,\ldots,a_M)$.
We first show that $a_i \geq 0$ for every $i$ by contradiction.

If $a_1<0$, we can choose $\O=\id_M$ and $\mu_1>\mu_2>\ldots>\mu_M\geq 0$,
where $\mu_1$ is large enough relative to $\mu_2,\ldots,\mu_M$
to violate inequality (\ref{eq:keyIneq}).

If $a_j<0$ for some $1<j\leq M$, using
$\O=\left(\begin{array}{cc}1& j\\ j &1 \end{array}\right)$
yields $a'_{1}=a_{j}<0$, where $a'_{1}$
is the first entry on the diagonal of matrix $A'\defeq \O A\,\O^{-1}$.
Repeating the previous argument, we get the contradiction.

Now we show that if $a_k\neq a_{\ell}$ for some $1\leq k < \ell\leq M$,
there exists a pair $\left(\O,\yinit\right)$,
where $\O$ is a permutation matrix and $\yinit$
is a Variant II initial codeword compatible with some composition,
such that inequality (\ref{eq:keyIneq}) is violated.

\underline{\emph{Case 1}}: if $a_k<a_{\ell}$, choose $\O=\id_M$
and consider $\yinit=(\mu_1,\mu_2,\ldots,\mu_M)$,
where $\mu_{\ell}=\mu_{k}-\varepsilon$ for some positive $\varepsilon$.
Choose $\mu_{k}$ such that
\begin{equation}
\mu_{k}>\frac{\varepsilon a_{\ell}}{a_{\ell}-a_k}\geq 0.
\end{equation}
On the other hand, we can choose $\varepsilon$ small enough so that $\mu_\ell$
is positive as well. The other components can therefore be chosen to make
$\yinit$ a Variant II initial codeword compatible with composition
$m=(1,1,\ldots,1)$. For the above choice of $\mu_{k}$ we can easily check that
$\Delta_{k,\ell}=a_k\mu_k-a_{\ell}\mu_{\ell}<0$,
violating inequality (\ref{eq:keyIneq}).

\underline{\emph{Case 2}}: if $a_k>a_{\ell}$, choosing
$\O=\left(\begin{array}{cc}k& \ell\\  \ell &k \end{array}\right)$
yields $\O A\,\O^{-1}=\diag(a_1,a_2,\ldots,a_{\ell},\ldots,a_k,\dots,a_M)$. 

We return to case 1, completing the proof.
\end{proof}
\begin{proof}[Theorem~\ref{thm:necessary}]
First note that a Variant II initial codeword is always a Variant I initial codeword,
therefore, Lemmas~\ref{lm:column1},~\ref{lm:columnj},
and~\ref{lm:diagonal} also apply for Variant I initial codewords.
From Lemma~\ref{lm:columnj}, all entries on each column of matrix $A$
are constant except for the one that lies on the diagonal.
Thus, $A$ can be written as $A=\tilde{\id}+\ccol$,
where $\tilde{\id}=\diag(a_1,a_2,\ldots,a_M)$, and
\[
\ccol=\begin{bmatrix}
                        b_1& b_2 & \cdots &b_M\\
                        b_1 & b_2& \cdots& b_M\\
                        \vdots& \vdots & & \vdots\\
                        b_1  & b_2 & \cdots& b_M
                    \end{bmatrix}\in \mathcal{J}.
\]
Recall that from properties (P1) and (P2) of $\ccol$ we have
$$
\minus{}{m}  \O\ccol\O^{-1}=\mathbf{0}, \qquad \mbox{for any $m$}.
$$
Hence,
\begin{equation}
\minus{}{m}  \O\tilde{\id}\O^{-1}\,\yinit \geq\mathbf{0},
\qquad \mbox{for any $m$ and any $\yinit$}.\label{eq:1M}
\end{equation}
From (\ref{eq:1M}) and Lemma~\ref{lm:diagonal},
we can deduce that $\tilde{\id} =\c\id_M $, for some nonnegative constant $\c$.
\end{proof}

\subsection{Proof of Theorem~\ref{thm:varIInec}}
\label{app:VarIInec}
In order for $R$ to produce consistent reconstructions,
we need the following inequality (noting that $V=V^{-1}$ for any $V\in\mathcal{Q}(m)$):
\begin{equation}
\Minus{}{m} V\O A\,\O^{-1}V\yinit \geq \mathbf{0},
\qquad \mbox{for any $V\in\mathcal{Q}(m)$ and $\O\in\mathcal{G}(m)$},
\label{eq:keyIneq2}
\end{equation}
where $A=FR$.
We first fix the sign-changing matrix $V$ to be the identity matrix $\id_M$.
Then the first $L(m)$ rows of (\ref{eq:keyIneq2})
exactly form the inequality (\ref{eq:keyIneq}).
Since Lemmas~\ref{lm:column1},~\ref{lm:columnj}, and~\ref{lm:diagonal}
are stated for Variant II initial codewords,
it follows from Theorem~\ref{thm:necessary} that $A$
must be of the form $\c\id_M+J$, where $\c\geq 0$ and $J \in \mathcal{J}$.
Substituting in to (\ref{eq:keyIneq2}), we obtain
\begin{equation}
\c \Minus{}{m} \yinit + \Minus{}{m} V\O J\O^{-1} V\yinit \geq\mathbf{0}.\label{eq:keyIneq3}
\end{equation}
Now we show that $J=\mathbf{0}$ by contradiction.
Suppose all entries in column $i$ of $J$ are $b_i$, for $1\leq i\leq M$.
Consider the following cases:
\begin{enumerate}
\item If $b_1$ is negative, choose $V=\O=\id_M$ and
    $\yinit=\(\mu_1,\mu_2,\ldots,\mu_M\)$ compatible with composition
    $m=\left(1,1,\ldots,1\right)$.
    Consider the last row of inequality (\ref{eq:keyIneq3}):
    \begin{equation}
    b_1\mu_1+\c\mu_{M-1}+\sum_{i=2}^M b_i\mu_i\geq 0.\label{eq:derivedIneq}
    \end{equation}
    Since $M\geq 3$, $M-1\neq 1$. Therefore the scale associated with $\mu_1$
    in the left hand side of inequality (\ref{eq:derivedIneq}) is $b_1<0$.
    Hence, choosing $\mu_1$ large enough certainly breaks inequality
    (\ref{eq:derivedIneq}), and therefore violates inequality (\ref{eq:keyIneq3}).
\item If $b_1$ is positive, choosing $\O=\id_M$, $V=\mbox{diag}(-1,1,1,\ldots,1)$
    makes the first entry of the $(M-1)$th row of matrix $V\O J\O^{-1} V$ negative
    (note that $M-1\neq 1$ and the operator $V(\cdot)V$ first changes the signs of
    columns of the original matrix and then changes the signs of rows of the
    resulting matrix in the same manner).
    Repeating the argument in the first case we can break the last row of
    inequality (\ref{eq:keyIneq3}) by appropriate choice of $\yinit$.
\item  If column $\ell$ of $J$, $1<\ell\leq M$, is different from zero,
    choosing $\O=\left(\begin{array}{cc}1& \ell\\ \ell & 1\end{array}\right)$
    leads us to either case 1 or case 2.
\end{enumerate}
Hence,
\begin{equation}
A=FR=\c\id_M.\label{eq:biorth}
\end{equation}
Equality (\ref{eq:biorth}) states that the row vectors of $F$
and the column vectors of $R$ form a biorthogonal basis pair of $\R^N$
within a nonnegative scale factor.  Since the number of vectors in each
basis cannot exceed the dimension of the space, we can deduce $M\leq N$.
On the other hand, $M\geq N$ because $F$ is a frame. Thus, $M=N$.

\section*{Acknowledgments}
The authors thank S. Rangan for fruitful discussions and the anonymous
reviewers for valuable comments that led to improvement of this paper.

\bibliographystyle{elsarticle-num}
\bibliography{abrv,conf_abrv,hqn_lib,permutation}

\end{document}